\definecolor{mygray}{gray}{.9}
\newtheorem{theorem}{Theorem}
\newtheorem{lemma}{Lemma}
\newtheorem{prop}{Proposition}
\newtheorem{defn}{Definition}
\newtheorem{remark}{Remark}
\definecolor{applegreen}{rgb}{0.55, 0.71, 0.0}
\newcommand{\Mod}[1]{\ (\mathrm{mod}\ #1)}
\newcommand{\KL}{\mathrm{KL}}
\DeclareMathOperator*{\argmin}{arg\,min}
\newcommand{\blue}[1]{{\color{black}#1}}
\begin{document}
%
\title{Relaxation-Free Min-$k$-Partition\\for PCI Assignment in 5G Networks}
\author{
	Yeqing Qiu, Chengpiao Huang, Ye Xue, Zhipeng Jiang, Qingjiang Shi, Dong Zhang, and Zhi-Quan Luo
        \thanks{The work was supported in part by the National Key Research and Development Program of China under Grant 2022YFA1003900, the National Natural Science Foundation of China under Grant 62301334, and Guangdong Major Project of Basic and Applied Basic Research under Grant 2023B0303000001.\\
        {\em (Corresponding author: Ye Xue, Qingjiang Shi)}
        \\
        Yeqing Qiu is with Shenzhen Research Institute of Big Data, and the School of Science and Engineering, The Chinese University of Hong Kong, Shenzhen, China (email: yeqingqiu@link.cuhk.edu.cn). 

        Chengpiao Huang is with the Department of Industrial Engineering and Operations Research, Columbia University, New York, USA (email: chengpiao.huang@columbia.edu).
        
        Ye Xue is with Shenzhen Research Institute of Big Data, and School of Data Science, The Chinese University of Hong Kong, Shenzhen, China (email: xueye@cuhk.edu.cn).

        Zhipeng Jiang is with the School of Mathematical Sciences, University of Chinese Academy of Sciences, Beijing, China (email: jiangzhipeng@ucas.ac.cn).
        
        Qingjiang Shi is with the School of Software Engineering, Tongji University, Shanghai, China, and Shenzhen Research Institute of Big Data, Chinese University of Hong Kong, Shenzhen, China (email: shiqj@tongji.edu.cn).

        Dong Zhang is with Huawei Technologies, Shenzhen, China (email: zhangdong48@huawei.com)

        Zhi-Quan Luo is with The Chinese University of Hong Kong, Shenzhen, China, and Shenzhen Research Institute of Big Data, Shenzhen, China (email: luozq@cuhk.edu.cn).
        }
	}%

\markboth{Journal of \LaTeX\ Class Files,~Vol.~14, No.~8, August~2015}%
{Shell \MakeLowercase{\textit{et al.}}: Bare Demo of IEEEtran.cls for IEEE Journals}



\maketitle


{

\begin{abstract}

Physical Cell Identity (PCI) is a critical parameter in 5G networks. Efficient and accurate PCI assignment is essential for mitigating mod-3 interference, mod-30 interference, collisions, and confusions among cells, which directly affect network reliability and user experience. In this paper, we propose a novel framework for PCI assignment by decomposing the problem into Min-3-Partition, Min-10-Partition, and a graph coloring problem, leveraging the Chinese Remainder Theorem (CRT). Furthermore, we develop a relaxation-free approach to the general Min-$k$-Partition problem by reformulating it as a quadratic program with a norm-equality constraint and solving it using a penalized mirror descent (PMD) algorithm. The proposed method demonstrates superior computational efficiency and scalability, significantly reducing interference while eliminating collisions and confusions in large-scale 5G networks. Numerical evaluations on real-world datasets show that our approach reduces computational time by up to 20 times compared to state-of-the-art methods, making it highly practical for real-time PCI optimization in large-scale networks. These results highlight the potential of our method to improve network performance and reduce deployment costs in modern 5G systems.

\end{abstract}

}

\begin{IEEEkeywords}
Network Optimization, PCI Assignment, Graph Partition, Relaxation-Free, Mirror Descent
\end{IEEEkeywords}

%
\IEEEpeerreviewmaketitle

\section{Introduction\label{intro}}

The rollout of the fifth-generation mobile networks has shown considerable promise to revolutionize the telecommunications industry, offering unprecedented speed, minimal latency, and extensive connectivity~\cite{heath2014signal, shafi20175g, gupta2015survey}. Despite the initial excitement surrounding the deployment of 5G in 2019~\cite{liu20205g}, certain regions have witnessed a decline in 5G speeds and user satisfaction \cite{pierucci2015quality}. This decline is attributed to the fact that the real-world wireless network performance depends not only on hardware capabilities, such as base stations and mobile devices but also on the configuration of network parameters \cite{luo2023srcon, liu2024}. Therefore, to improve 5G network performance, it is crucial to optimize key system parameters to match the local radio propagation environment and user distribution.

In this paper, we study the configuration of an important 5G network parameter known as the \emph{Physical Cell Identity} (PCI). It serves as a distinctive identifier for cells and enables user equipment (UE) to search for cells to connect to, identify neighbor cells for cell re-selection, and facilitate cell-to-cell handovers. A good PCI configuration can effectively reduce co-frequency interference among cells \citep{gui2018pci2}, and lead to better communication quality and user experience.

Specifically, the \emph{PCI assignment} problem involves allocating a PCI value to each cell in a 5G network. According to the 5G standard, there are $1008$ PCI values in total, which are integers from $0$ to $1007$ \citep{seetharaman2020programmable}.
Since there are typically thousands of cells in a 5G network, multiple cells may share the same PCI value.
There are multiple objectives and constraints in the 5G PCI assignment problem: minimizing mod-$3$ interference and mod-$30$ interference while eliminating collisions and confusions. 
Roughly speaking, interference, collisions, and confusions arise when certain pairs of cells share the same PCIs, or their PCIs have the same mod value.
Mathematically, PCI assignment is an NP-hard constrained combinatorial optimization problem~\cite{yang2014interference}, which becomes increasingly challenging to solve as the density of base stations increases in the continuing build-out of 5G mobile communication networks. Specifically, the solution space size is in the order of $1008^N$, where $N$ is the number of cells in the 5G network and is usually of the scale $10^5\sim 10^6$ in practice.
The prohibitively large problem size calls for the design of efficient PCI assignment algorithms.

\subsection{Related Work \label{relatedwork}}

Existing methods for PCI assignment \blue{can be mainly divided into satisfiability methods and optimization methods.}

\subsubsection{\blue{Satisfiability Methods}}

\blue{Satisfiability methods mainly focus on eliminating collisions and confusions between cells.}
\blue{To this end, they} model the wireless network as a graph, \blue{where} each \blue{vertex is} a cell and each \blue{edge} representing a collision or a confusion, and \blue{then perform graph coloring.} Bandh et al.~\cite{bandh2009graph} \blue{developed} a greedy graph coloring (GGC) approach. 
\blue{Based on the techniques from GGC,} Pratap et al.~\cite{pratap2016randomized} proposed a randomized graph coloring approach \blue{with improved computational efficiency}. However, when it comes to minimizing mod-$3$ and mod-$30$ interference, graph coloring methods are not directly applicable due to the presence of mod-$3$ and mod-$30$ operations.

\blue{
\subsubsection{Optimization Methods}

To further address mod-$3$ and mod-$30$ interference,} some heuristic algorithms have been proposed, primarily based on genetic algorithms. 
Li et al.~\cite{panxing2016pci} and Shen et al.~\cite{shen2017novel} both proposed genetic algorithm-based methods for PCI assignment, incorporating mod-$3$ interference into their respective fitness \blue{objectives}. 
\blue{
Yang et al.~\cite{yang2014interference} formulated PCI assignment as an interference self-coordination problem and proposed adaptive heuristics to minimize mod-$3$ interference.
Andrade et al.~\cite{andrade2022physical} modeled modular interference with two cell neighboring relations and introduced a memetic algorithm that combines multiparent biased random-key genetic search with local refinement.
}
However, these heuristic approaches do not have theoretical guarantees for either solution quality or computational complexity. 

\blue{Beyond heuristics, several works formulate PCI assignment as a mathematical optimization problem and design algorithms to solve it. }
Since PCI assignment is inherently a difficult NP-hard combinatorial optimization problem, most algorithms encounter significant challenges in finding high-quality solutions.
Consequently, simpler algorithms relying on greedy strategies and search methods are often employed to solve this problem, leading to suboptimal performance. 
Gui et al.~\cite{gui2018pci2} proposed a greedy algorithm to tackle the NP-hard binary quadratic programming (BQP) model for PCI assignment.
Fairbrother et al. \cite{fairbrother2018two} modeled the PCI assignment problem as a two-level graph partitioning and presented a mixed-integer programming (MIP) model to solve it. 



\blue{All existing optimization methods either incorporate weighted objectives to prioritize minimizing mod-$3$ interference, mod-$30$ interference, collisions, and confusions, or target a single arbitrary mod-$k$ interference component.
As a result, they fail to account for the structural dependence between multiple modular interferences, which often leads to a trade-off between solution quality and computational efficiency. This limitation becomes especially problematic when scaling to large networks.}

\subsection{Contributions}

In this paper, we propose a novel approach for PCI assignment that jointly addresses mod-$3$ interference, mod-$30$ interference, collisions, and confusions. Our key contributions are summarized as follows.

\begin{itemize}
\item {\bf Decomposition of PCI assignment into multi-stage graph partitioning.} Based on the Chinese Remainder Theorem (CRT), we break down PCI assignment into sequentially solving the Min-$3$-Partition, Min-$10$-Partition, and graph coloring problems. These sub-problems decouple the objectives and constraints in PCI assignment, significantly reducing the size of the solution space and enabling the scalability of the proposed method to large-scale networks. 
\item {\bf A relaxation-free reformulation of Min-$k$-Partition.} To tackle the general NP-hard Min-$k$-Partition problem, we propose an exact reformulation as a quadratic program with a norm-equality constraint over the probability simplex. To solve this reformulation, we develop a penalized mirror descent (PMD) algorithm, which is both efficient and theoretically grounded. We provide rigorous convergence guarantees \blue{that characterize the convergence rate in terms of} key system parameters. 
\item {\bf Numerical validation on real-world data.}
We conduct extensive numerical experiments on real-world datasets from dense urban 5G deployments to evaluate the performance of the proposed method. The results demonstrate that our approach outperforms state-of-the-art baselines in terms of interference minimization, collision and confusion elimination, and computational efficiency. These findings highlight the practical applicability of our method in optimizing PCI assignments for large-scale, real-world 5G networks.
\end{itemize}

\emph{Synopsis.} The subsequent sections of the paper are structured as follows. Section \ref{sec:prob} introduces the system model and formulates the PCI assignment problem. Section \ref{sec:decomposition} describes our approach for the PCI assignment based on the CRT, decomposing the problem into sequentially solving Min-$k$-Partition problems and a graph coloring problem. Building upon this decomposition, Section \ref{sec:prop} presents a relaxation-free reformulation of Min-$k$-Partition and details the proposed PMD algorithm along with convergence analysis. 
{Section \ref{sec:ext} extends the proposed framework to a more practical case where the PCI of a subset of cells is optimized.}
Section \ref{sec:exp} presents numerical experiments on synthetic and real-world data. Finally, Section \ref{sec:con} concludes the paper.

\emph{Notations.} 
We use $\mathbb{Z}$ and $\mathbb{R}$ to denote the set of integers and real numbers, respectively. 
Moreover, $\mathbb{Z}_+$ denotes the set of non-negative integers, $\mathbb{Z}_{k}$ is the set integers ranging from $0$ to $k-1$, $\mathbb{R}^N$ is the set of $N$-dimensional real-valued vectors, and $\mathbb{Z}_k^N$ is the set of $N$-dimensional integer vectors with entries from $\mathbb{Z}_k$. For $a,b\in\mathbb{Z}$ and $k\in\mathbb{Z}_+$, we write $a\equiv b\pmod{k}$ if $a-b=nk$ for some $n\in\mathbb{Z}$. The mod $k$ value of $a$, denoted by $m_k(a)$, is the unique integer $r\in\mathbb{Z}_k$ such that $a\equiv r\pmod{k}$. For $x\in\mathbb{R}$, we use $\lfloor x\rfloor$ to denote the largest integer less than or equal to $x$. We use lowercase and uppercase boldface letters to denote column vectors and matrices, respectively. The notation $[\bm{A}]_{i,j}$ represents the element in the $i$-th row and $j$-th column of a matrix $\bm{A}$. 
The transpose of a matrix $\bm{A}$ is denoted by $\bm{A}^T$. The trace of a matrix $\bm{A}$ is denoted by $\text{Tr}(\bm{A})$. The $\ell_p$-norm of a vector $\bm{v}$ is denoted by $\|\bm{v}\|_p$. The $\ell_{p,q}$-norm of a matrix $\bm{A}$ is given by $\Vert\bm{A}\Vert_{p,q}=\bigg(\sum_{i=1}^n\big(\sum_{j=1}^n\vert[\bm{A}]_{i,j}\vert^p\big)^{\frac{q}{p}}\bigg)^{\frac{1}{q}}$. The notation $\langle \cdot, \cdot\rangle$ represents the generalized inner product. The vector $\bm{1}_N$ denotes the $N$-dimensional all-one vector. The function $\mathbbm{1}\{A\}$ serves as the indicator function for event $A$. The cardinality of a set $\mathcal{S}$ is denoted by $\vert\mathcal{S}\vert$.

\begin{figure}
    \centering
    \includegraphics[width=\linewidth]{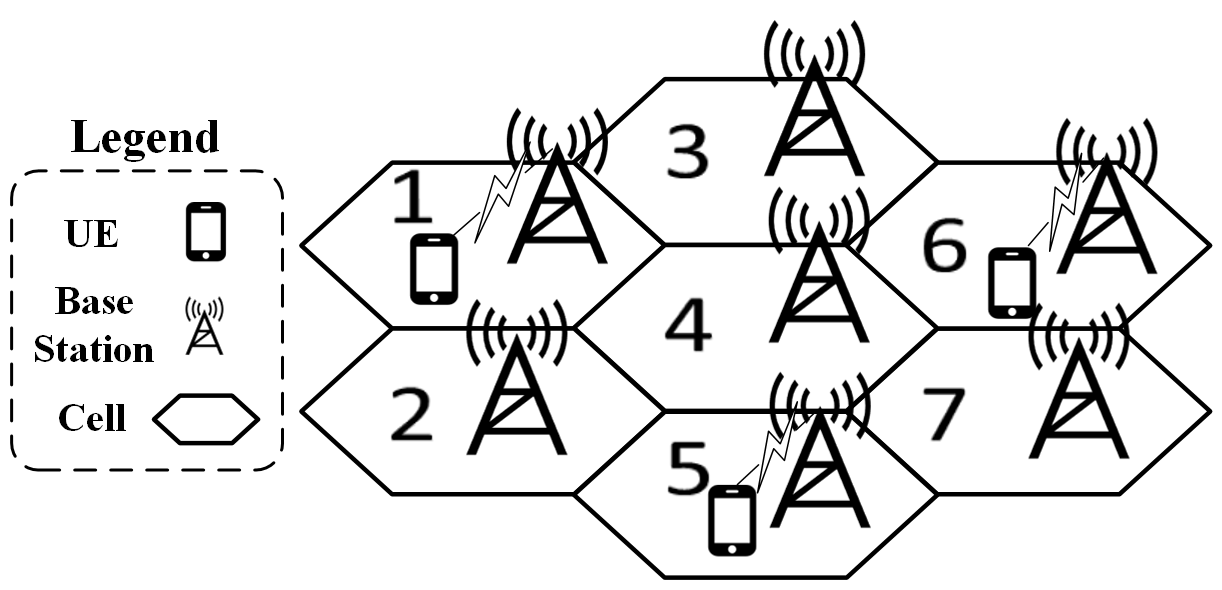}
    \caption{Illustration of a 5G wireless network comprising $N=7$ co-frequency cells.
    Each hexagon is a cell, and two cells are neighboring cells if they share an edge. For example, $(1,4)\in\mathcal{E}_1$ indicates a collision between cell $1$ and cell $4$ when they share the same PCI. 
    Moreover, since $(4,1), (4,7)\in\mathcal{E}_1$, this implies $(1,7)\in\mathcal{E}_2$, and confusion arises when cells $1$ and $7$ are assigned the same PCI. 
     When the PCIs of cells $1$ and $2$ share the same mod-$3$ value, mod-$3$ interference occurs with a magnitude of $[\bm{W}]_{1,2}$. Mod-$30$ interference follows a similar pattern.
     }
    \label{fig:sysmod}
\end{figure}

\section{System Model and Problem Formulation \label{sec:prob}}

In this section, we formally define the PCI assignment problem.

\subsection{System Model}

We model a 5G wireless network with $N$ cells as a graph, where the vertices are given by $\mathcal{V}=\{1, 2, \cdots, N\}$, and vertex $i$ represents the $i$-th cell. Between cells, there is interference and a neighboring relation. The interference between cells is encoded in an \emph{interference matrix} $\bm{W}\in\mathbb{R}^{N\times N}$, where $[\bm{W}]_{i,j}=[\bm{W}]_{j,i} \ge 0$ is the interference between cell $i$ and cell $j$. For two cells $i$ and $j$, we say that $j$ is a \emph{neighboring cell} of $i$ if the signal of cell $j$ can be received by devices connected to cell $i$. The set of neighboring cell pairs is denoted by $\mathcal{E}_1\subseteq\mathcal{V}\times\mathcal{V}$. We further define the \emph{second-order neighboring set}
\[
\mathcal{E}_2 = \{ (i,j) \in \mathcal{V}\times\mathcal{V} \mid \text{$\exists\ell\in\mathcal{V}$ such that $(\ell,i),(\ell,j)\in\mathcal{E}_1$} \},
\]
which consists of cell pairs that share the same neighbor. We let $\mathcal{E}=\mathcal{E}_1\cup\mathcal{E}_2$. Similar to \cite{andrade2022physical}, we assume that $\bm{W}$, $\mathcal{E}_1$, and $\mathcal{E}_2$  have been estimated and are known.

\subsection{Problem Formulation}

The task of the 5G PCI assignment problem is to assign to each cell $i\in\mathcal{V}$ an integer $\text{PCI}_i\in\mathbb{Z}_{1008}$, called the \emph{PCI value}, with the goal of minimizing mod-$3$ and mod-$30$ interference and eliminating collisions and confusions. 
\begin{itemize}

    \item \blue{\emph{Mod-$3$ interference} occurs between two cells $i$ and $j$, if their PCIs have the same mod-$3$ value, i.e.~$\text{PCI}_i \equiv \text{PCI}_j \pmod{3}$. In this case, the magnitude of the mod-$3$ interference is given by $[\bm{W}]_{i,j}$. 
    Physically, the mod-$3$ value is used for the scheduling of the physical downlink and uplink shared channel as well as the sounding reference signal. An overlap of mod-$3$ value leads to conflicts in resource scheduling \cite{Huawei5GPlanning}.
    \item \emph{Mod-$30$ interference} occurs when two cells have the same mod-$30$ value, i.e.~$\text{PCI}_i \equiv \text{PCI}_j \pmod{30}$. In this case, the magnitude of the mod-$30$ interference is given by $[\bm{W}]_{i,j}$. Physically, an overlap of mod-$30$ value indicates the reuse of the same Zadoff-Chu root sequences for uplink demodulation reference signals and sounding reference signals, leading to severe sequence collision~\cite{Huawei5GPlanning}.
    }
    \item A \emph{collision} arises when two neighboring cells $i$ and $j$ share the same PCI, that is, $(i,j)\in\mathcal{E}_1$ and $\text{PCI}_i=\text{PCI}_j$. It prevents a UE from determining the serving cell and establishing a connection \cite{andrade2022physical}.
    \item A \emph{confusion} arises when two cells $(i,j)\in\mathcal{E}_2$ share the same PCI. In other words, this means that two cells $i,j$ sharing a common neighbor have the same PCI. This can potentially cause a UE confusion during cell handovers.
\end{itemize}
In Figure \ref{fig:sysmod}, we provide a visualization of a 5G network and illustrate the concepts of interference, collisions, and confusions. 
\blue{Although the physical mechanisms that generate mod-$3$ and mod-$30$  interference differ, we represent both with the same interference matrix $\bm{W}$. In practice, PCI assignment is performed at the network layer, where the only reliable source of interference information is the MR data collected from live networks. MR data already aggregates all signal-level degradations, so a unified matrix $\bm{W}$ is aligned with common industrial practice and ensures that the optimization concentrates on measurable interference patterns rather than unobservable physical-layer particulars~\cite{CN105744548A,CN103906078A}.}

Mathematically, the PCI assignment problem can be formulated as
\begin{subequations}    
\begin{align}
    \text{(P1)}\quad \min_{\textbf{PCI}}& \quad 
    \Bigg( 
    \sum_{i,j=1}^N[\bm{W}]_{i,j} \cdot \mathbbm{1}\{\text{PCI}_i\equiv\text{PCI}_j\Mod{3}\}, \label{obj-mod3} \\
    & \quad\ \  \sum_{i,j=1}^N[\bm{W}]_{i,j} \cdot \mathbbm{1}\{\text{PCI}_i\equiv\text{PCI}_j\Mod{30}\} \Bigg), \label{obj-mod30} \\
    \text{s.t.}&\quad  \text{PCI}_i\neq\text{PCI}_j , \quad \forall (i,j)\in \mathcal{E}, \label{cons-coll-conf} \\
&\quad  \text{PCI}_i \in \mathbb{Z}_{1008}, \quad \forall i\in \mathcal{V}, \label{cons-range}
\end{align}
\end{subequations}
where $\textbf{PCI} = (\text{PCI}_1,...,\text{PCI}_N)$.
This is a multi-objective optimization problem where the objectives are to minimize the mod-$3$ interference \eqref{obj-mod3} and the mod-$30$ interference \eqref{obj-mod30}, and the constraints include eliminating collisions and confusions \eqref{cons-coll-conf} and the PCI range requirement \eqref{cons-range}.

\section{A Multi-Stage Decomposition Approach \label{sec:decomposition}}

In this section, we present our approach to PCI assignment, which decomposes the problem into multiple stages of graph partitioning and graph coloring, effectively decoupling the multiple objectives and constraints.

\subsection{Decomposition Framework}
For each cell $i$, we can decompose its PCI value as
\[
\text{PCI}_i = 30q_i + r_i,
\]
where $q_i=\lfloor\text{PCI}_i/30\rfloor\in\mathbb{Z}_+$ is the quotient, and $r_i=m_{30}(\text{PCI}_i)\in\mathbb{Z}_{30}$ is the mod-$30$ value of $\text{PCI}_i$. Thus, a PCI assignment $\textbf{PCI}$ is equivalently characterized by the mod-$30$ value assignment $\bm{r}=(r_1,...,r_N)$ and the quotient assignment $\bm{q}=(q_1,...,q_N)$. The PCI assignment problem (P1) can then be reformulated as
\begin{subequations}
\begin{align}
    \text{(P2)}\quad \min_{\bm{q},\bm{r}} & \quad \Bigg(  \sum_{i,j=1}^N [\bm{W}]_{i,j} \cdot \mathbbm{1}\{r_i\equiv r_j\Mod{3}\}, \label{objs:1}\\
    & \quad\ \  \sum_{i,j=1}^N [\bm{W}]_{i,j} \cdot \mathbbm{1}\{r_i = r_j\} \Bigg), \label{objs:2}\\
    \text{s.t.}&\quad q_i \neq q_j \text{ or } r_i \neq r_j, \quad \forall (i,j)\in \mathcal{E}, \notag \\
    &\quad \bm{q}\in \mathbb{Z}_+^N, \notag \\
    & \quad \bm{r}\in \mathbb{Z}_{30}^N, \notag \\
    & \quad 30\times \bm{q} + \bm{r} \in \mathbb{Z}_{1008}^N. \notag
\end{align}
\end{subequations}
Note that the objective \eqref{objs:2} does not involve modulo operations because $r_i\in\mathbb{Z}_{30}$. Moreover, since both objectives \eqref{objs:1} and \eqref{objs:2} depend solely on $\bm{r}$, then we can break down the problem into first optimizing over $\bm{r}$ to minimize both mod-$3$ interference and mod-$30$ interference, and then assigning $\bm{q}$ to eliminate collisions and confusions. This leads to the following two sub-problems:
\begin{itemize}
    \item {\bf Sub-problem 1: Mod-$30$ value assignment $\bm{r}$.} We determine the mod-$30$ value assignment $\bm{r}$ by minimizing the mod-$3$ and mod-$30$ interference:
\begin{align}
     \min_{\bm{r}}&\quad \Bigg( \sum_{i,j=1}^N [\bm{W}]_{i,j} \cdot \mathbbm{1}\{r_i\equiv r_j\Mod{3}\},  \notag \\
     &\quad\ \ \sum_{i,j=1}^N [\bm{W}]_{i,j} \cdot \mathbbm{1}\{r_i = r_j\} \Bigg), \notag \\
    \text{s.t.}&\quad  \bm{r}\in \mathbb{Z}_{30}^N.\notag
\end{align}
    \item {\bf Sub-problem 2: Quotient assignment $\bm{q}$.} Suppose we have found a mod-$30$ value assignment \blue{$\bm{r}$} for sub-problem 1. Then, we need to find a quotient assignment $\bm{q}$ that eliminates collisions and confusions. This is a feasibility problem:
    \begin{align*}
    \text{find}&\quad \bm{q}=(q_1, \cdots, q_N), \label{subprob:coloring}\\
    \text{s.t.}&\quad q_i\neq q_j, \quad \forall (i, j)\in\mathcal{E}\text{ with } \widetilde{r}_i = \widetilde{r}_j , \notag\\
    & \quad \bm{q}\in \mathbb{Z}_+^N, \notag\\ 
    & \quad 30 \bm{q}+\blue{\bm{r}}\in \mathbb{Z}_{1008}^N. \notag
    \end{align*} 
\end{itemize}

In the following two subsections, we will further decompose sub-problems 1 and 2 into smaller problems, which greatly reduces the problem size.

\subsection{Two-Stage Graph Partitioning for Mod-$30$ Value Assignment}\label{sec:mod-3 and mod-10}

We observe that the mod-$3$ interference minimization objective in sub-problem 1 depends solely on the mod-$3$ values of the cells' PCIs, regardless of their specific mod-$30$ values. \blue{In light of this observation, we will first determine the mod-$3$ values of the cells' PCIs to minimize the mod-$3$ interference. Based on the mod-$3$ values, we will then decide the mod-$30$ values to minimize the mod-$30$ interference. To this end, we invoke the Chinese Remainder Theorem (CRT)~\cite{CRT}, which states that two mod values of a number, say the mod-$n_1$ and mod-$n_2$ values, uniquely determine its mod-$n_1n_2$ value as long as $n_1$ and $n_2$ are coprime.}
Accordingly, the mod $30$ value $r_i = m_{30}(\text{PCI}_i)$ is uniquely determined by its mod-$3$ value $r^{(3)}_i = m_3(\text{PCI}_i)$ and mod-$10$ value $r^{(10)}_i = m_{10}(\text{PCI}_i)$. 
Therefore, we can further decompose sub-problem 1 by first finding a mod $3$ value assignment $\bm{r}^{(3)} = (r^{(3)}_1,...,r^{(3)}_N)$ to minimize the mod $3$ interference, and then finding a mod $10$ value assignment $\bm{r}^{(10)} = (r^{(10)}_1,...,r^{(10)}_N)$ to minimize the mod $30$ interference. Mathematically, this corresponds to the following two sub-problems:
\begin{itemize}
    \item {\bf Sub-problem 1.1: Mod-$3$ value assignment $\bm{r}^{(3)}$.} We determine the mod-$3$ value assignment $\bm{r}^{(3)}$ by minimizing the mod-$3$ interference:
\begin{align*}
    \text{(S1.1)}\quad\min_{\bm{r}^{(3)}} & \quad \sum_{i,j=1}^N [\bm{W}]_{i,j} \cdot \mathbbm{1} \{ r^{(3)}_i = r^{(3)}_j \},   \\
    \text{s.t.}&\quad  \bm{r}^{(3)}\in \mathbb{Z}_{3}^N.
\end{align*}
Note that the objective is indeed the mod-$3$ interference minimization objective in sub-problem 1, because $r_i\equiv r_j\Mod{3}$ if and only if $r_i^{(3)} = r_j^{(3)}$. Sub-problem 1.1 is in fact a \emph{Min-$3$-Partition} problem which is NP-hard \cite{boros1991max, kann1996hardness, ma2017multiple}. We will propose an algorithm for solving it in Section \ref{sec:prop}.

\item {\bf Sub-problem 1.2: Mod-$10$ value assignment $\bm{r}^{(10)}$.} Suppose we have found a mod-$3$ value assignment \blue{$\bm{r}^{(3)}$} in sub-problem 1.1. We now need to find a mod-$10$ value assignment $\bm{r}^{(10)}$ that minimizes the mod-$30$ interference. An important observation is that if two cells are assigned different mod-$3$ values, then regardless of the mod-$10$ value assignment, they must have different mod-$30$ values and hence no mod-$30$ interference. We can take advantage of this observation to reduce the problem size. Specifically, we partition the network into three \emph{partitions} $\mathcal{V}_0$, $\mathcal{V}_1$, $\mathcal{V}_2$ according to the mod-$3$ values of the cells:
\[
\mathcal{V}_{r} = \{i\in\mathcal{V} \mid \blue{r^{(3)}_i} = r\},\quad r\in\mathbb{Z}_3.
\]
Then we perform mod-$10$ value assignment on each partition \emph{in parallel}: for each $r\in\mathbb{Z}_3$, we determine the mod-$10$ value assignment $r^{(10)}_i$ for cells $i$ in 
$\mathcal{V}_r$ by solving
\begin{align*}
\text{(S1.2.$r$)}\quad\min&\quad\sum_{i,j\in\mathcal{V}_r}[\bm{W}]_{i,j}\cdot\mathbbm{1} \{ r^{(10)}_i = r^{(10)}_j \}, \\
\text{s.t.}&\quad r_i^{(10)} \in \mathbb{Z}_{10},\quad\forall i\in\mathcal{V}_r.
\end{align*}

As each $\mathcal{V}_r$ can be much smaller than $\mathcal{V}$, this approach of optimizing over $\mathcal{V}_r$ in parallel is more efficient than optimizing over the whole set $\mathcal{V}$ directly. The problem (S1.2.$r$) is in fact a \emph{Min-$10$-Partition} problem which is NP-hard \cite{boros1991max, kann1996hardness, ma2017multiple}. We will propose an algorithm for solving it in Section \ref{sec:prop}.
\end{itemize}
After solving sub-problems 1.1 and 1.2, we obtain a mod-$3$ value assignment $\blue{\bm{r}}^{(3)}\in\mathbb{Z}_3^N$ and a mod-$10$ value assignment $\blue{\bm{r}}^{(10)}\in\mathbb{Z}_{10}^N$. We then invoke the CRT to get a mod-$30$ value assignment $\blue{\bm{r}}\in\mathbb{Z}_{30}^N$.

\subsection{Graph Coloring for Quotient Assignment}\label{sec:quotient}

Finally, we study sub-problem 2. Recall that the goal is to determine a quotient assignment $\bm{q}$ that eliminates collisions and confusions. Similar to sub-problem 1.2, a key observation for reducing the problem size is that if two cells are assigned different mod-$30$ values, then regardless of the quotient assignment, they must have different PCI values and thus no collision or confusion. In light of this, we divide the network into $30$ \emph{clusters} $\mathcal{C}_0,\mathcal{C}_1,...,\mathcal{C}_{29}$ according to the mod-$30$ value assignment:
\[
\mathcal{C}_r = \{ i \in \mathcal{V} \mid \blue{r}_i = r \},\quad r\in\mathbb{Z}_{30}.
\]
We then perform quotient assignment on each $\mathcal{C}_r$ in parallel: for each $r\in\mathbb{Z}_{30}$, we determine the quotient assignment $q_i$ for cells $i$ in $\mathcal{C}_r$ by solving
\begin{align*}
    \text{(S2.$r$)}\quad \text{find} &\quad q_i,\quad i\in\mathcal{C}_r, \\
    \text{s.t.} &\quad q_i \neq q_j,\quad \forall (i,j)\in \mathcal{E}\cap(\mathcal{C}_r\times\mathcal{C}_r), \\
    &\quad q_i \in \mathbb{Z}_+,\quad\forall i\in\mathcal{C}_r, \\
    &\quad 30q_i + \blue{r}_i \in \mathbb{Z}_{1008}, \quad\forall i\in\mathcal{C}_r.
\end{align*}
This problem can be reformulated as a \emph{graph coloring problem} \cite{bandh2009graph}. In the graph coloring problem, one is given an undirected graph and seeks to assign a \emph{color} to each vertex so that vertices sharing an edge have different colors. It is easy to see that (S.2.$r$) can be solved by performing graph coloring over a graph with vertices $\mathcal{C}_r$ and edges $\mathcal{E}\cap(\mathcal{C}_r\times\mathcal{C}_r)$, and setting up a one-to-one correspondence between the colors and values of the quotient. To solve the graph coloring problem, we directly use GGC from \cite{bandh2009graph, coloring}. 

\blue{The graph coloring method does not always guarantee that the PCI range constraint $30q_i+\blue{r}_i \in\mathbb{Z}_{1008}$ is satisfied. When too many colors are used, some cells may be assigned overly large $q_i$, violating the PCI range constraint. To tackle this issue, we develop a heuristic post-processing procedure that re-assigns quotients $q_i$ to cells $i$ whose PCIs exceed $1007$, at the cost of potentially increasing the numbers of collisions and confusions. For each of these cells $i$, we fix its mod-$30$ value $r_i$, and update $q_i$ to a value $q\in\mathbb{Z}$ that satisfies $30q + r_i \leq 1007$ and minimizes the number of collisions and confusions with cells that already satisfy the PCI range constraint. We process the constraint-violating cells in descending order of the total number of neighbors and second-order neighbors, so as to prioritize highly constrained regions. 
The full algorithm is provided in Section A of the supplementary material~\cite{qiu2025relaxation}.}

\begin{figure}
    \centering
    \includegraphics[width=\linewidth]{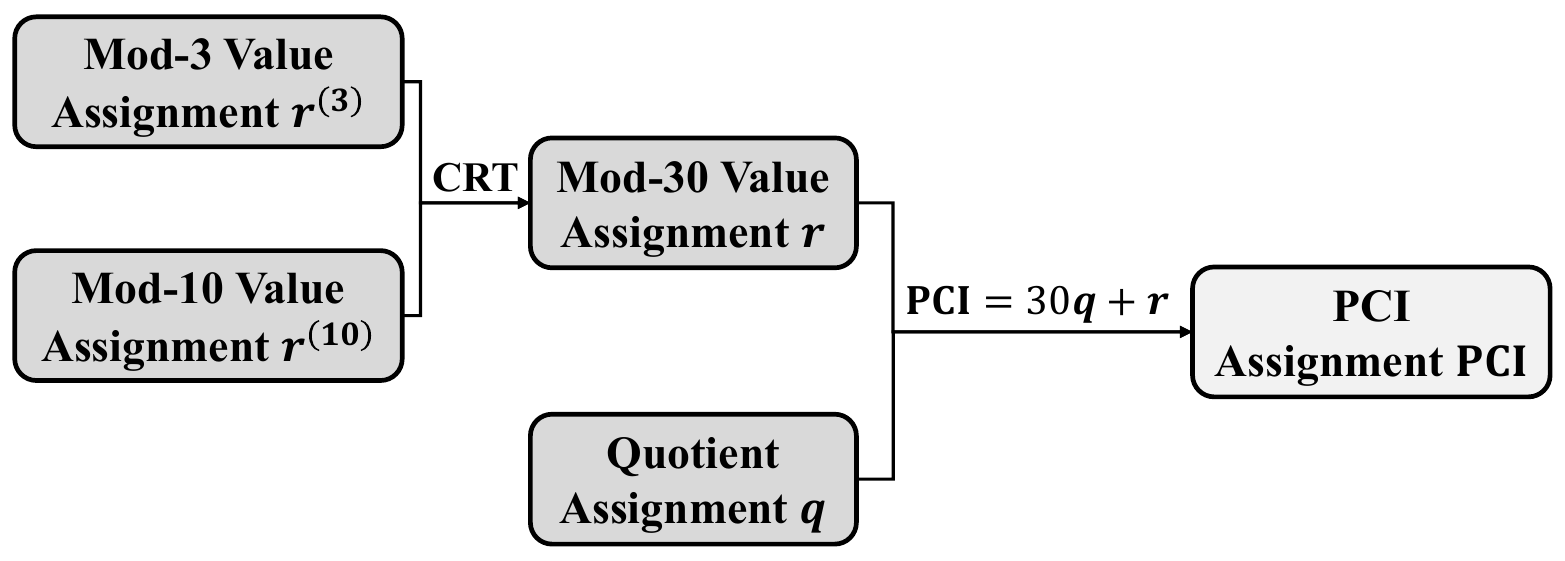}
    \caption{Illustration of the proposed PCI assignment framework. The framework comprises two primary steps: mod-30 value assignment and quotient assignment. In the mod-30 value assignment, we first solve Min-$3$-Partition to obtain mod-3 values $\blue{\bm{r}}^{(3)}$ and then solve Min-$10$-Partition to determine the mod-10 values $\blue{\bm{r}}^{(10)}$. Subsequently, we derive the mod-30 values $\blue{\bm{r}}$ based on the CRT.
    In the quotient assignment, solve the graph coloring problems based on greedy graph coloring to determine the quotient values $\blue{\bm{q}}$. 
    Finally, the PCI value $\blue{\textbf{PCI}}$ for each cell is computed based on the mod values $\blue{\bm{r}}$ and quotient values $\blue{\bm{q}}$. 
}
    \label{fig:framework}
\end{figure}

\subsection{Summary}

Below we summarize our approach for PCI assignment, which decomposes the problem into assigning mod-$3$ values $\bm{r}^{(3)}$, mod-$10$ values $\bm{r}^{(10)}$, and quotients $\bm{q}$. An illustration is given by Figure \ref{fig:framework}.

\begin{enumerate}
    \item {\bf Mod-$3$ value assignment.} Solve the Min-$3$-Partition problem (S1.1) to obtain a mod-$3$ value assignment $\blue{\bm{r}}^{(3)}\in\mathbb{Z}_3^N$. This step minimizes the mod-$3$ interference. The algorithm for solving Min-$3$-Partition is introduced in Section \ref{sec:prop}. 
    \item {\bf Mod-$10$ value assignment.} Solve the Min-$10$-Partition problems (S1.2.$r$), $r=0,1,2$ in parallel to obtain a mod-$10$ value assignment $\blue{\bm{r}}^{(10)}\in\mathbb{Z}_{10}^N$. This step minimizes the mod-$30$ interference. The algorithm for solving Min-$10$-Partition is introduced in Section \ref{sec:prop}.
    \item \textbf{Mod-$30$ value assignment.} With $\blue{\bm{r}}^{(3)}$ and $\blue{\bm{r}}^{(10)}$, apply the CRT to obtain a mod-$30$ value assignment $\blue{\bm{r}}\in\mathbb{Z}_{30}^N$.
    \item \textbf{Quotient assignment.} Solve the graph coloring problems (S2.$r$), $r=0,1,...,29$ in parallel to obtain a quotient assignment $\blue{\bm{q}}\in\mathbb{Z}_+^N$. This step eliminates collisions and confusions.
    \item \textbf{PCI assignment.} The PCI assignment is given by \[
    \blue{\textbf{PCI}}=30\blue{\bm{q}}+\blue{\bm{r}}.
    \]
\end{enumerate}

\blue{
While our main focus is the joint minimization of mod-$3$ and mod-$30$ interference, our decomposition framework naturally extends to arbitrary single mod-$k$ interference by expressing each PCI as $\text{PCI}_i = kq_i + r_i$, solving a Min-$k$-Partition for $r_i$, and applying graph coloring for $q_i$. Furthermore, it supports joint optimization over multiple moduli, provided they have the coprime structure required by the CRT. This yields a unified and scalable framework for systematically decomposing the multiple modular interferences minimization.
}

\section{Relaxation-free Min-$k$-Partition and \blue{PMD} \label{sec:prop}}

In this section, we propose a relaxation-free approach to the general Min-$k$-Parition problem, which appears in the mod-$3$ value assignment and mod-$10$ value assignment in our PCI assignment approach. The Min-$k$-Partition problem is NP-hard \cite{boros1991max, kann1996hardness, ma2017multiple}. Current approaches, such as semidefinite relaxation \cite{delorme1993laplacian, lovasz1979shannon, lovasz1991cones, poljak1995nonpolyhedral, shor1987quadratic} and rank-$2$ relaxation methods \cite{burer2002rank} are effective only for Min-$2$-Partition, which is inadequate for large-scale 5G wireless networks. Furthermore, the inherent relaxation error associated with these methods often leads to suboptimal solutions. We will now formally define the Min-$k$-Partition problem and introduce our relaxation-free approach. 

\subsection{Relaxation-Free Min-$k$-Partition \label{mod-k}}
The general \emph{Min-$k$-Partition} problem aims to divide the vertices in a graph into $k$ subsets and minimize the interference between vertices within each subset. Using our notation, it can be formulated as
\begin{align*}
     \underset{\bm{r}}{\min} & \quad \sum_{i,j=1}^N[\bm{W}]_{i,j}\cdot\mathbbm{1} \{ r_i = r_j \},  \\
    \text{s.t. }&\quad  \bm{r}\in \mathbb{Z}_{k}^N.
\end{align*}

In our PCI assignment approach, we take $k=3$ (resp.~$k=10$) for mod-$3$ (resp.~mod-$10$) value assignment, where each mod-$3$ (resp.~mod-$10$) value corresponds to a subset of cells. Here, we have slightly abused the notation $N$, as the mod-$10$ assignment is done over each mod-$3$ partition instead of the whole set of cells.

To solve Min-$k$-Partition, we use a one-hot encoding scheme for each of the $k$ subsets.
Specifically, we set 
\[
    \bm{x}_i=\bm{e}_{r_i},
\]
where $\bm{e}_{r_i}$ denotes the $k$-dimensional unit vector whose $r_i$-th entry is $1$, and all other entries are $0$. Then, we can reformulate Min-$k$-Partition as
\begin{align}
    \text{(P3)}\quad\underset{\bm{X}=[\bm{x}_1, \bm{x}_2, \cdots, \bm{x}_N]}{\min}& \quad  \sum_{i=1}^N\sum_{j=1}^N [\bm{W}]_{i,j}{\bm{x}}_i^T \bm{x}_j \label{opt:modk:2}  \\
    \text{s.t.}& \quad \bm{x}_i\in \{\bm{e}_1, \cdots, \bm{e}_k\}, \quad \forall i\in \mathcal{V}. \notag
\end{align}    
The constraint $\bm{x}_i\in \{\bm{e}_1, \cdots, \bm{e}_k\}$ in problem (P3) is discrete and nonconvex. 
\blue{As \( \bm{e}_1, \cdots, \bm{e}_k \) are precisely the $k$-sparse vertices of the probability simplex \( \Delta_k = \{(x_1,\cdots,x_k)^T \in \mathbb{R}^k \mid \sum_{i=1}^k x_i = 1,\, x_i \ge 0 \} \), we utilize the following norm-based characterization of sparsity proposed by \cite{wang2021clustering}. 
}
\begin{lemma}[Norm Condition~\cite{wang2021clustering}]
Fix $q>p\ge 1$. Any non-zero vector $\bm{x}=(x_1, \cdots, x_k)^T\in\mathbb{R}^k$ has one non-zero entry if and only if $\Vert \bm{x}\Vert_p=\Vert \bm{x}\Vert_q$. 
\label{prop:1}
\end{lemma}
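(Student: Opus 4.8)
The plan is to prove both directions of the equivalence $\Vert\bm{x}\Vert_p = \Vert\bm{x}\Vert_q$ (for a non-zero $\bm{x}\in\mathbb{R}^k$ and fixed $q > p \ge 1$) using the well-known monotonicity of $\ell_p$-norms in $p$ together with its equality case. First I would reduce to the normalized setting: since both norms are positively homogeneous of degree one, I may assume without loss of generality that $\Vert\bm{x}\Vert_p = 1$, and I set $t_i = |x_i|^p \ge 0$ so that $\sum_{i=1}^k t_i = 1$.

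For the easy direction, suppose $\bm{x}$ has exactly one non-zero entry, say $x_m \neq 0$ and $x_i = 0$ for $i \neq m$. Then $\Vert\bm{x}\Vert_p = |x_m| = \Vert\bm{x}\Vert_q$ immediately, since every $\ell_r$-norm of a one-sparse vector equals the absolute value of its unique non-zero component. For the converse, I would argue via strict concavity. With the reduction above, $\Vert\bm{x}\Vert_q^q = \sum_{i=1}^k t_i^{q/p}$, where $r := q/p > 1$. Because the map $s \mapsto s^{r}$ is strictly convex on $[0,\infty)$ and the $t_i$ lie in $[0,1]$ with $\sum t_i = 1$, we have $t_i^{r} \le t_i$ for each $i$, hence $\Vert\bm{x}\Vert_q^q = \sum_i t_i^{r} \le \sum_i t_i = 1 = \Vert\bm{x}\Vert_p^q$, so $\Vert\bm{x}\Vert_q \le \Vert\bm{x}\Vert_p$ always. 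Equality $\Vert\bm{x}\Vert_q = \Vert\bm{x}\Vert_p$ forces $t_i^{r} = t_i$ for every $i$, i.e.\ $t_i \in \{0,1\}$ for each $i$; combined with $\sum_i t_i = 1$ this means exactly one $t_i$ equals $1$ and the rest vanish, so $\bm{x}$ has precisely one non-zero entry.

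Alternatively, and perhaps more cleanly, I would invoke the standard inequality $\Vert\bm{x}\Vert_q \le \Vert\bm{x}\Vert_p$ for $q > p$ as a known fact and then only analyze its equality case; the equality case of this nesting inequality is classical (it reduces, after the substitution $t_i = |x_i|^p / \Vert\bm{x}\Vert_p^p$, to the equality case of Jensen's inequality for the strictly convex function $s\mapsto s^{q/p}$, which holds iff all the mass is on a single point). Either route gives the same conclusion.

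I do not anticipate a serious obstacle here: the statement is a short, self-contained real-analysis lemma, and the only point requiring a little care is making the equality analysis rigorous, namely noting that $s^{r} = s$ for $s \in [0,1]$ and $r > 1$ holds only at $s = 0$ and $s = 1$ (strict convexity of $s \mapsto s^r$ with the chord from $(0,0)$ to $(1,1)$ lying strictly above the curve on the open interval). One should also be mindful that the lemma is stated for general $q > p \ge 1$ rather than just $q = p+1$, but the argument above already covers all such pairs since only $r = q/p > 1$ is used.
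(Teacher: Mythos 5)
Your proof is correct, but it takes a genuinely different route from the paper's. The paper never normalizes: it writes $\Vert\bm{x}\Vert_q^q=\sum_i|x_i|^p|x_i|^{q-p}$ and bounds this quantity from above by $\bigl(\sum_i|x_i|^p\bigr)\bigl(\sum_i|x_i|^{q-p}\bigr)$, then uses the hypothesis $\Vert\bm{x}\Vert_q=\Vert\bm{x}\Vert_p$ to bound $\Vert\bm{x}\Vert_q^q=\Vert\bm{x}\Vert_p^q$ from below by the same product, concluding that the cross terms $\sum_{i\neq j}|x_i|^p|x_j|^{q-p}$ vanish, which directly forces all but one coordinate to be zero. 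You instead normalize to $\Vert\bm{x}\Vert_p=1$, set $t_i=|x_i|^p$ with $\sum_i t_i=1$, and use the termwise inequality $t_i^{q/p}\le t_i$ on $[0,1]$ together with its equality case $t_i\in\{0,1\}$ — i.e., the classical equality analysis for the nesting $\Vert\cdot\Vert_q\le\Vert\cdot\Vert_p$. Both arguments are elementary, but yours is tighter on one point: the paper's lower bound relies on the inequality $\sum_i a_i^t\le\bigl(\sum_i a_i\bigr)^t$ with $t=(q-p)/p$, which as stated holds only for $t\ge 1$ (i.e., $q\ge 2p$; it suffices for the paper's actual use case $p=1$, $q=2$), whereas your termwise argument covers every pair $q>p\ge 1$ uniformly. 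The paper's version, in exchange, makes the conclusion maximally explicit by exhibiting $|x_i|^p|x_j|^{q-p}=0$ for all $i\neq j$ without any reduction step.
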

\blue{
\begin{proof}
    See Appendix \ref{proof:prop:1}.
\end{proof}
}

Using Lemma \ref{prop:1}, we arrive at the following relaxation-free reformulation of Min-$k$-Partition:
\begin{subequations}
\begin{align}
    \text{(P4)}\quad\underset{\bm{X}=[\bm{x}_1,\cdots, \bm{x}_N]}{\min}& \quad  \sum_{i=1}^N\sum_{j=1}^N [\bm{W}]_{i,j}{\bm{x}}_i^T \bm{x}_j  \label{opt:modk:3}  \\
    \text{s.t. }& \quad \Vert \bm{x}_i\Vert_p=\Vert \bm{x}_i\Vert_q, \quad \forall i\in \mathcal{V}  \label{opt:modk:3:condi1}, \\
    & \quad \bm{x}_i\in \Delta_k, \quad \forall i\in\mathcal{V}, \label{opt:modk:3:condi2}
\end{align}    
\end{subequations}where $1\le p<q$.

\subsection{Penalized Mirror Descent Algorithm\label{alg}}

Given the equivalence of (P3) and (P4), we propose a PMD algorithm to solve (P4) by constructing a sequence of penalized problems and solving them by mirror descent (MD).

\subsubsection{Norm-Equality Penalization}
In view of its discrete feasible region, directly optimizing (P4) may exhibit sensitivity to the initial point.
Therefore, we apply the penalty method \cite{nocedal1999numerical, selesnick2014sparse, shi2020penalty, shi2020penalty2} to gradually arrive at a binary solution, mitigating sensitivity to initial points. Consider the following penalized version of (P4):
\begin{subequations}
\begin{align} 
    \underset{\bm{X}=[\bm{x}_1,\cdots,\bm{x}_N]}{\min}& \quad  \sum_{i=1}^N\sum_{j=1}^N[\bm{W}]_{i,j}\bm{x}_i^T\bm{x}_j + \frac{\rho}{v}\sum_{i=1}^N(\Vert \bm{x}_i\Vert_p^v-\Vert\bm{x}_i\Vert_q^v),  \label{obj:penalty}\\
    \text{s.t.}& \quad \bm{X}\in \Delta_k^N,
\end{align}      
\label{opt:modk:4.5:condi}
\end{subequations}
where $v>0$, and $\Delta_k^N$ is the Cartesian product of $N$ copies of the $k$-dimensional probability simplex. We note that $\|\bm{x}\|_p^v - \|\bm{x}\|_q^v \ge 0$ for all $\bm{x}\in\mathbb{R}^k$ whenever $q\ge p \ge 1$. 
\blue{To facilitate efficient computation, we adopt the smooth setting in \cite{wang2021clustering} with $p=1$, $q=2$, and $v=2$, so that the resulting penalty is continuously differentiable. This enables the use of scalable first-order optimization algorithms, which are significantly more efficient than proximal methods typically required for non-smooth penalties.} The penalized version of (P4) can be written as
\begin{equation}
\begin{aligned} \label{opt:modk:4:condi}
    \text{(P5)}\quad\underset{\bm{X}}{\min}& \quad  \text{Tr}\left(\bm{X}\left(\bm{W}-\frac{\rho}{2}\bm{I} \right)\bm{X}^T \right) + \frac{N\rho}{2}\triangleq F(\bm{X};\rho)  \\
    \text{s.t. }& \quad \bm{X}\in \Delta_k^N,
\end{aligned}    
\end{equation}
where $\rho>0$ is the penalty parameter. The detailed derivation of problem (P5) is provided in Appendix \ref{app:0.75}.

Next, we present the following Theorem \ref{theo:1} to establish the theoretical foundation for the exactness of the proposed penalty method.

\begin{theorem} [Exactness of the Penalized Problem] 
Let $\rho_0=\max(2\lambda_{\max}(\bm{W}), 0)$. When $\rho>\rho_0$, the problem (P5) has the same optimal solutions as (P4). 
\label{theo:1}
\end{theorem}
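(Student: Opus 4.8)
The plan is to show that for $\rho > \rho_0$, the penalty term $\frac{\rho}{2}\sum_{i=1}^N(\|\bm{x}_i\|_2^2 - 1)$ — equivalently, the term $-\frac{\rho}{2}\mathrm{Tr}(\bm{X}\bm{X}^T)$ appearing in $F(\bm{X};\rho)$ — is strong enough to force every minimizer of (P5) to lie at a vertex of the simplex, i.e.\ to satisfy the norm-equality constraint \eqref{opt:modk:3:condi1}. First I would observe that on $\Delta_k^N$ we always have $\|\bm{x}_i\|_1 = 1$, so $\|\bm{x}_i\|_1 = \|\bm{x}_i\|_2$ holds if and only if $\|\bm{x}_i\|_2^2 = 1$, which by Lemma \ref{prop:1} happens exactly when $\bm{x}_i$ is one of the unit vectors $\bm{e}_1,\dots,\bm{e}_k$. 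Hence (P4) is precisely (P5)'s feasible set restricted to the vertices, and on that restricted set $F(\bm{X};\rho)$ coincides with the objective \eqref{opt:modk:3} of (P4), since the added constant $\frac{N\rho}{2}$ cancels the $-\frac{\rho}{2}\mathrm{Tr}(\bm{X}\bm{X}^T) = -\frac{N\rho}{2}$ contribution. So it suffices to prove: every optimal $\bm{X}^\star$ of (P5) has all columns equal to unit vectors.

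The key step is a perturbation/convexity argument column by column. Fix an optimal $\bm{X}^\star$ and suppose some column $\bm{x}_i^\star$ is not a vertex. Consider $F$ as a function of $\bm{x}_i$ alone, with all other columns fixed: it has the form $g(\bm{x}_i) = \bm{x}_i^T(\bm{W}_{ii} - \tfrac{\rho}{2})\bm{x}_i + (\text{linear in } \bm{x}_i) + \text{const}$, where $\bm{W}_{ii}=[\bm{W}]_{i,i}$. More cleanly, write $F(\bm{X};\rho) = \mathrm{Tr}(\bm{X}\bm{W}\bm{X}^T) - \frac{\rho}{2}\mathrm{Tr}(\bm{X}\bm{X}^T) + \frac{N\rho}{2}$; the first term is convex in $\bm{X}$ when $\bm{W}\succeq 0$, but in general I would instead use the bound $\mathrm{Tr}(\bm{X}\bm{W}\bm{X}^T) \le \lambda_{\max}(\bm{W})\,\mathrm{Tr}(\bm{X}\bm{X}^T)$, so that
\[
F(\bm{X};\rho) \;\ge\; \Bigl(\lambda_{\max}(\bm{W}) - \tfrac{\rho}{2}\Bigr)\mathrm{Tr}(\bm{X}\bm{X}^T) + \tfrac{N\rho}{2}.
\]
Wait — that inequality goes the wrong way for a lower bound when the coefficient is negative, so the cleaner route is the standard one: show $F(\cdot;\rho)$ restricted to each column is \emph{concave} (its Hessian in $\bm{x}_i$ is $2([\bm{W}]_{i,i} - \tfrac{\rho}{2})\bm{I} \preceq 0$ once $\rho \ge 2[\bm{W}]_{i,i}$, and $[\bm{W}]_{i,i} \le \lambda_{\max}(\bm{W})$). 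A concave function on the polytope $\Delta_k$ attains its minimum at a vertex, so we may replace $\bm{x}_i^\star$ by some vertex without increasing $F$; iterating over $i$ produces a vertex-valued minimizer. To upgrade this to "\emph{every} optimal solution is vertex-valued" — which is what the theorem claims — I would sharpen the argument by using $\rho > \rho_0$ strictly: then $2\lambda_{\max}(\bm{W}) - \rho < 0$ strictly, so each per-column restriction is \emph{strictly} concave, hence its minimum over $\Delta_k$ is attained \emph{only} at vertices; an interior (or edge-interior) column in an optimal $\bm{X}^\star$ would contradict optimality by strict improvement.

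The main obstacle I anticipate is handling the \emph{coupling} between columns — $F$ is not separable, since $\mathrm{Tr}(\bm{X}\bm{W}\bm{X}^T)$ mixes different columns through the off-diagonal entries of $\bm{W}$ — so the naive "one column at a time" concavity argument shows only that \emph{some} minimizer is a vertex, not that all are. The fix is to argue at a putative optimal $\bm{X}^\star$: freeze all columns but one; the resulting single-variable problem is strictly concave on $\Delta_k$ (using $\rho > 2\lambda_{\max}(\bm{W}) \ge 2[\bm{W}]_{i,i}$), so its unique minimizers are vertices, and since $\bm{x}_i^\star$ must be a minimizer of this restricted problem (else we could strictly decrease $F$), $\bm{x}_i^\star$ is a vertex; this holds for every $i$. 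A secondary subtlety is the boundary case where the coefficient $2[\bm{W}]_{i,i} - \rho$ equals zero for the restricted Hessian eigenvalue along directions in the simplex's affine hull — strict inequality $\rho > \rho_0$ rules this out as long as $\lambda_{\max}(\bm{W}) > 0$; if $\lambda_{\max}(\bm{W}) \le 0$ then $\rho_0 = 0$, $\bm{W} \preceq 0$, and $\mathrm{Tr}(\bm{X}\bm{W}\bm{X}^T)$ is itself concave in $\bm{X}$, so the same vertex-minimization principle applies directly to the full problem. Finally I would close the loop by noting the objective values of (P4) and (P5) agree on vertices (the constant $\frac{N\rho}{2}$ exactly offsets the penalty), so the optimal solution sets coincide.
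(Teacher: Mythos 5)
Your proof is correct, and it takes a genuinely different route from the paper's. The paper first proves that $F(\cdot;\rho)$ is \emph{globally} strictly concave on $\Delta_k^N$ (via the eigendecomposition of $\bm{W}$, showing $\bm{W}-\tfrac{\rho}{2}\bm{I}\prec 0$ when $\rho>\rho_0$), then writes an arbitrary $\bm{X}\in\Delta_k^N$ as the multilinear convex combination $\bm{X}=\sum_{i_1,\dots,i_N}\bigl(\prod_j [\bm{X}]_{i_j,j}\bigr)[\bm{e}_{i_1},\dots,\bm{e}_{i_N}]$ and applies Jensen's inequality to the whole matrix at once, with strictness ruling out non-vertex optima of (P5). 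You instead use a block-coordinate argument: freeze all columns but one, note that the restriction to $\bm{x}_i$ has Hessian $2\bigl([\bm{W}]_{i,i}-\tfrac{\rho}{2}\bigr)\bm{I}\prec 0$ since $[\bm{W}]_{i,i}\le\lambda_{\max}(\bm{W})<\tfrac{\rho}{2}$, and conclude that every column of any minimizer of (P5) must sit at a vertex of $\Delta_k$, because a strictly concave function on a polytope is minimized only at extreme points and a non-optimal column could be strictly improved. You correctly identify that the column coupling through off-diagonal entries of $\bm{W}$ only affects the linear part of each restriction, so it does not threaten per-column strict concavity. Your route is more elementary (no eigendecomposition, no product-of-simplices expansion) and in fact establishes exactness under the weaker threshold $\rho>2\max_i[\bm{W}]_{i,i}$, which is at most $\rho_0$; the paper's route yields as a by-product the global strict concavity of $F$, which is a cleaner structural statement. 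The only place you compress is the final step: to conclude that every (P4)-optimum is (P5)-optimal you need that the optimal value of (P5) is attained (compactness) and, by your first direction, equals the optimal value of (P4); spelling out that one line would make the equivalence of the two optimal solution sets fully explicit. Your aside about the case $\lambda_{\max}(\bm{W})\le 0$ is harmless but unnecessary, since $[\bm{W}]_{i,i}\le\lambda_{\max}(\bm{W})\le 0<\tfrac{\rho}{2}$ already keeps the per-column Hessian strictly negative there.
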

\begin{proof}
    See Appendix \ref{app:1}.
\end{proof}

\begin{remark} Although penalization is often considered a form of relaxation, Theorem \ref{theo:1} rigorously establishes that the proposed framework remains relaxation-free under the condition that the penalty parameter $\rho$ is sufficiently large. Specifically, when $\rho$  exceeds a certain threshold determined by the problem constants, the penalization guarantees equivalence to the original problem. This ensures that the proposed approach maintains the exactness of the original optimization formulation while benefiting from the computational advantages of penalization.
\end{remark}

\subsubsection{Solving the Penalized Problem based on MD}
For a fixed $\rho$, (P5) comprises optimizing a non-convex objective function over the product of probability simplices. The MD algorithm is well-suited for this task. 
In particular, the $m$-th iteration of MD is 
\begin{align}
    \bm{X}^{(m+1)}=\argmin_{\bm{X}\in\Delta_k^N} \big\{ & F(\bm{X}^{(m)};\rho) \notag\\ & +\left\langle\nabla F(\bm{X}^{(m)};\rho),\bm{X}
    -\bm{X}^{(m)}\right\rangle \notag\\
    &+\frac{1}{t_m}B_\omega(\bm{X},  \bm{X}^{(m)}) \big\}, \notag
    \label{for:prox:md} 
\end{align} 
where $t_m$ is the step-size of the $m$-th iteration, $B_\omega$ is the Bregman divergence with distance generating function $\omega$: 
\begin{align}
    B_\omega(\bm{X}, \bm{Y}) = \omega(\bm{X})-\omega(\bm{Y})-\left\langle\nabla\omega(\bm{Y}),\bm{X}-\bm{Y}\right\rangle.
\end{align}
To account for the geometric structure of the probability simplex, we employ the KL divergence as the specific Bregman divergence in our approach. The KL divergence effectively quantifies the disparity between probability vectors while capturing the variations within a distribution. It is defined as
\begin{align}
    \KL(\bm{X},\bm{X}^{(m)})&=\sum_{j=1}^N\sum_{i=1}^k[\bm{X}]_{i,j}\log\frac{[\bm{X}]_{i,j}}{[\bm{X}^{(m)}]_{i,j}} . \label{divergence}
\end{align}
The following proposition indicates that each iteration of MD for solving P5 has a closed-form solution.
\begin{prop}[Closed-form Iteration of MD \cite{beck2017first}]\label{prop:optimal}
   The $m$-th iteration of MD for solving (P5) is given by
       \begin{multline}
               \bm{X}^{(m+1)}
   =\argmin_{\bm{X}\in\Delta_k^N} \bigg\{ \left\langle \bm{X}^{(m)}(2\bm{W}-\rho\bm{I}),\bm{X}\right\rangle  \\
    +\frac{1}{t_m} \KL(\bm{X},\bm{X}^{(m)}) \bigg\}.
    \label{alg:MD}
       \end{multline}
    Equivalently, the update can be computed entrywise via
    \begin{multline}
    [\bm{X}^{(m+1)}]_{i,j}\\
    =\frac{[\bm{X}^{(m)}]_{i,j}\exp(-t_m[\bm{X}^{(m)}(2\bm{W}-\rho\bm{I})]_{i,j})}{\sum_{p=1}^k [\bm{X}^{(m)}]_{p,j}\exp(-t_m[\bm{X}^{(m)}(2\bm{W}-\rho\bm{I})]_{p,j})}.
    \label{KLMDup}
    \end{multline}
\end{prop}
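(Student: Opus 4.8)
The plan is to derive the closed-form update by specializing the generic mirror-descent step to the objective $F(\bm{X};\rho)$ of (P5) and to the KL divergence, then solving the resulting simplex-constrained subproblem via its first-order optimality conditions. First I would compute the gradient of $F(\bm{X};\rho) = \text{Tr}\!\left(\bm{X}(\bm{W}-\tfrac{\rho}{2}\bm{I})\bm{X}^T\right) + \tfrac{N\rho}{2}$. Since $\bm{W}$ is symmetric and $\bm{W}-\tfrac{\rho}{2}\bm{I}$ is symmetric, the quadratic form $\text{Tr}(\bm{X}\bm{M}\bm{X}^T)$ with $\bm{M}$ symmetric has gradient $2\bm{X}\bm{M} = \bm{X}(2\bm{W}-\rho\bm{I})$, and the constant $\tfrac{N\rho}{2}$ contributes nothing. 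Substituting $\nabla F(\bm{X}^{(m)};\rho) = \bm{X}^{(m)}(2\bm{W}-\rho\bm{I})$ into the MD iteration and discarding the terms $F(\bm{X}^{(m)};\rho)$ and $-\langle\nabla F(\bm{X}^{(m)};\rho),\bm{X}^{(m)}\rangle$, which are constant in the minimization variable $\bm{X}$, immediately yields the first displayed form \eqref{alg:MD}, with $B_\omega$ replaced by $\KL$ as declared.

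Next I would establish the entrywise formula \eqref{KLMDup} by solving the subproblem \eqref{alg:MD}. The key observation is that both the linear term $\langle \bm{X}^{(m)}(2\bm{W}-\rho\bm{I}),\bm{X}\rangle = \sum_{j}\sum_i [\bm{X}^{(m)}(2\bm{W}-\rho\bm{I})]_{i,j}[\bm{X}]_{i,j}$ and the KL term \eqref{divergence} decompose as a sum over columns $j$, and the simplex constraint $\bm{X}\in\Delta_k^N$ likewise couples only entries within the same column. Hence the problem separates into $N$ independent subproblems, one per column $j$, each of the form $\min_{\bm{y}\in\Delta_k}\{\langle \bm{g}_j,\bm{y}\rangle + \tfrac{1}{t_m}\sum_i y_i\log(y_i/y^{(m)}_{ij})\}$ where $\bm{g}_j$ is the $j$-th column of $\bm{X}^{(m)}(2\bm{W}-\rho\bm{I})$ and $y^{(m)}_{ij}=[\bm{X}^{(m)}]_{i,j}$. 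For each such subproblem I would form the Lagrangian with a multiplier $\mu_j$ for the equality constraint $\sum_i y_i = 1$, note that the objective is strictly convex in $\bm{y}$ on the relative interior (the $y_i\log y_i$ terms have positive curvature) so the KKT point is the unique global minimizer and the nonnegativity constraints are inactive as long as $y^{(m)}_{ij}>0$, set the partial derivative $[\bm{g}_j]_i + \tfrac{1}{t_m}(\log(y_i/y^{(m)}_{ij})+1) + \mu_j = 0$ to zero, and solve for $y_i$ to get $y_i \propto y^{(m)}_{ij}\exp(-t_m[\bm{g}_j]_i)$. Normalizing over $i$ to enforce $\sum_i y_i = 1$ eliminates $\mu_j$ and produces exactly \eqref{KLMDup}.

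The main obstacle, though it is minor, is the technical point that the KKT argument requires all iterates to have strictly positive entries so that the logarithms and the domain of the Bregman divergence are well-defined and the simplex-boundary constraints $y_i\ge 0$ are never active; this is precisely what the multiplicative exponentiated-gradient form of the update guarantees inductively, provided the initialization $\bm{X}^{(0)}$ lies in the relative interior of $\Delta_k^N$. I would state this as a standing assumption (or note it follows from $\bm{X}^{(0)}$ having positive entries), after which the derivation is the standard entropic-mirror-descent / multiplicative-weights calculation and requires no further subtlety. Since Proposition \ref{prop:optimal} is attributed to \cite{beck2017first}, the write-up can be brief: present the gradient computation, invoke the column-separability, and cite the closed-form solution of the entropy-regularized linear program over the simplex, which is a textbook result.
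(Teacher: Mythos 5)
Your proposal is correct. Note that the paper does not actually prove Proposition~\ref{prop:optimal} — it simply cites \cite{beck2017first} — and your derivation (gradient $\nabla F(\bm{X};\rho)=\bm{X}(2\bm{W}-\rho\bm{I})$, dropping constants to get \eqref{alg:MD}, column-wise separation of both the linear term and the KL divergence over the product of simplices, and the Lagrangian/KKT solution of each entropy-regularized linear subproblem) is exactly the standard entropic mirror-descent argument that the citation stands in for, consistent with the gradient formula the paper uses in its proof of Theorem~\ref{thm:ConvMD}. Your remark on needing strictly positive iterates is also well taken and is precisely why the paper's pre-processing step initializes away from the boundary of $\Delta_k^N$.
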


\subsubsection{Convergence Analysis}

MD has been extensively studied in stochastic and non-smooth optimization problems~\cite{zhang2018convergence, zhou2017stochastic, zhou2017mirror}. While these studies provide valuable insights into the convergence behavior of MD in more general and challenging scenarios, we focus on its specific application for solving the penalized problem (P5) as formulated in (\ref{KLMDup}). 

Before delving into the theoretical underpinnings of the proposed MD, we will first establish fundamental definitions to characterize a proper solution to problem (P5).

\begin{defn}[First-order Stationary Point]
    For problem (P5), $\bm{X}^*\in \Delta_k^N$ is a stationary point if and only if $\left\langle \bm{X}^*(2\bm{W}-\rho \bm{I}), \bm{X}-\bm{X}^*\right\rangle\ge 0$ for all $\bm{X}\in \Delta_k^N$. 
    \label{defn:station}
\end{defn}

Then, we can derive the overall convergence analysis of the MD algorithm (\ref{alg:MD}) for solving the penalized problem P5 in the following Theorem \ref{thm:ConvMD}.

\begin{theorem}[Convergence of the MD] 
For step size $t_m=\frac{1}{L+m+1}$, the sequence $\{\bm{X}^{(m)}\}_{m=0}^{T-1}$ generated by the MD algorithm (\ref{alg:MD}) converges to a stationary point of P5 at a rate $T\ge \sqrt{\frac{2\Vert\bm{W}\Vert_{1,1}+\rho N}{{\epsilon_1}}}$ with respect to the stopping criterion $ \KL  (\bm{X}^{(m+1)},\bm{X}^{(m)}) \le {\epsilon_1}$, where $T$ is the number of iterations.
\label{thm:ConvMD}
\end{theorem}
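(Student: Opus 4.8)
The plan is to run the standard single‑loop analysis of nonconvex mirror descent, exploiting two facts about problem (P5): (i) $F(\,\cdot\,;\rho)$ is a \emph{quadratic}, so its gradient $\nabla F(\bm{X};\rho)=\bm{X}(2\bm{W}-\rho\bm{I})$ is affine and its second‑order expansion is exact; and (ii) $F(\,\cdot\,;\rho)$ is smooth \emph{relative to} the negative‑entropy potential on $\Delta_k^N$, with relative‑smoothness constant $L:=2\|\bm{W}\|_{1,1}$, which is precisely the $L$ appearing in the step size $t_m=\tfrac{1}{L+m+1}$. I would first record the consequence of the mirror‑descent subproblem being solved exactly: since $\bm{X}^{(m+1)}$ minimizes $\langle\nabla F(\bm{X}^{(m)};\rho),\bm{X}\rangle+\tfrac1{t_m}\KL(\bm{X},\bm{X}^{(m)})$ over $\Delta_k^N$ (this is exactly the update in Proposition~\ref{prop:optimal}), its first‑order optimality condition together with the three‑point identity for the KL‑Bregman divergence gives, for every $\bm{X}\in\Delta_k^N$,
\[
\langle\nabla F(\bm{X}^{(m)};\rho),\bm{X}^{(m+1)}-\bm{X}\rangle\le\tfrac1{t_m}\big(\KL(\bm{X},\bm{X}^{(m)})-\KL(\bm{X},\bm{X}^{(m+1)})-\KL(\bm{X}^{(m+1)},\bm{X}^{(m)})\big),
\]
and taking $\bm{X}=\bm{X}^{(m)}$ yields $\langle\nabla F(\bm{X}^{(m)};\rho),\bm{X}^{(m+1)}-\bm{X}^{(m)}\rangle\le-\tfrac1{t_m}\KL(\bm{X}^{(m+1)},\bm{X}^{(m)})$.

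The core technical step is the descent estimate. Writing $\bm{D}=\bm{X}^{(m+1)}-\bm{X}^{(m)}$ with columns $\bm{d}_1,\dots,\bm{d}_N\in\mathbb{R}^k$, exactness of the Taylor expansion of the quadratic gives
\[
F(\bm{X}^{(m+1)};\rho)=F(\bm{X}^{(m)};\rho)+\langle\nabla F(\bm{X}^{(m)};\rho),\bm{D}\rangle+\text{Tr}\!\big(\bm{D}(\bm{W}-\tfrac\rho2\bm{I})\bm{D}^T\big).
\]
I would bound the remainder $\text{Tr}(\bm{D}(\bm{W}-\tfrac\rho2\bm{I})\bm{D}^T)=\sum_{i,j}[\bm{W}]_{i,j}\bm{d}_i^T\bm{d}_j-\tfrac\rho2\sum_i\|\bm{d}_i\|_2^2$ by discarding the nonpositive last term, applying Cauchy–Schwarz together with $[\bm{W}]_{i,j}\ge0$ to get $\sum_{i,j}[\bm{W}]_{i,j}\bm{d}_i^T\bm{d}_j\le\sum_i\|\bm{d}_i\|_2^2\sum_j[\bm{W}]_{i,j}\le\|\bm{W}\|_{1,1}\sum_i\|\bm{d}_i\|_2^2$, and finally invoking Pinsker's inequality $\|\bm{d}_i\|_2^2\le\|\bm{d}_i\|_1^2\le2\,\KL(\bm{x}_i^{(m+1)},\bm{x}_i^{(m)})$. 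This gives the relative‑smoothness bound $\text{Tr}(\bm{D}(\bm{W}-\tfrac\rho2\bm{I})\bm{D}^T)\le2\|\bm{W}\|_{1,1}\,\KL(\bm{X}^{(m+1)},\bm{X}^{(m)})$.

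Combining the two steps with $L=2\|\bm{W}\|_{1,1}$ produces the sufficient‑decrease inequality
\[
F(\bm{X}^{(m)};\rho)-F(\bm{X}^{(m+1)};\rho)\ge\Big(\tfrac1{t_m}-2\|\bm{W}\|_{1,1}\Big)\KL(\bm{X}^{(m+1)},\bm{X}^{(m)})=(m+1)\,\KL(\bm{X}^{(m+1)},\bm{X}^{(m)}),
\]
where the last equality is exactly why the step size is chosen as $\tfrac1{L+m+1}$. Summing over $m=0,\dots,T-1$ telescopes the left side to $F(\bm{X}^{(0)};\rho)-F(\bm{X}^{(T)};\rho)$, and a one‑line estimate over $\Delta_k^N$ bounds the range of $F$: $\text{Tr}(\bm{X}\bm{W}\bm{X}^T)\ge0$ and $N-\sum_i\|\bm{x}_i\|_2^2\ge0$ give $F(\bm{X};\rho)\ge0$, while $\bm{x}_i^T\bm{x}_j\le1$ and $\|\bm{x}_i\|_2^2\ge0$ give $F(\bm{X};\rho)\le\|\bm{W}\|_{1,1}+\tfrac{\rho N}{2}$. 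Hence $\sum_{m=0}^{T-1}(m+1)\KL(\bm{X}^{(m+1)},\bm{X}^{(m)})\le\|\bm{W}\|_{1,1}+\tfrac{\rho N}{2}$, and since $\sum_{m=0}^{T-1}(m+1)=\tfrac{T(T+1)}{2}\ge\tfrac{T^2}{2}$ we obtain $\min_{0\le m\le T-1}\KL(\bm{X}^{(m+1)},\bm{X}^{(m)})\le\tfrac{2\|\bm{W}\|_{1,1}+\rho N}{T^2}$, so the stopping criterion is met once $T\ge\sqrt{(2\|\bm{W}\|_{1,1}+\rho N)/\epsilon_1}$. To justify that this criterion is a genuine surrogate for stationarity, I would note that $\KL(\bm{X}^{(m+1)},\bm{X}^{(m)})=0$ forces $\bm{X}^{(m+1)}=\bm{X}^{(m)}$, at which point the three‑point inequality of Step~1 collapses to $\langle\bm{X}^{(m)}(2\bm{W}-\rho\bm{I}),\bm{X}-\bm{X}^{(m)}\rangle\ge0$ for all $\bm{X}\in\Delta_k^N$, i.e.\ stationarity in the sense of Definition~\ref{defn:station}.

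The main obstacle, and the only nonroutine ingredient, is the relative‑smoothness bound in the second paragraph: one must measure the quadratic remainder against the KL geometry (via Pinsker) rather than the Euclidean one, obtaining the constant $2\|\bm{W}\|_{1,1}$ tied to $\ell_1$ rather than an $\ell_2$ operator‑norm constant. This is what makes the coefficient in the telescoped sum equal $m+1$, so that the weight mass $\sum_{m<T}(m+1)=\Theta(T^2)$ yields the $O(1/\sqrt{\epsilon_1})$ iteration complexity rather than the $O(1/\epsilon_1)$ a constant step size would give; a secondary point of care is stating precisely in what sense the stopping criterion certifies convergence to a stationary point, as sketched above.
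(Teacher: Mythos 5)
Your proof is correct and follows the same overall architecture as the paper's: a sufficient-decrease inequality from the exact solution of the mirror-descent subproblem, telescoping with the weights $\tfrac{1}{t_m}-L=m+1$, a bound on the range of $F$ over $\Delta_k^N$ giving the numerator $2\Vert\bm{W}\Vert_{1,1}+\rho N$, and the observation that a vanishing KL step certifies stationarity via the subproblem's first-order optimality condition (the paper isolates this last point as a separate lemma).

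The one genuine difference is how the relative-smoothness constant is obtained. The paper takes $L=\Vert 2\bm{W}-\rho\bm{I}\Vert_F$ from the Frobenius-norm Lipschitz continuity of the (affine) gradient and then converts $\tfrac{L}{2}\Vert\bm{Y}-\bm{X}\Vert_F^2$ into $L\cdot\KL(\bm{Y},\bm{X})$ via the $1$-strong convexity of negative entropy; you instead expand the quadratic exactly and bound the remainder directly against the KL geometry through an $\ell_1$-weighted Cauchy--Schwarz plus Pinsker argument, yielding $L=2\Vert\bm{W}\Vert_{1,1}$. Both are valid relative-smoothness constants and both produce the identical final rate, but they are different numbers, and your $L$ has the pleasant feature of being independent of $\rho$ (the paper's grows like $\rho\sqrt{N}$ along the penalty continuation). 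The only caveat is that the theorem's step size $t_m=\tfrac{1}{L+m+1}$ implicitly refers to the paper's $L$; with that step size your smoothness bound gives a decrease coefficient $\Vert 2\bm{W}-\rho\bm{I}\Vert_F+m+1-2\Vert\bm{W}\Vert_{1,1}$, which need not equal $m+1$ and can even be negative for small $m$ when $\rho$ is small. So your argument proves the stated rate for the step size built from \emph{your} $L$; to match the paper's step size verbatim you would either adopt their Frobenius-norm constant or take $L$ to be the maximum of the two. This is a bookkeeping issue, not a gap in the mathematics.
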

\begin{proof}
See Appendix \ref{app:thmConvMD}.
\end{proof}

\begin{remark}
Theorem \ref{thm:ConvMD} implies that the convergence rate, as measured by the stopping criterion $ \KL  (\bm{X}^{(m+1)},\bm{X}^{(m)}) \leq {\epsilon_1}$, achieves an efficient convergence rate of $\mathcal{O}\left(\sqrt{\frac{1}{{\epsilon_1}}}\right)$. We choose the \blue{diminishing} step size $t_m=\frac{1}{L+m+1}$, \blue{where the smoothness constant $L$ accounts for the curvature of the objective in early exploration, and the $O(1/m)$ decay mitigates oscillation in later stages.}
Moreover, as the $\ell_{1,1}$-norm of matrix $\bm{W}$, the penalized parameter $\rho$, and the number of cells $N$ increase while keeping ${\epsilon_1}$ fixed, the number of iterations required for convergence will also increase. Upon obtaining MD for solving the penalized problem P5 in (\ref{opt:modk:4:condi}), we can leverage the proposed PMD method to gradually increase $\rho$  in P5 and solve the corresponding sequence of penalized problems via MD to achieve a robust solution of P4. 
\end{remark}

\subsection{Empirical Analysis of MD}

In this section, we empirically investigate the convergence rate of the MD algorithm, as detailed in Theorem \ref{thm:ConvMD}.
We synthetically construct an instance of the matrix $\bm{W}$ by first generating a random non-negative matrix $\bm{W}_0$ whose entries are i.i.d.~uniformly distributed over $[0,b]$ for some $b>0$, and then setting symmetrization $\bm{W} = (\bm{W}_0 + \bm{W}_0^T)/2$.

We consider the Min-$3$-Partition problem ($k=3$), with $b\in\{0.5, 1, 5\}$, and $N\in\{5, 20, 100\}$. We set {$\epsilon_1=10^{-4}$} and $\rho\in\{10^{-8},1\}$. Figure \ref{fig:cvgs} plots the difference between consecutive iterates $\KL(\bm{X}^{(m)},\bm{X}^{(m-1)})$ against the iteration count $m$. Each curve exhibits a general downward trend until the stopping criterion is reached. 

Specifically, in the original scenario with $N=20$, $b=1$, and $\rho=10^{-8}$, convergence is achieved in $145$ iterations. Decreasing the parameter $b$ decreases the iteration count. Similarly, decreasing $N$ leads to convergence in fewer iterations. Changes in $N$ have a more pronounced effect on the iteration count than changes in $b$, reflecting the linear convergence rate with respect to $N$ and the square root rate with respect to $b$. Increasing the penalization parameter $\rho$ slightly increases the iteration count. These trends are consistent with the analysis in Theorem \ref{thm:ConvMD}.

\begin{figure}[t]%
    \centering
    \includegraphics[width=\linewidth]{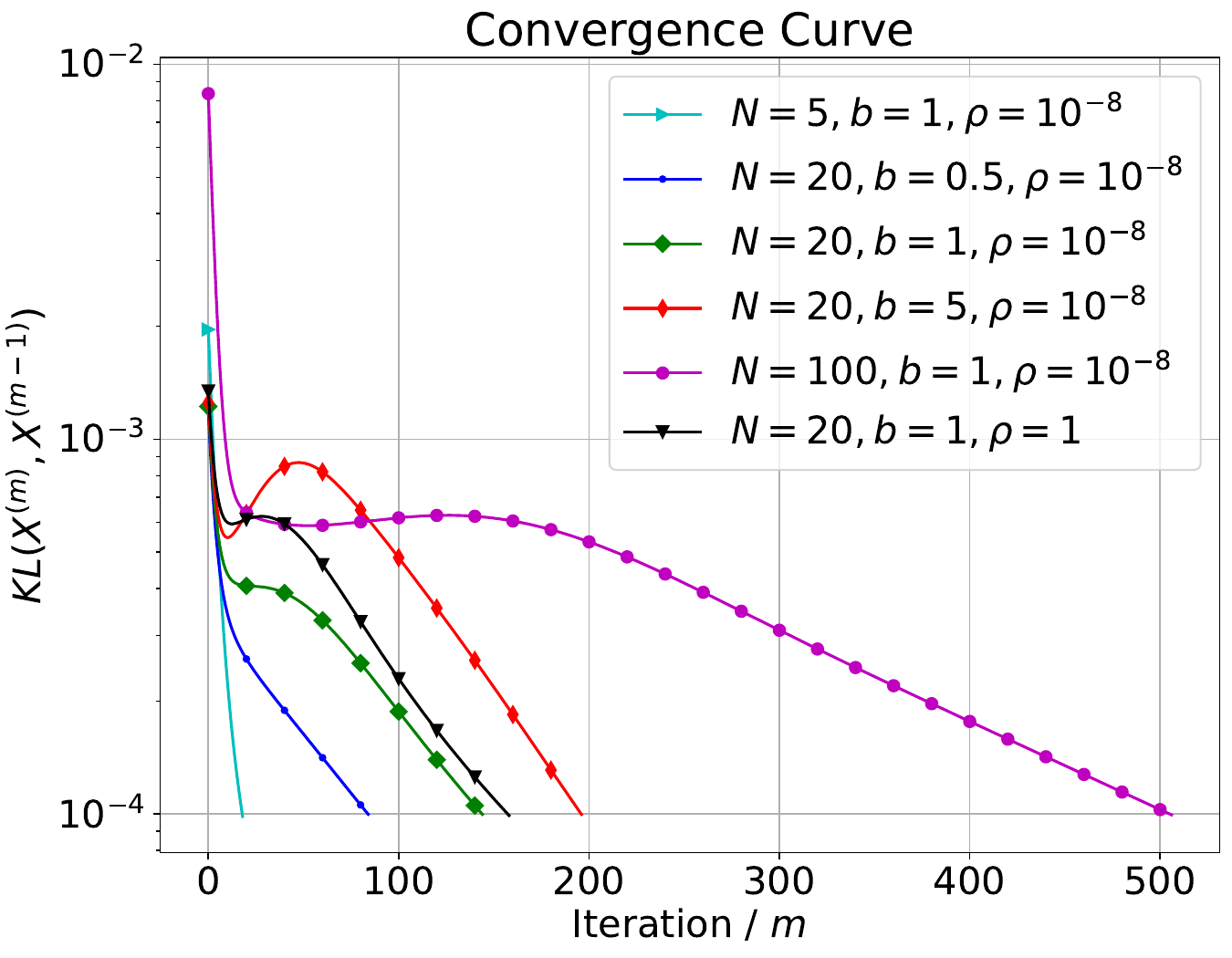}  
    \caption{Convergence curve for mod-$3$ optimization problem of MD with KL divergence.}
    \label{fig:cvgs}
\end{figure}

\subsection{Pre-processing and Post-processing of the Proposed PMD Algorithm}
To enhance the efficiency of the PMD algorithm, we further propose the following pre-processing and post-processing procedures.

\blue{In pre-processing, to ensure that the gradient of the distance-generating function is well-defined, we avoid initializing at boundary points of the simplex $\Delta_k^N$. This is achieved via a random initialization strategy for the starting point $\bm{X}^{0} = [\bm{x}^{0}_{1}, \cdots, \bm{x}^{0}_{N}]$, where each column $\bm{x}_j^0$ is generated as follows:}
\begin{equation}
\begin{aligned} \label{alg:ini}
    &\textbf{Random Initialization:} \\
    &\qquad\text{Step 1: Generate}\quad\bm{z}\overset{\text{i.i.d.s}}{\sim} U^{k\times 1}[0, 1],\\
    &\qquad \text{Step 2: Normalize}\quad\bm{x}^{0}_j= \frac{\bm{z}}{\Vert\bm{z}\Vert_1}.
\end{aligned}    
\end{equation}

In our experiments, we observe that PMD almost always returns a feasible solution to (P4), and thus a feasible solution to the Min-$k$-Partition problem. In the post-processing procedure, we propose a local search method in Algorithm \ref{alg:1} to improve the PMD solution by searching over cell pairs. It introduces a mild additional $\mathcal{O}(N^2)$ complexity.

\begin{algorithm}
\KwData{The solution obtained  by PMD $\bm{X}$}
\KwResult{The refined solution $\widetilde{\bm{X}}$ }

    \For{$i=1\text{ to }N$}{
        \For{$j=i+1\text{ to }N$}{
            \text{Calculate $(k_1^*,k_2^*) = \argmin_{(k_1,k_2)}$} \[F([\bm{x}_1,\cdots, \underbrace{\bm{e}_{k_1}}_{i\text{-th column}},\cdots,\underbrace{\bm{e}_{k_2}}_{j\text{-th column}},\cdots,\bm{x}_{N}];0)\]
            
            \text{\bf{Set}} $(\bm{x}_{i},\bm{x}_j)=(\bm{e}_{k_1^*}, \bm{e}_{k_2^*})$}
        }
    
\KwRet{$\widetilde{\bm{X}}=\bm{X}$}
\caption{Local Search}
\label{alg:1}
\end{algorithm}

\subsection{Overall Implementation of the PMD algorithm \label{sec:imple}}
The overall implementation of the PMD for relaxation-free Min-$k$-Partition is shown in Algorithm \ref{alg:3}. The PMD iteration consists of an outer loop and an inner loop, where the outer loop gradually increases the penalization parameter $\rho$ in (P5), and the inner loop solves (P5) via the update \eqref{KLMDup}. In the update \eqref{KLMDup}, we choose $t_m = 1/(L+m+1)$ according to Theorem \ref{thm:ConvMD}. The termination criterion for the outer loop is
\[
\frac{\Vert\bm{Q}\bm{X}^\tau(\bm{Q}\bm{X}^\tau)^T-\bm{I}_k\Vert_F}{k^2}\le \epsilon_2,
\]
where $\bm{Q}$ is a diagonal matrix that scales the rows of $\bm{X}^{\tau}$ so that they have unit 2-norms. It is based on the fact that $\bm{X}\in\Delta_k^N$ is feasible for (P4) only if it has orthogonal rows. This criterion is also used in \cite{wang2021clustering}.

\begin{algorithm}
\KwData{The interference matrix $\bm{W}$, the initial penalty parameter $\rho_0$, the penalty increasing parameter $\gamma$, tolerance parameters $\epsilon_1,\epsilon_2\ge 0$}
\KwResult{Solution $\widetilde {\bm{r}}$ } 

\text{}

\texttt{// Random Initialization}

\For{$j=1$ to $N$ independently }{
   \text{Generate $\bm{x}^{0}_j$ by Random Initialization (\ref{alg:ini})}
}
\text{\bf{Set}} $\bm{X}^{0}=[\bm{x}_1^0,\ldots,\bm{x}_N^0]$;

\text{}

\texttt{// PMD Iteration}

\text{\bf{Set}} $\tau=0, \rho=\rho_0$

\Repeat{$\Vert\bm{Q}\bm{X}^\tau(\bm{Q}\bm{X}^\tau)^T-\bm{I}_k\Vert_F / k^2 \le \epsilon_2$, where $\bm{Q}$ is a diagonal matrix such that the rows of $\bm{Q}\bm{X}^{\tau}$ have unit $2$-norms}
{
     \text{\bf{Set}} $m=0$, $\bm{X}^{(m)}=\bm{X}^{\tau}$\\
     \Repeat{$\KL(\bm{X}^{(m)},\bm{X}^{(m-1)})\le\epsilon_1$}{
     Update $\bm{X}^{(m+1)}$ by \eqref{KLMDup}

     \text{\bf{Set}} $m = m + 1$}

     \text{\bf{Set}} $\bm{X}^{\tau+1}=\bm{X}^{(m)}$

     \text{\bf{Set}} $\tau = \tau + 1$

     \text{\bf{Update}} $\rho=\gamma\rho$
}

\text{}

\texttt{// Local Search}

\text{\bf{Set} $\widetilde{\bm{X}}=\bm{X}^{\tau}$}

\Repeat{No further improvement}{
    
     $\widetilde{\bm{X}}= \textbf{Local Search}(\widetilde{\bm{X}})$
}

\text{{Convert} $\widetilde{\bm{X}}$ into $\widetilde{\bm{r}}$ via: $\widetilde{r}_i = j$ if $\widetilde{\bm{x}}_i = \bm{e}_j$, $j=1,...,k$}.

\KwRet{$\widetilde {\bm{r}}$}
\caption{Proposed PMD Method for Relaxation-free Min-$k$-Partition}
\label{alg:3}
\end{algorithm}

\section{Extension: \blue{Optimizing PCIs} of a Subset of Cells \label{sec:ext}}

In our discussion so far, we have taken the approach of assigning the PCI for a given network from scratch. In practice, however, practitioners usually have a (possibly handcrafted and imperfect) PCI assignment beforehand and seek to improve it by making changes to the PCIs of only \emph{a subset} of the cells. Such a partial update can be more desirable than directly updating the PCI of the entire network because the latter may drastically change the stability of the whole 5G network. In this section, we show that our graph partitioning approach can be naturally extended to this setting.

Mathematically, we are given an existing PCI assignment $\textbf{PCI}^{(0)}=(\text{PCI}_1^{(0)},...,\text{PCI}_N^{(0)})$ and a subset of \emph{changeable cells} $\mathcal{S}\subseteq\mathcal{V}$. Let $\mathcal{U}=\mathcal{V}\backslash\mathcal{S}$, which denotes the set of \emph{unchangeable cells}. Our goal is to optimize mod-$3$ and mod-$30$ interference and eliminate collisions and confusions, by only updating the PCIs of the changeable cells.

To adapt our approach to this setting, we apply the decomposition framework in Section \ref{sec:decomposition} to only the changeable cells, where the PCI assignment $(\text{PCI}_i)_{i\in\mathcal{S}}$ for the changeable cells is decomposed into mod-$3$ value assignment $(\widetilde{r}_i^{(3)})_{i\in\mathcal{S}}$, mod-$10$ value assignment $(\widetilde{r}_i^{(10)})_{i\in\mathcal{S}}$, and quotient assignment $(q_i)_{i\in\mathcal{S}}$. At the same time, we need to take into account the interaction between changeable and unchangeable cells. We now explain how this can be achieved.

\subsection{Mod-$3$ Value and Mod-$10$ Value Assignment}

To assign mod-$3$ and mod-$10$ values (sub-problems 1.1 and 1.2 in Section \ref{sec:mod-3 and mod-10}), we consider a variant of the Min-$k$-Partition problem where the optimization variables are restricted to the changeable cells:
\begin{align*}
    \min_{(r_i)_{i\in\mathcal{S}}} &\quad \sum_{i,j\in\mathcal{S}} [\bm{W}]_{i,j} \cdot \mathbbm{1} \{ r_i = r_j \} \\ &\quad + 2 \sum_{i\in\mathcal{S},\,j\in\mathcal{U}} [\bm{W}]_{i,j} \cdot \mathbbm{1} \{ r_i = r_j\} \\ &\quad +
    \sum_{i,j\in\mathcal{U}} [\bm{W}]_{i,j} \cdot \mathbbm{1} \{ r_i = r_j\}, \\
    \text{s.t.}&\quad r_i\in\mathbb{Z}_k,\quad\forall i\in\mathcal{S}.
\end{align*}
Here $(r_j)_{j\in\mathcal{U}}$ are pre-determined from the PCIs of the unchangeable cells. Moreover, the last term in the objective function corresponds to the interference between unchangeable cells and is thus a constant that can be ignored. 

To modify the PMD algorithm for Min-$k$-Partition in Section \ref{sec:prop}, we introduce block matrix notations $\bm{W}_{\mathcal{S},\mathcal{S}}=[\bm{W}_{i,j}]_{i\in\mathcal{S},j\in\mathcal{S}}$ and $\bm{W}_{\mathcal{S},\mathcal{U}}=[\bm{W}_{i,j}]_{i\in\mathcal{S},j\in\mathcal{U}}$. Then the problem (P5) for the PMD algorithm becomes
\begin{align*}
    \min_{\bm{X}_{\mathcal{S}}}&\quad \text{Tr}\left(\bm{X}_{\mathcal{S}} \left(\bm{W}_{\mathcal{S},\mathcal{S}} - \frac{\rho}{2} \bm{I} \right) \bm{X}_{\mathcal{S}}^T\right) + 2\text{Tr}\left(\bm{X}_{\mathcal{S}}\bm{W}_{\mathcal{S},\mathcal{U}}\bm{X}_{\mathcal{U}}^T \right) \\
    \text{s.t.} &\quad \bm{X}_{\mathcal{S}} \in \Delta_k^{|\mathcal{S}|},
\end{align*}
where $\bm{X}_{\mathcal{U}}\in\{\bm{e}_1,...,\bm{e}_k\}^{|\mathcal{U}|}$ can be deduced from the existing PCI assignment. The corresponding PMD algorithm can then be easily derived.

\subsection{Quotient Assignment}

Finally, to assign quotient values (Section \ref{sec:quotient}), we solve a variant of the graph coloring problem where an existing coloring (quotient assignment) is given, and the goal is to update the colors (quotients) of changeable cells to minimize collisions and confusions. Since we cannot eliminate the collisions and confusions between two unchangeable cells, we will focus only on cell pairs that have at least one changeable cells. Thus, the problem can be formulated as
\begin{equation}
    \begin{aligned}
    \text{find} &\quad q_i,\quad i\in\mathcal{S}, \\
    \text{s.t.} &\quad q_i\neq q_j,\quad\forall (i,j) \in \mathcal{E} \cap \big[(\mathcal{S}\times\mathcal{S}) \cup (\mathcal{S}\times\mathcal{U})\big], \\
    &\quad q_i\in\mathbb{Z}_+,\quad \forall i\in\mathcal{S}.
\end{aligned}
\label{eqn-partial coloring}
\end{equation}
Here again, $(q_j)_{j\in\mathcal{U}}$ can be computed from the existing PCI assignment, and we ignore the PCI range constraint as in Section \ref{sec:quotient}. We have also slightly abused notation, as quotient assignment is done over each individual mod-$30$ cluster instead of the whole set of changeable cells.

We now show that the problem \eqref{eqn-partial coloring} can, in fact, be reduced to the graph coloring problem. Let $c_1,...,c_A$ be the colors (quotients) used by the unchangeable cells. We will replace the unchangeable cells with a clique consisting of $A$ vertices $v_1,...,v_A$, where $v_a$ represents the used color $c_a$. Then we add an edge between a changeable cell $i$ and $v_a$ whenever some unchangeable cell $j$ has used the color $c_a$ and $(i,j)\in\mathcal{E}$. 
This leads to a graph $\mathcal{H}$ with vertices $\mathcal{S}\cup\{v_1,...,v_A\}$ and edges $\mathcal{F}=\mathcal{F}_1\cup\mathcal{F}_2\cup\mathcal{F}_3$, where
\begin{align*}
    & \mathcal{F}_1 = \mathcal{E} \cap (\mathcal{S} \times \mathcal{S}), \\
    & \mathcal{F}_2 = \{(i,v_a):i\in\mathcal{S}, \text{ and } \exists j\in\mathcal{U} \text{ s.t.~} (i,j)\in\mathcal{E},\, q_j=c_a\}, \\
    & \mathcal{F}_3 = \{(v_a,v_{a'}) : a\neq a'\}.
\end{align*}

We will now see that performing graph coloring over this new graph $\mathcal{H}$ yields a solution to \eqref{eqn-partial coloring}. Thanks to the clique structure $\mathcal{F}_3$, the vertices $v_1,...,v_A$ will receive distinct colors, which can be made $c_1,...,c_A$ after relabeling. Moreover, $\mathcal{F}_1$ and $\mathcal{F}_2$ force the color assignment to eliminate collisions and confusions between pairs $(i,j)\in\mathcal{S}\times\mathcal{S}$, and between pairs $(i,j)\in(\mathcal{S},\mathcal{U})$, respectively. Therefore, this gives a solution to \eqref{eqn-partial coloring}.

\section{\blue{Numerical} Experiments}\label{sec:exp}

This section presents experiments to demonstrate the performance and efficiency of our PCI assignment approach compared with state-of-the-art baselines. All experiments are implemented in Python on a machine equipped with an Intel i9-12900K core.

\subsection{List of Baseline Methods}

We call our proposed method of graph partitioning with PMD by \emph{GP-PMD}. The baseline methods for comparison are described as follows. We select three representative methods from three main categories for solving the PCI problem.

\textbf{Greedy Graph Coloring Method (GGC)}~\cite{bandh2009graph}: This method employs a greedy coloring algorithm to address collisions and confusions. In the simulation, we utilize the codes provided from \cite{coloring}.

\textbf{Greedy Search for BQP (BQP)}~\cite{gui2018pci2}: We employed BQP methods to handle collisions, confusions, mod-$3$ interference, and mod-$30$ interference. We add the mod-$30$ interference into the objective function and set the hyperparameters to those used in the experiment {\em 1-1-1  Greedy} in  \cite{gui2018pci2}.

\textbf{Genetic Method (Genetic)}~\cite{panxing2016pci}: We apply a genetic method for comparison. The hyperparameter settings include a population size of $30$ and $200$ iterations, and the probabilities of crossover, mutation, and selection are set to $0.77$, $0.3$, and $0.8$, respectively.

\blue{
\textbf{Memetic Algorithm (MA-NWS)}~\cite{andrade2022physical}: We adopt the non-warm-start version as recommended by~\cite{andrade2022physical}, which relies solely on the core evolutionary mechanism of the Memetic Algorithm. All hyperparameters are set according to the best-performing configuration reported in~\cite{andrade2015biased}.
} 

\textbf{Graph Partitioning with Penalized Gradient Projection (GP-PGP)}: To evaluate the performance of the PMD algorithm, we consider an alternative benchmark that uses the same graph partitioning framework as GP-PMD but applies penalized gradient projection (PGP) to solve sub-problems 1.1 and 1.2. \blue{It also uses the local search procedure in Algorithm \ref{alg:1} to improve the PGP solution. The stopping criterion for GP-PGP is set by the Euclidean Bregman divergence: $\frac{1}{2}\Vert\bm{X}^{(m+1)} - \bm{X}^{(m)}\Vert_F^2 \le \epsilon_1$. All hyperparameters are the same as those used in GP-PMD.

\textbf{Graph Partitioning with Local Search (GP-LS)}: To evaluate the performance of the PMD algorithm, we include an additional baseline. It uses the same graph partitioning framework as GP-PMD, but directly generates a random solution when solving sub-problems 1.1 and 1.2, and then applies the local search procedure in Algorithm \ref{alg:1}.
}

\blue{

\subsection{Synthetic Data Experiments}

In this section, we test our method on synthetically generated networks.

\subsubsection{Experiment Setup}

\begin{table}[t]
    \centering
    \caption{Structural statistics of the random geometric graphs used in our experiments (averaged over 30 instances per setting).}
    \begin{tabular}{ccccc}
        \midrule
        $(N, r)$ & Density & Max Clique & Avg. Degree & Clustering Coef. \\
        \midrule
        $(100,\ 0.1)$ & $0.0279$ & $5.00$  & $5.075$   & $0.719$ \\
        $(100,\ 0.2)$ & $0.1044$ & $9.67$  & $26.333$  & $0.717$ \\
        $(100,\ 0.3)$ & $0.2141$ & $15.17$ & $54.299$  & $0.787$ \\
        $(200,\ 0.1)$ & $0.0285$ & $7.43$  & $13.050$  & $0.706$ \\
        $(200,\ 0.2)$ & $0.1049$ & $15.40$ & $58.734$  & $0.715$ \\
        $(200,\ 0.3)$ & $0.2156$ & $25.90$ & $114.909$ & $0.796$ \\
        $(500,\ 0.1)$ & $0.0289$ & $13.43$ & $40.993$  & $0.655$ \\
        $(500,\ 0.2)$ & $0.1055$ & $30.47$ & $159.088$ & $0.721$ \\
        $(500,\ 0.3)$ & $0.2159$ & $55.40$ & $298.456$ & $0.802$ \\
        \midrule
    \end{tabular}
    \label{tab:rgg_stats}
\end{table}

We generate synthetic networks using the random geometric graph model $\mathrm{RGG}(N, r)$, where $N$ denotes the number of cells and $r$ is called the connection radius. Cells are uniformly distributed in a 2D unit square, and cell $j$ is treated as a neighbor of node $i$ if their Euclidean distance is at most $r$. The interference matrix $\bm{W}$ is defined by setting the interference inversely proportional to the Euclidean distance, simulating spatial interference in wireless networks. This model is widely used to emulate realistic wireless topologies~\cite{Gowaikar2006, Haenggi2009}.

To evaluate algorithm performance under diverse topological conditions, we systematically vary both the network size and spatial density. Specifically, we consider network sizes $N \in \{100, 200, 500\}$, and connection radii $r \in \{0.1, 0.2, 0.3\}$ which correspond to sparse, medium, and dense graphs, respectively. For each configuration $\mathrm{RGG}(N, r)$, we generate 30 independent instances, which yields a total of 270 graph instances. For each type of graph, we compute standard structural metrics, including edge density, maximum clique size, average degree, and clustering coefficient. These statistics are summarized in Table \ref{tab:rgg_stats} and allow us to interpret algorithm performance under varying structural complexities. 
The hyperparameters of GP-PMD is set as $\rho_0=10^{-8}$, $\gamma=1.1$, $\epsilon_1=10^{-5}$, and $\epsilon_2=10^{-10}$, which is the same configuration as~\cite{wang2021clustering}. We also conduct an ablation study to examine the effects of the hyperparameters on the performance of the algorithm, and the results are provided in Section B of the supplementary material~\cite{qiu2025relaxation}.

\subsubsection{Experiment Results}

\begin{figure*}[t]
    \centering
    \includegraphics[width=\linewidth]{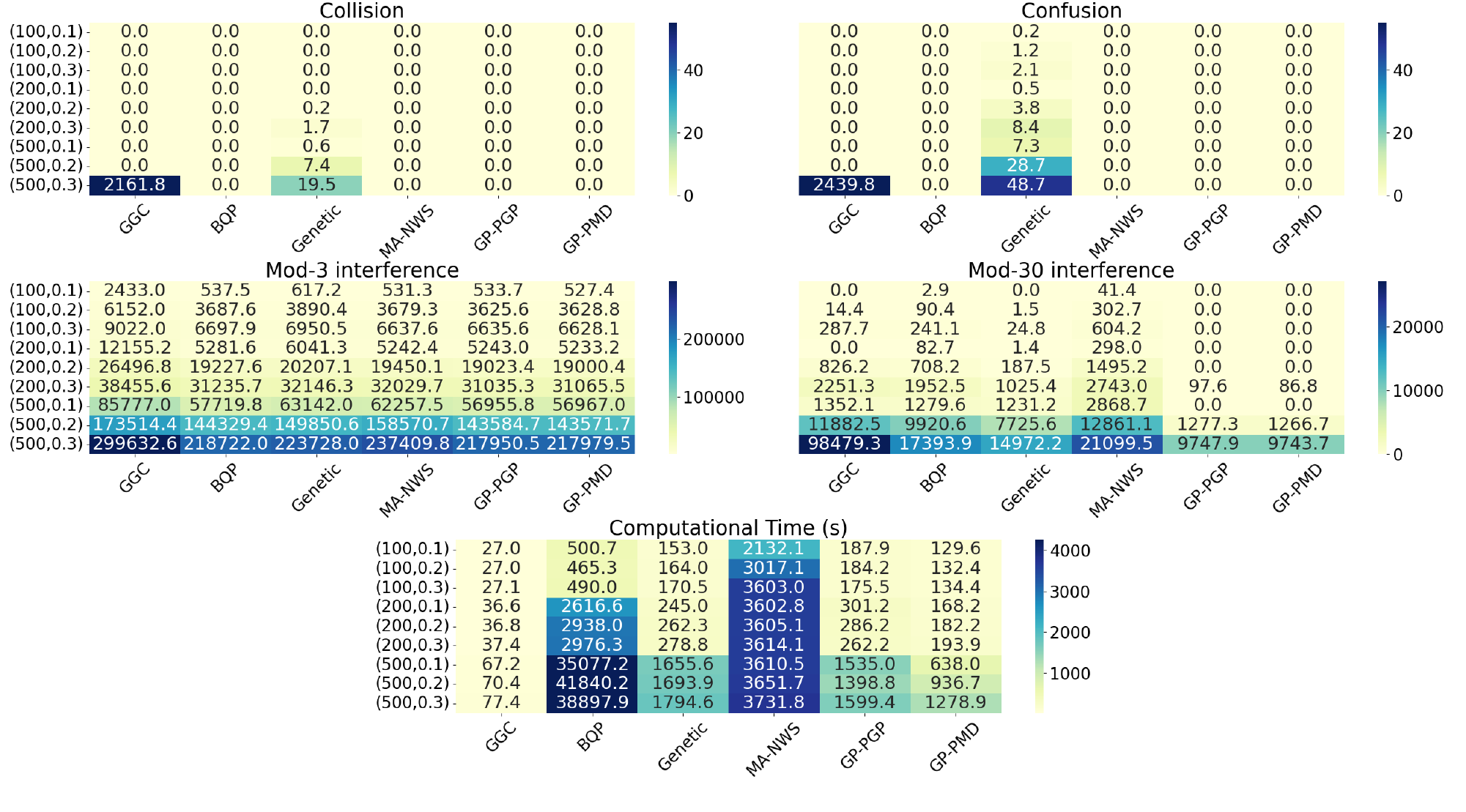}
    \caption{Experiment results on random geometric graphs. 
    }
    \label{fig:heatmap}
\end{figure*}

We compare all baseline algorithms on the random geometric graphs using five evaluation metrics: mod-$3$ interference, mod-$30$ interference, number of collisions, number of confusions, and computational time. The overall performance is visualized in a heatmap (Fig.~\ref{fig:heatmap}). The detailed box-plots are included in Section C of the supplementary material~\cite{qiu2025relaxation}. The key observations are summarized as follows.

GGC is the fastest among all methods (under \SI{80}{s} across all instances), but \blue{fails to reduce} mod-$3$ or mod-$30$ interference. 
BQP eliminates all collisions and confusions across all test instances, and its interference control is generally competitive, though slightly behind MA-NWS on small and medium-sized graphs. However, its computational cost escalates as the graph size increases. 
Genetic achieves moderate performance across all metrics. It reduces mod-$3$ and mod-$30$ interference but fails to eliminate collisions and confusions. Compared to our proposed methods, it exhibits weaker control over mod-$3$ and mod-$30$ interference. While its runtime is substantially shorter than that of BQP and MA-NWS, it remains noticeably longer than ours.
MA-NWS successfully eliminates collisions and confusions and achieves comparable performance on mod-$3$ interference compared with our methods on smaller instances. However, it performs worse than our methods on mod-$30$ interference and has a relatively high runtime. 
GP-PGP and GP-PMD consistently achieve strong performance: both eliminate all collisions and confusions and achieve the lowest interference on mod-$3$ and mod-$30$. While both approaches yield comparable solution quality, GP-PMD is more efficient and requires less computational time. 
Overall, the GP-PMD achieves a compelling trade-off between feasibility, interference suppression, and scalability, making it the most robust and versatile algorithm among the compared baselines. 
Moreover, to assess the statistical significance of our results, we conduct statistical hypothesis tests in Section D of the supplementary materials~\cite{qiu2025relaxation}. 

}

\subsection{\blue{Real-Data Experiments}}

\subsubsection{Experiment Setup}

We \blue{test our method on} real-world data collected in Beijing. The dataset features antennas deployed in a SISO system, focusing on scenarios involving mod-$3$ interference, mod-$30$ interference, collisions, and confusions in simulations.
The raw MR data was extracted from a live downlink network comprising $10,000$ cells, of which 673 are changeable, covering a dense urban area of $53.0859$ square kilometers. This data, gathered through regular measurements by UEs reporting network status and performance, was utilized to generate the sets $\mathcal{E}_1$ and $\mathcal{E}_2$, and the matrix $\bm{W}$.

For collision detection, we include the pair $(i,j)\in\mathcal{E}_1$ if both cells share the same frequency and MR data indicates that cell $j$ is a neighbor of cell $i$. 
In the case of confusion, we include the pair $(i,j)\in\mathcal{E}_2$ if they share the same frequency and there exists another cell $\ell$ for which MR data shows that $i$ and $j$ are both its neighbors. 
For the interference matrix $\bm{W}$, if cell $j$ is a neighboring cell of cell $i$ and shares the same frequency as cell $i$, the entry $[\bm{W}]_{i,j}$ reflects the number of MRs between these two cells.
Furthermore, to assess performance across different \blue{network sizes}, we \blue{select subsets of the main dataset with} varying numbers of cells $N$ and changeable cells $S=|\mathcal{S}|$.
\blue{These subsets consist of cells with consecutive cell IDs,} and have $N$ values of $401$, $1398$, $4396$, and $10,000$, with corresponding $S$ values of $136$, $231$, $368$, and $673$. 
These choices are designed to assess the algorithm's performance across a range of problem scales.

\subsubsection{Experiment Results}

We summarize the experiment results in Table \ref{res:n401}, \ref{res:n1398}, \ref{res:n4396}, and \ref{res:n10000}. \footnote{In these tables,`Coll.' indicates the number of collisions, `Conf.' refers to the number of confusions, `Mod-$3$' denotes the mod-$3$ interference value, 'Mod-$30$' is the mod-$30$ interference value, and 'Time' represents the runtime of the methods. }
The results demonstrate that GP-PMD significantly outperforms all baseline methods in minimizing mod-$3$ and mod-$30$ interference, and resolving collisions and confusions, all while taking the least computational time. 
Specifically, compared to the state-of-the-art methods, GP-PMD reduces mod-$3$ interference by $8.3\%$ and mod-$30$ interference by $99.7\%$, and achieves a $20$ times speedup in runtime on average.
In contrast, GGC fails to address the minimization of mod-$3$ and mod-$30$ interference. Genetic fails to achieve collision-free and confusion-free solutions. Genetic and BQP all produce significantly higher mod-$3$ and mod-$30$ interference and require substantially longer computational time. Moreover, BQP is unable to generate any solution within two weeks for large-scale instances. 
Lastly, the comparison with GP-PGP highlights the superior efficiency and effectiveness of the proposed PMD algorithm in minimizing mod-$3$ and mod-$30$ interference.

\begin{table}[t]
\centering
\begin{threeparttable}
\caption{The evaluation results on the real-world data with $N=401, S=136$.}
\begin{tabular}{c|ccccc}
\hline
Models                   & Coll. & Conf. & Mod-3. & Mod-30. & Time (s) \\ \hline
GGC \cite{coloring, bandh2009graph}           & 0     & 0     & 2887700 & 614093 &  48      \\ \hline
BQP \cite{gui2018pci2} & 0     & 0     & 668280 & 15587 & 2297  \\ 
Genetic \cite{panxing2016pci} & 2     & 2     & 758546 & 17745 & 376   \\
MA-NWS \cite{andrade2022physical} & 0     & 0     & 640415 & 55505 & 422  \\ \hline
GP-LS & 0 &0&690063&1&179\\
GP-PGP & 0     & 0     & 641961  & \textbf{0} & 279 \\
GP-PMD & 0     & 0     & \textbf{628418}  & \textbf{0} & 197  \\ \hline
\end{tabular}
\label{res:n401}
\end{threeparttable}
\end{table}

\begin{table}[t]
\centering
\begin{threeparttable}
\caption{The evaluation results on the real-world data with $N=1398, S=231$.}
\begin{tabular}{c|ccccc}
\hline
Models                   & Coll. & Conf. & Mod-3. & Mod-30. & Time (s) \\ \hline
GGC~\cite{coloring, bandh2009graph}           & 0     & 0     & 4283366 & 429858 &    85   \\ \hline
BQP~\cite{gui2018pci2} & 0     & 0     & 1216883 & 39039 & 18936  \\
Genetic~\cite{panxing2016pci} & 2     & 5     & 1592459 & 32406 & 1407   \\ 
MA-NWS~\cite{andrade2022physical} & 0     & 16     & 1236074 & 104253 & 731  \\ \hline
GP-LS & 0 &0&1207612&326&332\\
GP-PGP & 0     & 0     & 1185974  &  \textbf{159} & 372 \\
GP-PMD & 0     & 0     & \textbf{1143088} & 170 & 352   \\ \hline
\end{tabular}
\label{res:n1398}
\end{threeparttable}
\end{table}

\begin{table}[t]
\centering
\begin{threeparttable}
\caption{The evaluation results on the real-world data with $N=4396, S=368$.  }
\begin{tabular}{c|ccccc}
\hline
Models                   & Coll. & Conf. & Mod-3. & Mod-30. & Time (s) \\ \hline
GGC~\cite{coloring, bandh2009graph} & 0     & 0     & 9349226 & 753756 & 107   \\ \hline
BQP~\cite{gui2018pci2}${}^*$ & -     & -     & - & -  &  -  \\ 
Genetic~\cite{panxing2016pci} & 14     & 16     & 5159102 & 140707 & 7403   \\
MA-NWS~\cite{andrade2022physical} & 0     & 25     & 2909566 & 255494 & 2000  \\ \hline
GP-LS & 0 &0&3429811&495&1099\\
GP-PGP & 0     & 0     & 2833178  & 287 & 1179 \\
GP-PMD & 0     & 0     & \textbf{2831831} & \textbf{216}  & 971   \\ \hline
\multicolumn{6}{l}{\small $*$: BQP fails to return a solution in two weeks of time.}\\
\end{tabular}
\label{res:n4396}
\end{threeparttable}
\end{table}

\begin{table}[t]
\centering
\begin{threeparttable}
\caption{The evaluation results on the real-world data with $N=10000, S=673$.  }
\begin{tabular}{c|ccccc}
\hline
Models                   & Coll. & Conf. & Mod-3. & Mod-30. & Time (s) \\ \hline
GGC~\cite{coloring, bandh2009graph}  & 0     & 0     & 18039785  & 1052895 & 254  \\ \hline
BQP~\cite{gui2018pci2}${}^*$ & -     & -     & - & -  & -  \\ 
Genetic~\cite{panxing2016pci} & 33    & 32 & 11642161 & 574454 & 55515   \\
MA-NWS~\cite{andrade2022physical} & 0     & 91     & 6838767 & 580523 & 12398  \\ \hline
GP-LS & 0 & 0     &7466815&19367    &6967\\
GP-PGP & 0     & 0     & 6607312 & \textbf{6451}  & 6672 \\
GP-PMD & 0     & 0     & \textbf{6584238} &  6617  & 6185   \\ \hline
\multicolumn{6}{l}{\small $*$: BQP fails to return a solution in two weeks of time.}\\
\end{tabular}
\label{res:n10000}
\end{threeparttable}
\end{table}

{

\section{Conclusion}\label{sec:con}

In this paper, we introduce a novel approach for PCI assignment, decomposing it into Min-$3$-Partition, Min-$10$-Partition, and graph coloring. To solve the challenging NP-hard Min-$k$-Partition problem, we present a relaxation-free reformulation that can be solved by an efficient PMD algorithm. Theoretical results are developed to provide strong guarantees for the proposed method.
Experimental evaluations on synthetic and real-world datasets show the superior performance and computational efficiency of the proposed method over baseline methods. 
\blue{While our method shows strong performance on networks with up to $10,000$ cells, scaling to even larger networks may benefit from more advanced techniques. Future work could explore neural optimization or learning-based surrogates to further improve efficiency and generalization in interference-aware settings.}

\appendices

\blue{
\section{Proof of Lemma~\ref{prop:1} \label{proof:prop:1}}

\begin{proof}
    Clearly, if $\bm{x}$ has one non-zero entry, then $\|\bm{x}\|_p = \|\bm{x}\|_q$. Below we prove that if 
$\Vert \bm{x}\Vert_q= \Vert\bm{x}\Vert_p$, then $\bm{x}$ has exactly one non-zero entry.

First, we have
\begin{align}
    \Vert \bm{x}\Vert_q^q &= \sum_{i=1}^k\vert x_i\vert^q = \sum_{i=1}^k\vert x_i\vert^{p}\vert x_i\vert^{q-p} \notag\\
    &\overset{(a)}{\le} \left(\sum_{i=1}^k\vert x_i\vert^p\right)\left(\sum_{i=1}^k\vert x_i\vert^{q-p}\right), \label{app:ineqn1}
\end{align}
and
\begin{align}
    \Vert \bm{x}\Vert_q^q &\overset{(b)}{=} \Vert \bm{x}\Vert_p^q = \left(\sum_{i=1}^k\vert x_i\vert^p\right)^{\frac{q}{p}} \notag\\
    &= \left(\sum_{i=1}^k\vert x_i\vert^p\right)\left(\sum_{i=1}^k\vert x_i\vert^p\right)^{\frac{q-p}{p}} \notag\\
    &\overset{(c)}{\ge} \left(\sum_{i=1}^k\vert x_i\vert^p\right)\left(\sum_{i=1}^k\vert x_i\vert^{p\cdot \frac{q-p}{p}}\right) \notag\\
    &= \left(\sum_{i=1}^k\vert x_i\vert^p\right)\left(\sum_{i=1}^k\vert x_i\vert^{q-p}\right). \label{app:ineqn2}
\end{align}

Here, (a) is due to the inequality $\sum_{i=1}^k a_ib_i \le (\sum_{i=1}^k a_i)(\sum_{i=1}^k b_i)$ for $a_i, b_i \ge 0$,  
(b) follows from the assumption in the necessary condition, and  
(c) uses the inequality $\sum_{i=1}^k a_i^t \le (\sum_{i=1}^k a_i)^t$ for $a_i \ge 0$ and $t>0$.

Combining (\ref{app:ineqn1}) and (\ref{app:ineqn2}), we have
\begin{align*}
    \Vert \bm{x}\Vert_q^q = \left(\sum_{i=1}^k\vert x_i\vert^p\right)\left(\sum_{i=1}^k\vert x_i\vert^{q-p}\right).
\end{align*}
On the other hand,
\begin{align*}
    \Vert \bm{x}\Vert_q^q = \sum_{i=1}^k\vert x_i\vert^p\vert x_i\vert^{q-p}.
\end{align*}
We thus obtain
\begin{align*}
    \sum_{i\neq j}\vert x_i\vert^p\vert x_j\vert^{q-p} = 0,
\end{align*}
which implies
\begin{align}
    \vert x_i\vert^p\vert x_j\vert^{q-p} = 0, \quad \forall i \ne j. \label{app:eqn1}
\end{align}

Since $\bm{x} \ne \bm{0}$, there exists an index $i^*$ such that $\vert x_{i^*} \vert^p > 0$.  
From (\ref{app:eqn1}), it follows that
\begin{align}
    \vert x_j\vert^{q-p} = 0, \quad \forall j \ne i^*, \label{app:eqn2}
\end{align}
which implies $x_j = 0$ for all $j \ne i^*$.  
Hence, $\bm{x}$ has exactly one non-zero entry.
\end{proof}

\section{Derivation of (P5) \label{app:0.75}}

We substitute $p=1, q=2,v=2$ into (\ref{obj:penalty}) to obtain
\begin{align}
    &\quad\sum_{i=1}^N\sum_{j=1}^N[\bm{W}]_{i,j}\bm{x}_i^T\bm{x}_j + \frac{\rho}{2}\sum_{i=1}^N(\Vert \bm{x}_i\Vert_1^2-\Vert\bm{x}_i\Vert_2^2) \notag\\
    &=\text{Tr}(\bm{X}\bm{W}\bm{X}^T)+\frac{\rho}{2}\sum_{i=1}^N(1-\Vert\bm{x}_i\Vert_2^2) \notag\\
    &=\text{Tr}(\bm{X}\bm{W}\bm{X}^T)-\frac{\rho}{2}\text{Tr}(\bm{X}\bm{X}^T)+\frac{\rho}{2}N \notag \\
    &=\text{Tr}\left(\bm{X} \left(\bm{W}-\frac{\rho}{2}\bm{I} \right) \bm{X}^T \right)+\frac{\rho}{2}N.
\end{align}

}

\section{Proof of Theorem \ref{theo:1}\label{app:1}}

We first prove the strictly concavity of $F(\bm{X};\rho)$ when $\rho>\rho_0=2\max(\lambda_{\max}(\bm{W}), 0)$.

\begin{lemma}[Strict Concavity of $F$]
    Let $\rho_0=\max(2\lambda_{\max}(\bm{W}), 0)$. When $\rho>\rho_0$, $F(\bm{X};\rho)$ is a strictly concave function of $\bm{X}$.
    \label{concave}
\end{lemma}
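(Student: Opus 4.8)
\textbf{Proof proposal for Lemma \ref{concave}.}

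The plan is to show that the Hessian of $F(\bm{X};\rho)$ with respect to $\bm{X}$ is negative definite whenever $\rho>\rho_0$. Recall that
\[
F(\bm{X};\rho)=\text{Tr}\!\left(\bm{X}\!\left(\bm{W}-\tfrac{\rho}{2}\bm{I}\right)\!\bm{X}^T\right)+\tfrac{N\rho}{2},
\]
which, up to the additive constant $\tfrac{N\rho}{2}$, is a quadratic form in the entries of $\bm{X}$. First I would rewrite this quadratic form in a convenient way: expanding the trace gives $F(\bm{X};\rho)=\sum_{a=1}^{k}\bm{X}_{a,:}\,(\bm{W}-\tfrac{\rho}{2}\bm{I})\,\bm{X}_{a,:}^T+\tfrac{N\rho}{2}$, where $\bm{X}_{a,:}$ denotes the $a$-th row of $\bm{X}$. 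Thus $F$ decouples across the $k$ rows, and on each row it is the quadratic form associated with the symmetric matrix $\bm{M}:=\bm{W}-\tfrac{\rho}{2}\bm{I}$.

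The key step is then the eigenvalue computation: since $\bm{W}$ is symmetric, $\lambda_{\max}(\bm{M})=\lambda_{\max}(\bm{W})-\tfrac{\rho}{2}$. When $\rho>\rho_0=\max(2\lambda_{\max}(\bm{W}),0)$ we have in particular $\rho>2\lambda_{\max}(\bm{W})$, hence $\lambda_{\max}(\bm{M})<0$, i.e.\ $\bm{M}$ is negative definite. Consequently each of the $k$ row-quadratic-forms is strictly concave, and so is their sum; adding the constant $\tfrac{N\rho}{2}$ does not affect concavity. Therefore $F(\cdot;\rho)$ is strictly concave as a function of $\bm{X}\in\mathbb{R}^{k\times N}$, and in particular on the convex domain $\Delta_k^N$.

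I do not anticipate a serious obstacle here; the only point requiring a little care is the bookkeeping that relates the trace expression to a block-diagonal Hessian (one block $2\bm{M}$ per row of $\bm{X}$, under the vectorization $\mathrm{vec}(\bm{X}^T)$), so that ``$\bm{M}\prec 0$'' genuinely translates into ``Hessian $\prec 0$.'' One should also note why $\rho_0$ is defined with the $\max(\cdot,0)$: this guarantees $\rho>\rho_0\ge 0$, so the penalty parameter is positive and the constant term is well-defined, while still forcing $\lambda_{\max}(\bm{M})<0$. With these remarks the lemma follows, and it will feed directly into the proof of Theorem \ref{theo:1}, since a strictly concave function on a polytope attains its minimum only at vertices of the polytope — the vertices of $\Delta_k^N$ being exactly the one-hot matrices that are feasible for (P4).
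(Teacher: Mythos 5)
Your proposal is correct and follows essentially the same route as the paper: both reduce the claim to showing that $\bm{W}-\tfrac{\rho}{2}\bm{I}$ is negative definite when $\rho>\rho_0$ (the paper via an explicit eigendecomposition, you via the shift $\lambda_{\max}(\bm{W}-\tfrac{\rho}{2}\bm{I})=\lambda_{\max}(\bm{W})-\tfrac{\rho}{2}<0$, which is the same computation). Your row-wise decoupling of the trace is a welcome bit of extra bookkeeping that the paper leaves implicit, but it does not change the argument.
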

\begin{proof}
Let the eigenvalue decomposition of the interference $\bm{W}$ be given by $\bm{W}=\bm{P}\bm{\Sigma}\bm{P}^T$, where $\bm{P}$ is an orthogonal matrix, $\Sigma$ is a diagonal matrix whose diagonal entries are the eigenvalues of $\bm{W}$. Then
\[
    \bm{W}-\frac{\rho}{2}\bm{I} = \bm{P}\bm{\Sigma}\bm{P}^T-\frac{\rho}{2}\bm{P}\bm{P}^T
    =\bm{P}\left(\Sigma-\frac{\rho}{2}\bm{I}\right)\bm{P}^T,
\]
Since $\rho>\rho_0=2\lambda_{\max}(\bm{W})$, we have the diagonal elements of $\bm{\Sigma}-\frac{\rho}{2}\bm{I}$ are all strictly less than $0$, making $\bm{W}-\frac{\rho}{2}\bm{I}$ a negative definite matrix. Hence, $F(\bm{X};\rho)$ is strictly concave.     
\end{proof}

A useful fact we will use is that every $\bm{X}\in\Delta_k^N$ can be written as a convex combination of matrices where each column vector only has one non-zero element. Specifically, we express $\bm{X}=[\bm{x}_1,\cdots,\bm{x}_N]\in\Delta_k^N$ as
\begin{align}
    \bm{X}=& \sum_{i_1=1}^k [\bm{X}]_{i_1,1}[\bm{e}_{i_1}, \bm{x}_{2}, \cdots, \bm{x}_{N}]\notag\\
    =&\sum_{i_1=1}^k\sum_{i_2=1}^k [\bm{X}]_{i_1,1}[\bm{X}]_{i_2,2}[\bm{e}_{i_1},\bm{e}_{i_2},\cdots,\bm{x}_{N}] \notag\\ 
    &\cdots\notag \\
    =&\sum_{i_1=1}^k\cdots\sum_{i_N=1}^k \left(\prod_{j=1}^N[\bm{X}]_{i_j,j}\right)[\bm{e}_{i_1},\cdots,\bm{e}_{i_N}].\label{rewrite}
\end{align}
Moreover, for all $\bm{X}\in\Delta_k^N$,
\begin{equation}\label{for:sum1}
    \sum_{i_1=1}^k\cdots\sum_{i_N=1}^k\prod_{j=1}^N[\bm{X}]_{i_j,j}
    = \prod_{j=1}^N \sum_{i=1}^k [\bm{X}]_{i,j} = \prod_{j=1}^N 1 = 1.
\end{equation}
Combining (\ref{rewrite}) and (\ref{for:sum1}), we conclude that every $\bm{X}\in\Delta_k^N$ can be expressed as a convex combination of matrices where each column vector has only one non-zero entry.

We are now ready to prove Theorem \ref{theo:1}. To this end, we will that every optimal solution to (P4) is optimal for (P5), and vice versa. We denote the feasible regions of (P4) and (P5) by $\mathcal{X}$ and $\mathcal{X}_{\rho}$, respectively.

First, suppose $\bm{X}^*$ is an optimal solution to (P4). Clearly $\bm{X}^*$ is a feasible solution to (P5). By the concavity of $F(\bm{X};\rho)$ from Lemma \ref{concave} and (\ref{rewrite}), for every feasible solution $\bm{X}$ to (P5),
\begin{align}
&F(\bm{X};\rho) \notag \\
&\overset{(a)}{=}F\left(\sum_{i_1=1}^k\cdots\sum_{i_N=1}^k \Big(\prod_{j=1}^N[\bm{X}]_{i_j,j}\Big)[\bm{e}_{i_1},\cdots,\bm{e}_{i_N}];\rho\right) \notag\\
&\overset{(b)}{\ge} \sum_{i_1=1}^k\cdots\sum_{i_N=1}^k \left(\prod_{j=1}^N[\bm{X}]_{i_j,j}\right) F([\bm{e}_{i_1},\cdots,\bm{e}_{i_N}];\rho)\notag \\
&\overset{(c)}{\ge} \sum_{i_1=1}^k\cdots\sum_{i_N=1}^k\left(\prod_{j=1}^N[\bm{X}]_{i_j,j}\right)\min_{\bm{X}\in\mathcal{X}} F(\bm{X};\rho)\notag\\
&\overset{(d)}{=} \sum_{i_1=1}^k\cdots\sum_{i_N=1}^k \left(\prod_{j=1}^N[\bm{X}]_{i_j,j}\right) \min_{\bm{X}\in\mathcal{X}}F(\bm{X};0)\notag\\
&\overset{(e)}{=}\sum_{i_1=1}^k\cdots\sum_{i_N=1}^k\left(\prod_{j=1}^N[\bm{X}]_{i_j,j}\right)F(\bm{X}^*;0)\notag\\
&\overset{(f)}{=}F(\bm{X}^*;0)\overset{(d)}{=}F(\bm{X}^*;\rho). \label{for:P4}
\end{align}
Here (a) follows from (\ref{rewrite}); (b) is due to the concavity of $F(\bm{X};\rho)$ from Lemma \ref{concave}; (c) holds because $[\bm{e}_{i_1},\cdots,\bm{e}_{i_N}]\in\mathcal{X}$; (d) is due to $F(\bm{X};\rho)=F(\bm{X};0)$ for all $\bm{X}\in\mathcal{X}$ and $\rho\ge 0$, as the penalized term $\sum_{i=1}^N(\Vert \bm{x}_i \Vert_1^2-\Vert\bm{x}_i\Vert_2^2)$ becomes $0$; (e) uses the global optimality of $\bm{X}_*$ in problem (P4); (f) follows from (\ref{for:sum1}).
Thus, (\ref{for:P4}) shows that every optimal solution to (P4) is optimal for (P5).

Now suppose that $\bm{X}^*$ is an optimal solution to (P5). We note that $\bm{X}^*$ must be a feasible solution to (P4). Otherwise there would exist $i_1,...,i_N$ such that $0<\prod_{j=1}^N[\bm{X}^*]_{i_j,j}<1$. Then,
\begin{align}
        &F(\bm{X}^*;\rho) \notag\\
        &\overset{(g)}{>}\sum_{i_1=1}^k\cdots\sum_{i_N=1}^k \left(\prod_{j=1}^N[\bm{X}^*]_{i_j,j}\right) F([\bm{e}_{i_1},\cdots,\bm{e}_{i_N}];\rho)\notag\\
        &\ge\sum_{i_1=1}^k\cdots\sum_{i_N=1}^k \left(\prod_{j=1}^N[\bm{X}^*]_{i_j,j}\right)\min_{\bm{X}\in\mathcal{X}_{\rho}}F(\bm{X};\rho)\notag\\
        &\overset{(h)}{=}\min_{\bm{X}\in\mathcal{X}_{\rho}}F(\bm{X};\rho),
    \end{align}
    where (g) follows from the strictly concave of $F$ from Lemma \ref{concave}, and (h) is due to (\ref{for:sum1}). This contradicts the optimality of $\bm{X}^*$. Thus, $\bm{X}^*$ is a feasible solution to (P4). (A quicker proof: every strictly concave function must attain optimum at extreme points of the convex domain, which correspond to the feasible region of (P4).)

    For all feasible solutions $\bm{X}$ to (P4), we have $F(\bm{X};\rho)=F(\bm{X};0)$, so
    \[
    F(\bm{X}^*;0) = F(\bm{X}^*;\rho) \le F(\bm{X};\rho) = F(\bm{X};0),
    \]
    where the inequality is due to the fact that $\bm{X}^*$ is a feasible solution of (P4). This shows that $\bm{X}^*$ is an optimal solution of (P4). The proof is finished.

\section{Proof of Theorem \ref{thm:ConvMD}\label{app:thmConvMD}}

    Before proceeding with the proof, we introduce an equivalent representation of the stationary point defined in Definition \ref{defn:station}.

    \begin{lemma}[Equivalent Representation of the Stationary Point]
    Define $T_t^F(\bm{X}):=\argmin_{\bm{Y}\in \Delta_k^N}\{\left\langle t\bm{X}(2\bm{W}-\rho\bm{I}), \bm{Y}\right\rangle + \KL  (\bm{Y}, \bm{X})\}$. Then $ \KL  (T_t^F(\bm{X}^*),\bm{X}^*)=0$ if and only if $\bm{X}^*$ is a stationary point of problem P5 for all $t\in(0, \frac{1}{L+1})$.\label{lemma:station}
    \end{lemma}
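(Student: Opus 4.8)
\textbf{Proof proposal for Lemma \ref{lemma:station}.}

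The plan is to exploit the fact that the map $\bm{Y}\mapsto \langle t\bm{X}(2\bm{W}-\rho\bm{I}),\bm{Y}\rangle + \KL(\bm{Y},\bm{X})$ is strictly convex over the simplex product $\Delta_k^N$ (the KL term is strictly convex and the linear term does not affect strict convexity), so $T_t^F(\bm{X})$ is single-valued and is characterized by its first-order optimality condition. First I would write down this optimality condition: $T_t^F(\bm{X}^*)$ is the minimizer if and only if
\[
\big\langle t\bm{X}^*(2\bm{W}-\rho\bm{I}) + \nabla_{\bm{Y}}\KL(\bm{Y},\bm{X}^*)\big|_{\bm{Y}=T_t^F(\bm{X}^*)},\ \bm{Z} - T_t^F(\bm{X}^*)\big\rangle \ge 0
\]
for all $\bm{Z}\in\Delta_k^N$. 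The key observation is that $\KL(\bm{Y},\bm{X})\ge 0$ with equality exactly when $\bm{Y}=\bm{X}$, and that $\bm{X}^*$ is always feasible in the defining minimization, so $\KL(T_t^F(\bm{X}^*),\bm{X}^*) = 0$ holds if and only if $\bm{X}^*$ itself attains the minimum, i.e. $\bm{X}^* = T_t^F(\bm{X}^*)$.

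Next I would prove the two directions. For the ``if'' direction, suppose $\bm{X}^*$ is a stationary point, so $\langle \bm{X}^*(2\bm{W}-\rho\bm{I}),\bm{X}-\bm{X}^*\rangle\ge 0$ for all $\bm{X}\in\Delta_k^N$. I want to show $\bm{X}^*$ minimizes $\langle t\bm{X}^*(2\bm{W}-\rho\bm{I}),\bm{Y}\rangle + \KL(\bm{Y},\bm{X}^*)$. Since the objective is convex, it suffices to check the first-order condition at $\bm{Y}=\bm{X}^*$: because $\nabla_{\bm{Y}}\KL(\bm{Y},\bm{X}^*)|_{\bm{Y}=\bm{X}^*} = \bm{0}$ (the gradient of $\sum [\bm{Y}]_{i,j}\log([\bm{Y}]_{i,j}/[\bm{X}^*]_{i,j})$ vanishes at $\bm{Y}=\bm{X}^*$ up to the constant that is annihilated on the simplex tangent space), the directional derivative of the objective at $\bm{X}^*$ in any feasible direction $\bm{Z}-\bm{X}^*$ equals $\langle t\bm{X}^*(2\bm{W}-\rho\bm{I}),\bm{Z}-\bm{X}^*\rangle \ge 0$ by stationarity. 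Hence $\bm{X}^* = T_t^F(\bm{X}^*)$ and $\KL(T_t^F(\bm{X}^*),\bm{X}^*)=0$. For the ``only if'' direction, suppose $\KL(T_t^F(\bm{X}^*),\bm{X}^*)=0$, so $T_t^F(\bm{X}^*)=\bm{X}^*$, meaning $\bm{X}^*$ minimizes the strictly convex surrogate. Writing out its first-order optimality condition at the minimizer $\bm{X}^*$ and using again that $\nabla_{\bm{Y}}\KL(\bm{Y},\bm{X}^*)|_{\bm{Y}=\bm{X}^*}=\bm{0}$ on the tangent space of $\Delta_k^N$, I get $\langle t\bm{X}^*(2\bm{W}-\rho\bm{I}),\bm{Z}-\bm{X}^*\rangle\ge 0$ for all $\bm{Z}\in\Delta_k^N$; dividing by $t>0$ gives exactly the stationarity condition of Definition \ref{defn:station}.

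The main obstacle I anticipate is handling the gradient of the KL term carefully at the relative boundary of the simplex and being precise that ``$\nabla_{\bm{Y}}\KL(\bm{Y},\bm{X}^*)$ vanishes at $\bm{Y}=\bm{X}^*$'' should be read as: the term $\log([\bm{Y}]_{i,j}/[\bm{X}^*]_{i,j})+1$ reduces to the constant vector $\bm{1}$ at $\bm{Y}=\bm{X}^*$, which is orthogonal to every feasible direction (since feasible directions sum to zero within each column block). A clean way to avoid boundary issues is to phrase the whole argument in terms of the variational inequality / prox-operator characterization rather than literal gradients: $T_t^F(\bm{X})$ is the Bregman proximal point, and the three-point identity for Bregman divergences gives, for all $\bm{Z}\in\Delta_k^N$,
\[
\langle t\bm{X}^*(2\bm{W}-\rho\bm{I}),\ \bm{Z}-T_t^F(\bm{X}^*)\rangle \ge \KL(\bm{Z},\bm{X}^*) - \KL(\bm{Z},T_t^F(\bm{X}^*)) - \KL(T_t^F(\bm{X}^*),\bm{X}^*),
\]
from which both directions follow by choosing $\bm{Z}=\bm{X}^*$ (for ``only if'') and by a contradiction/monotonicity argument (for ``if''); the role of the range $t\in(0,\tfrac{1}{L+1})$ is only to guarantee that the prox subproblem is well-posed and that the equivalence is stable, and I would cite the smoothness constant $L$ of $F(\cdot;\rho)$ here.
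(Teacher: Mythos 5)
Your argument is correct and, for the direction it shares with the paper, follows essentially the same route: identify $\KL(T_t^F(\bm{X}^*),\bm{X}^*)=0$ with $\bm{X}^*=T_t^F(\bm{X}^*)$, write the first-order optimality condition of the convex prox subproblem at $\bm{X}^*$, observe that the gradient of the KL term there reduces to the all-ones matrix (which is orthogonal to every feasible direction since columns of $\bm{Z}-\bm{X}^*$ sum to zero), and divide by $t>0$ to recover Definition~\ref{defn:station}. The substantive difference is coverage: the paper's proof establishes only the implication $\KL(T_t^F(\bm{X}^*),\bm{X}^*)=0 \Rightarrow \bm{X}^*$ stationary, whereas the lemma is stated as an equivalence; you supply the missing converse by noting that stationarity makes the directional derivative of the convex surrogate nonnegative at $\bm{X}^*$, so $\bm{X}^*$ is its global minimizer and hence equals $T_t^F(\bm{X}^*)$. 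You are also more careful than the paper on two points that it glosses over: the behavior of $\nabla_{\bm{Y}}\KL(\bm{Y},\bm{X}^*)$ at the relative boundary of $\Delta_k^N$ (your suggestion to recast the argument via the Bregman three-point inequality cleanly sidesteps this), and the observation that the restriction $t\in(0,\tfrac{1}{L+1})$ plays no role in the equivalence itself, only in the surrounding algorithmic analysis. Your version is the more complete proof of the stated lemma.
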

    \begin{proof}
        For $\bm{X}^*$ such that $ \KL  (T_t^F(\bm{X}^*),\bm{X}^*)=0$, it also holds that 
        \begin{align*}
            \bm{X}^*=\argmin_{\bm{X}\in \Delta_k^N}\{\left\langle t\bm{X}^*(2\bm{W}-\rho\bm{I}), \bm{X}\right\rangle + \KL  (\bm{X}, \bm{X}^*)\}.
        \end{align*}
    
        As the inner function is convex with respect to $\bm{X}$ and $\Delta_k^N$ is convex, the optimal $\bm{X}^*$ is a stationary point of this problem. Therefore, $\forall \bm{X}\in \Delta_k^N$, 
        {\small
        \begin{align*}
            &\left\langle\frac{\partial(\left\langle t\bm{X}^*(2\bm{W}-\rho\bm{I}), \bm{X}\right\rangle + \KL  (\bm{X},\bm{X}^*))}{\partial \bm{X}}\vert_{\bm{X}=\bm{X}^*}, \bm{X}-\bm{X}^*\right\rangle \ge 0. 
        \end{align*}  }
        This implies
        \begin{align*}    
            \left\langle\bm{X}^*(2\bm{W}-\rho\bm{I}), \bm{X}-\bm{X}^*\right\rangle \ge 0,
        \end{align*}
        showing that $\bm{X}^*$ is a stationary point of problem P5.
    \end{proof}
    
    Then, we can introduce the following lemmas to analyze the benign properties of KL divergence and the objective function.

    \begin{lemma}[Benign Properties of KL Bregman Divergence \cite{beck2017first}]\label{lem:LemKL} The KL Bregman divergence (\ref{divergence}) can be expressed as
    \begin{equation}
        \begin{aligned}
            \KL(\bm{X},\bm{Y})
            &= \omega(\bm{X})-\omega(\bm{Y}) -\left\langle\nabla\omega(\bm{Y}),\bm{X}-\bm{Y}\right\rangle,
        \end{aligned}
    \end{equation}
    where
    \begin{align}
    \omega(\bm{X})=\sum_{i=1}^k\sum_{j=1}^N[\bm{X}]_{i,j}\log [\bm{X}]_{i,j},
    \label{for:omega}
    \end{align}
    and $\bm{X}, \bm{X}^{(m)}\in \Delta_k^N$.
    \begin{enumerate}
        \item The function $\omega(\bm{X})$ is $1$-strongly convex on $\Delta_{k}^N$. 
        \item The function $\KL(\bm{X},\bm{Y})$ satisfies that $\KL(\bm{X},\bm{Y})\ge\frac{1}{2}\Vert\bm{X}-\bm{Y}\Vert_F^2$.
    \end{enumerate}
    \end{lemma}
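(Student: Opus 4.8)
\textbf{Proof proposal for Lemma~\ref{lem:LemKL}.}

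The plan is to treat the two listed properties of the negative-entropy function $\omega$ and the associated KL Bregman divergence separately, after first confirming that $\KL$ is indeed the Bregman divergence generated by $\omega$. For the identity $\KL(\bm{X},\bm{Y}) = \omega(\bm{X}) - \omega(\bm{Y}) - \langle \nabla\omega(\bm{Y}), \bm{X}-\bm{Y}\rangle$, I would compute $\nabla\omega(\bm{Y})$ entrywise, obtaining $[\nabla\omega(\bm{Y})]_{i,j} = \log[\bm{Y}]_{i,j} + 1$, substitute into the right-hand side, and simplify: the terms $\sum_{i,j}([\bm{X}]_{i,j} - [\bm{Y}]_{i,j})$ cancel because both $\bm{X}$ and $\bm{Y}$ have columns summing to $1$ (so the total sum of entries is $N$ in each case), leaving exactly $\sum_{i,j}[\bm{X}]_{i,j}\log([\bm{X}]_{i,j}/[\bm{Y}]_{i,j})$, which is the definition in~\eqref{divergence}.

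For part 1, strong convexity of $\omega$ on $\Delta_k^N$ with modulus $1$ (with respect to the $\ell_1$-type norm on columns, or equivalently via the standard argument), I would invoke the classical Pinsker-type fact that the scalar negative entropy $t\mapsto t\log t$ has second derivative $1/t \ge 1$ on $(0,1]$, so along any column the entropy is $1$-strongly convex with respect to $\|\cdot\|_1$ on the simplex $\Delta_k$; summing over the $N$ independent columns preserves the modulus since the product domain $\Delta_k^N$ separates across columns. Alternatively, and more in keeping with the paper's downstream use, I would simply cite~\cite{beck2017first}, where this is standard.

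For part 2, the bound $\KL(\bm{X},\bm{Y}) \ge \frac12\|\bm{X}-\bm{Y}\|_F^2$, the cleanest route is to combine part 1 with Pinsker's inequality at the column level: for probability vectors $\bm{x},\bm{y}\in\Delta_k$ one has $\mathrm{KL}(\bm{x},\bm{y}) \ge \frac12\|\bm{x}-\bm{y}\|_1^2 \ge \frac12\|\bm{x}-\bm{y}\|_2^2$, the last step because $\|\cdot\|_1 \ge \|\cdot\|_2$. Summing over the $N$ columns gives $\KL(\bm{X},\bm{Y}) = \sum_{j=1}^N \mathrm{KL}(\bm{x}_j,\bm{y}_j) \ge \frac12\sum_{j=1}^N\|\bm{x}_j-\bm{y}_j\|_2^2 = \frac12\|\bm{X}-\bm{Y}\|_F^2$. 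The only mild subtlety — and the part I would be most careful about — is making sure the column-separable structure of $\Delta_k^N$ is used correctly so that both the strong-convexity modulus and the Pinsker bound aggregate without loss across columns; since there is no coupling between columns in either $\omega$ or $\KL$, this aggregation is exact, so there is no real obstacle, only bookkeeping. I expect this lemma to be essentially a restatement of textbook facts, so the write-up can be short and lean heavily on~\cite{beck2017first}.
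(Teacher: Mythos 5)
Your proposal is correct. The paper does not actually prove this lemma---it is stated as an imported result with only the citation to \cite{beck2017first}---so your argument (entrywise gradient $\log[\bm{Y}]_{i,j}+1$ with the linear terms cancelling since both matrices have total entry sum $N$; column-wise $1$-strong convexity of negative entropy; Pinsker's inequality plus $\|\cdot\|_1\ge\|\cdot\|_2$ aggregated over the separable columns) simply supplies the standard details the paper delegates to the textbook. The only point worth flagging in a write-up is that $\nabla\omega(\bm{Y})$ requires $[\bm{Y}]_{i,j}>0$, which the paper ensures elsewhere by initializing in the interior of $\Delta_k^N$ and by the multiplicative form of the update \eqref{KLMDup}.
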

    \begin{lemma}[Benign Properties of  Objective Function] The objective function $F(\bm{X};\rho)$ in the penalized problem P5 satisfies following properties.
    \begin{enumerate}
              \item\label{lem:LemObj-1} ($L$-smooth): $F(\bm{X};\rho)$ is $L$-smooth over $\Delta_k^N$ with respect to $\KL(\cdot)$  in a sense that 
        \begin{align}
            F(\bm{Y};\rho)\le &F(\bm{X};\rho) + \left\langle\nabla F(\bm{X};\rho),\bm{Y}-\bm{X} \right\rangle \notag\\
            &+ L\cdot\KL(\bm{Y}, \bm{X}), \label{l-smooth}       
        \end{align}
        where $L=\Vert 2\bm{W}-\rho \bm{I}\Vert_F$.
        \item\label{lem:LemObj-2} (Bounded): $F(\bm{X};\rho)$ is bounded over $\Delta_k^N$, which is
        \begin{align}
        -\frac{\rho}{2}N\le F(\bm{X};\rho)\le \Vert\bm{W}\Vert_{1,1}.
        \end{align}
    \end{enumerate}
    \label{lem:LemObj}
    \end{lemma}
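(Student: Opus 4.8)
\textbf{Proof proposal for Lemma~\ref{lem:LemObj}.}

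The plan is to prove the two properties separately, both by direct computation using the explicit quadratic form of $F(\bm{X};\rho) = \mathrm{Tr}(\bm{X}(\bm{W}-\frac{\rho}{2}\bm{I})\bm{X}^T) + \frac{N\rho}{2}$.

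For the $L$-smoothness claim~\ref{lem:LemObj-1}, I would first compute the gradient $\nabla F(\bm{X};\rho) = \bm{X}(2\bm{W}-\rho\bm{I})$ and observe that $F$ is a quadratic function, so its second-order Taylor expansion is exact: $F(\bm{Y};\rho) = F(\bm{X};\rho) + \langle \nabla F(\bm{X};\rho), \bm{Y}-\bm{X}\rangle + \langle (\bm{Y}-\bm{X})(\bm{W}-\frac{\rho}{2}\bm{I}), \bm{Y}-\bm{X}\rangle$. It then suffices to bound the quadratic remainder term: using the trace/Frobenius Cauchy--Schwarz inequality, $\langle (\bm{Y}-\bm{X})(\bm{W}-\frac{\rho}{2}\bm{I}), \bm{Y}-\bm{X}\rangle \le \|(\bm{Y}-\bm{X})(\bm{W}-\frac{\rho}{2}\bm{I})\|_F \|\bm{Y}-\bm{X}\|_F \le \|\bm{W}-\frac{\rho}{2}\bm{I}\|_F \|\bm{Y}-\bm{X}\|_F^2 = \frac{1}{2}\|2\bm{W}-\rho\bm{I}\|_F\|\bm{Y}-\bm{X}\|_F^2$. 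Finally, I invoke Lemma~\ref{lem:LemKL}(2), which gives $\|\bm{Y}-\bm{X}\|_F^2 \le 2\,\KL(\bm{Y},\bm{X})$, so the remainder is at most $\|2\bm{W}-\rho\bm{I}\|_F\,\KL(\bm{Y},\bm{X}) = L\cdot\KL(\bm{Y},\bm{X})$, which is exactly~\eqref{l-smooth}.

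For the boundedness claim~\ref{lem:LemObj-2}, I would bound $F$ from both sides over $\Delta_k^N$. For the upper bound, write $F(\bm{X};\rho) = \mathrm{Tr}(\bm{X}\bm{W}\bm{X}^T) - \frac{\rho}{2}\mathrm{Tr}(\bm{X}\bm{X}^T) + \frac{N\rho}{2}$; since $\bm{X}\in\Delta_k^N$ has columns summing to $1$ with nonnegative entries, each $\|\bm{x}_j\|_2^2 \le \|\bm{x}_j\|_1^2 = 1$, hence $\mathrm{Tr}(\bm{X}\bm{X}^T) = \sum_j \|\bm{x}_j\|_2^2 \le N$, which makes $-\frac{\rho}{2}\mathrm{Tr}(\bm{X}\bm{X}^T) + \frac{N\rho}{2} \le 0$; and $\mathrm{Tr}(\bm{X}\bm{W}\bm{X}^T) = \sum_{i,j}[\bm{W}]_{i,j}\bm{x}_i^T\bm{x}_j \le \sum_{i,j}|[\bm{W}]_{i,j}| = \|\bm{W}\|_{1,1}$, using $0\le \bm{x}_i^T\bm{x}_j \le 1$ (which follows since entries are in $[0,1]$ and columns sum to $1$). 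This yields $F(\bm{X};\rho) \le \|\bm{W}\|_{1,1}$. For the lower bound, note $\mathrm{Tr}(\bm{X}\bm{W}\bm{X}^T) \ge 0$ because $\bm{W}$ has nonnegative entries and $\bm{X}$ is nonnegative, while $-\frac{\rho}{2}\mathrm{Tr}(\bm{X}\bm{X}^T) \ge -\frac{\rho}{2}N$ (again since $\mathrm{Tr}(\bm{X}\bm{X}^T)\le N$) and $\frac{N\rho}{2}\ge 0$; but more directly, the exactness result (Theorem~\ref{theo:1}) tells us the minimum of $F$ over $\Delta_k^N$ is attained at a one-hot matrix, where $F$ equals the original objective $\sum_{i,j}[\bm{W}]_{i,j}\mathbbm{1}\{r_i=r_j\}\ge 0$, so $F(\bm{X};\rho)\ge 0 \ge -\frac{\rho}{2}N$. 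Either route gives the stated lower bound.

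I do not anticipate a genuine obstacle here; both parts are routine estimates. The only point requiring mild care is keeping the constant in the smoothness bound consistent: one must verify that the Frobenius operator-norm bound $\|\bm{A}\bm{B}\|_F \le \|\bm{A}\|_F\|\bm{B}\|_F$ is applied correctly (it is the submultiplicativity of the Frobenius norm), and that the factor of $2$ from $\|\bm{W}-\frac{\rho}{2}\bm{I}\|_F = \frac12\|2\bm{W}-\rho\bm{I}\|_F$ cancels exactly against the factor of $2$ from $\KL \ge \frac12\|\cdot\|_F^2$, so that $L = \|2\bm{W}-\rho\bm{I}\|_F$ with no stray constant. For the boundedness part, the only subtlety is justifying $\bm{x}_i^T\bm{x}_j\le 1$ for columns of a simplex product, which follows from $\bm{x}_i^T\bm{x}_j\le \|\bm{x}_i\|_1\|\bm{x}_j\|_\infty \le 1\cdot 1$.
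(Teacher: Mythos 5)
Your proof of Part~1 is correct and is essentially the paper's argument: both amount to Lipschitz continuity of $\nabla F(\cdot;\rho)=\bm{X}(2\bm{W}-\rho\bm{I})$ with constant $\Vert 2\bm{W}-\rho\bm{I}\Vert_F$ (which for a quadratic is the same as your exact second-order expansion plus a Frobenius Cauchy--Schwarz/submultiplicativity bound on the remainder), followed by the conversion $\tfrac12\Vert\bm{Y}-\bm{X}\Vert_F^2\le \KL(\bm{Y},\bm{X})$ from Lemma~\ref{lem:LemKL}. The constants match, and your bookkeeping of the factor of $2$ is right. Your lower bound in Part~2 is also fine; in fact your sharper observation $F(\bm{X};\rho)\ge 0$ holds directly, without invoking Theorem~\ref{theo:1}, because both $\mathrm{Tr}(\bm{X}\bm{W}\bm{X}^T)$ and $\tfrac{\rho}{2}\sum_i(1-\Vert\bm{x}_i\Vert_2^2)$ are nonnegative on $\Delta_k^N$.

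There is, however, a sign error in your upper-bound argument at the step ``$\mathrm{Tr}(\bm{X}\bm{X}^T)\le N$, which makes $-\tfrac{\rho}{2}\mathrm{Tr}(\bm{X}\bm{X}^T)+\tfrac{N\rho}{2}\le 0$.'' From $\mathrm{Tr}(\bm{X}\bm{X}^T)\le N$ you get $-\tfrac{\rho}{2}\mathrm{Tr}(\bm{X}\bm{X}^T)\ge-\tfrac{\rho N}{2}$, hence $-\tfrac{\rho}{2}\mathrm{Tr}(\bm{X}\bm{X}^T)+\tfrac{N\rho}{2}=\tfrac{\rho}{2}\sum_i(\Vert\bm{x}_i\Vert_1^2-\Vert\bm{x}_i\Vert_2^2)\ge 0$, i.e.\ the penalty part is \emph{nonnegative}, vanishing only at the one-hot vertices. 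Consequently $F(\bm{X};\rho)\ge\mathrm{Tr}(\bm{X}\bm{W}\bm{X}^T)$ rather than $\le$, and the asserted bound $F(\bm{X};\rho)\le\Vert\bm{W}\Vert_{1,1}$ actually fails at interior points when $\rho>0$: for $N=1$, $\bm{W}=0$, and $\bm{x}_1=\tfrac1k\bm{1}_k$ one gets $F=\tfrac{\rho}{2}(1-\tfrac1k)>0=\Vert\bm{W}\Vert_{1,1}$. You should know that the paper's own proof contains the identical flaw (it asserts $F(\bm{X};\rho)\le\mathrm{Tr}(\bm{X}\bm{W}\bm{X}^T)$ in the same wrong direction), so you have reproduced the paper's gap rather than introduced a new one. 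The repair is routine: bound the penalty part by $\tfrac{\rho}{2}\sum_i(1-\Vert\bm{x}_i\Vert_2^2)\le\tfrac{\rho N}{2}(1-\tfrac1k)\le\tfrac{\rho N}{2}$, giving $F(\bm{X};\rho)\le\Vert\bm{W}\Vert_{1,1}+\tfrac{\rho N}{2}$. Paired with the sharper lower bound $F\ge 0$, this yields $F_{\max}-F_{\min}\le\Vert\bm{W}\Vert_{1,1}+\tfrac{\rho N}{2}$, so the constant $2\Vert\bm{W}\Vert_{1,1}+\rho N$ in Theorem~\ref{thm:ConvMD} and the $\mathcal{O}(\sqrt{1/\epsilon_1})$ rate survive unchanged.
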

    \begin{proof}
        \textbf{Part \ref{lem:LemObj-1}.} Considering that $\frac{\partial F(\bm{X};\rho)}{\partial \bm{X}}=\bm{X}(2\bm{W}-\rho \bm{I})$, we have 
        \begin{align}
            \Vert\nabla F(\bm{X};\rho)-\nabla F(\bm{Y};\rho)\Vert_F &=\Vert(\bm{X}-\bm{Y})(2\bm{W}-\rho\bm{I})\Vert_F\notag\\
            &\le \Vert\bm{X}-\bm{Y}\Vert_F\Vert(2\bm{W}-\rho\bm{I})\Vert_F.\notag
        \end{align}
        
        Thus, $F(\bm{X};\rho)$ is $\Vert(2\bm{W}-\rho\bm{I})\Vert_F$-smooth with respect to $\Vert\cdot\Vert_F$, which means 
        \begin{align*}
         F(\bm{Y};\rho)&\le F(\bm{X};\rho)+\left\langle\nabla F(\bm{X};\rho),\bm{Y}-\bm{X}\right\rangle\\
         &+\frac{\Vert 2\bm{W}-\rho\bm{I}\Vert_F}{2}\Vert Y-X\Vert_F^2 \\
        & \overset{(a)}{\le} F(\bm{X};\rho)+\left\langle\nabla F(\bm{X};\rho),\bm{Y}-\bm{X}\right\rangle\\
        &+\Vert 2\bm{W}-\rho\bm{I}\Vert_F \KL  (\bm{Y},\bm{X}),
        \end{align*}
        where (a) is due to Lemma \ref{lem:LemKL}.
        
        Therefore, $F(\bm{X};\rho)$ is $\Vert 2\bm{W}-\rho \bm{I}\Vert_F$-smooth with respect to $ \KL  (\cdot)$.
        
        \textbf{Part \ref{lem:LemObj-2}.} For the upper bound, we derive
        \begin{align*}
            F(\bm{X};\rho)\le \text{Tr}(\bm{X}\bm{W}\bm{X}^T) 
            =\sum_{i=1}^n\sum_{j=1}^n [\bm{W}]_{i,j}x_i^Tx_j 
            \le \Vert\bm{W}\Vert_{1,1}.
        \end{align*}
        
        For the lower bound, we obtain
        \begin{align*}
            F(\bm{X};\rho)&\ge -\frac{\rho}{2}\text{Tr}(\bm{X}\bm{X}^T) 
            =-\frac{\rho}{2}\sum_{i=1}^N\Vert\bm{x}_i\Vert_2^2 \\
            &\ge -\frac{\rho}{2}\sum_{i=1}^N\Vert\bm{x}_i\Vert_1^2
            = -\frac{\rho}{2}N.
        \end{align*}
    \end{proof}

    Now, we establish a sufficient decrease lemma for MD. 

    \begin{lemma}[Sufficient Decrease Lemma]
    For the step size $\{t_m\}$ such that $0<t_m<\frac{1}{L+1}$, $\{\bm{X}^{(m)}\}_{m=0}^{T}$ is sequence generated by MD. Then, $F(\bm{X}^{(m+1)};\rho)\le F(\bm{X}^{(m)};\rho)-(\frac{1}{t_m}-L) \KL  (\bm{X}^{(m+1)},\bm{X}^{(m)})$.
    \label{lemma:2}
    \end{lemma}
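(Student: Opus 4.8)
The plan is to combine two ingredients: the first-order optimality of the mirror-descent subproblem, and the relative $L$-smoothness of $F(\cdot;\rho)$ recorded in Lemma~\ref{lem:LemObj}. Write $\bm{G}^{(m)}:=\nabla F(\bm{X}^{(m)};\rho)=\bm{X}^{(m)}(2\bm{W}-\rho\bm{I})$, so that the update \eqref{alg:MD} reads $\bm{X}^{(m+1)}=\argmin_{\bm{X}\in\Delta_k^N}\{\,t_m\langle\bm{G}^{(m)},\bm{X}\rangle+\KL(\bm{X},\bm{X}^{(m)})\,\}$. Because the entrywise formula \eqref{KLMDup} keeps every iterate in the relative interior of $\Delta_k^N$, the generating function $\omega$ in \eqref{for:omega} is differentiable at $\bm{X}^{(m)}$ and $\bm{X}^{(m+1)}$, so the subproblem is a smooth convex program over a convex set and admits a clean variational characterization.

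First I would write the optimality condition for $\bm{X}^{(m+1)}$: for every $\bm{X}\in\Delta_k^N$,
\[
\big\langle t_m\bm{G}^{(m)}+\nabla\omega(\bm{X}^{(m+1)})-\nabla\omega(\bm{X}^{(m)}),\ \bm{X}-\bm{X}^{(m+1)}\big\rangle\ \ge\ 0.
\]
Invoking the three-point identity for the Bregman divergence generated by $\omega$, namely $\langle\nabla\omega(\bm{X}^{(m+1)})-\nabla\omega(\bm{X}^{(m)}),\bm{X}-\bm{X}^{(m+1)}\rangle=\KL(\bm{X},\bm{X}^{(m)})-\KL(\bm{X},\bm{X}^{(m+1)})-\KL(\bm{X}^{(m+1)},\bm{X}^{(m)})$, the optimality inequality rearranges to
\[
t_m\langle\bm{G}^{(m)},\bm{X}-\bm{X}^{(m+1)}\rangle\ \ge\ \KL(\bm{X},\bm{X}^{(m+1)})+\KL(\bm{X}^{(m+1)},\bm{X}^{(m)})-\KL(\bm{X},\bm{X}^{(m)})
\]
for all $\bm{X}\in\Delta_k^N$. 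Specializing to $\bm{X}=\bm{X}^{(m)}$, so $\KL(\bm{X}^{(m)},\bm{X}^{(m)})=0$, and discarding the nonnegative term $\KL(\bm{X}^{(m)},\bm{X}^{(m+1)})\ge 0$ (nonnegativity of $\KL$ comes from convexity of $\omega$ in Lemma~\ref{lem:LemKL}), I obtain the ``linearized descent'' estimate
\[
\langle\bm{G}^{(m)},\bm{X}^{(m+1)}-\bm{X}^{(m)}\rangle\ \le\ -\tfrac{1}{t_m}\,\KL(\bm{X}^{(m+1)},\bm{X}^{(m)}).
\]

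Next I would apply the relative $L$-smoothness bound \eqref{l-smooth} from Lemma~\ref{lem:LemObj} with $\bm{Y}=\bm{X}^{(m+1)}$ and $\bm{X}=\bm{X}^{(m)}$, giving $F(\bm{X}^{(m+1)};\rho)\le F(\bm{X}^{(m)};\rho)+\langle\bm{G}^{(m)},\bm{X}^{(m+1)}-\bm{X}^{(m)}\rangle+L\,\KL(\bm{X}^{(m+1)},\bm{X}^{(m)})$. Substituting the linearized descent estimate and collecting the two $\KL$ terms yields exactly $F(\bm{X}^{(m+1)};\rho)\le F(\bm{X}^{(m)};\rho)-\big(\tfrac{1}{t_m}-L\big)\KL(\bm{X}^{(m+1)},\bm{X}^{(m)})$, which is the claim. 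I would remark that the hypothesis $0<t_m<\tfrac{1}{L+1}$ is not needed for this inequality itself; it is used afterwards to ensure $\tfrac{1}{t_m}-L>1>0$, so that the bound represents a genuine monotone decrease of the objective. The main obstacle is the bookkeeping in passing from the variational inequality to the three-point bound: one must keep the two arguments of each $\KL$ in the correct order and use the nonnegativity of the $\omega$-Bregman divergences before dropping a term — but once the optimality condition is written down correctly, the remainder is routine algebra.
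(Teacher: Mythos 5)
Your proposal is correct and follows essentially the same route as the paper: the variational (first-order optimality) characterization of the mirror-descent subproblem, a Bregman-divergence inequality obtained from the convexity of $\omega$, and the relative $L$-smoothness bound $F(\bm{Y};\rho)\le F(\bm{X};\rho)+\langle\nabla F(\bm{X};\rho),\bm{Y}-\bm{X}\rangle+L\,\KL(\bm{Y},\bm{X})$. Your use of the general three-point identity followed by specializing to $\bm{X}=\bm{X}^{(m)}$ and dropping $\KL(\bm{X}^{(m)},\bm{X}^{(m+1)})\ge 0$ reduces to exactly the symmetrized Bregman inequality the paper invokes, and your side remarks (iterates staying in the relative interior of $\Delta_k^N$, the step-size hypothesis being needed only to make the decrease strict) are accurate refinements rather than deviations.
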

    \begin{proof}
        We begin by proving a relevant inequality. Since $\omega$ is convex, we have
        \begin{align}
            \omega(\bm{X})\ge \omega(\bm{Y})+\left\langle\nabla\omega(\bm{Y}),\bm{X}-\bm{Y}\right\rangle, \quad \forall \bm{X},\bm{Y}\in\Delta_k^N.
        \end{align}
        This inequality leads to 
        \begin{align}
            \left\langle\nabla\omega(\bm{Y})-\nabla\omega(\bm{X}), \bm{Y}-\bm{X}\right\rangle\ge  \KL  (\bm{Y},\bm{X}),\forall \bm{X},\bm{Y}\in\Delta_k^N.\label{ineqn}
        \end{align}
        Considering that each iteration (\ref{for:prox:md}) is convex with respect to $\bm{X}$ and $\Delta_k^N$ is convex, the optimal $\bm{X}^{(m+1)}$ is a stationary point of the problem, implying that $\forall \bm{X}\in \Delta_k^N$, 
        \begin{align}
            &\left\langle 
                t_m\nabla F(\bm{X}^{(m)};\rho)
                + \nabla\omega(\bm{X}^{(m+1)})
                - \nabla\omega(\bm{X}^{(m)}), \right. \nonumber\\
            &\left. \quad \bm{X} - \bm{X}^{(m+1)} 
            \right\rangle \ge 0. \label{ineqn2}
        \end{align}
        Substituting $\bm{X}=\bm{X}^{(m)}$ into (\ref{ineqn2}) and applying (\ref{ineqn}), we obtain
        \begin{align}
            &\left\langle 
                \nabla F(\bm{X}^{(m)};\rho), 
                \bm{X}^{(m+1)} - \bm{X}^{(m)} 
            \right\rangle \nonumber\\
            &\quad \le -\frac{1}{t_m} \, \KL(\bm{X}^{(m+1)}, \bm{X}^{(m)}).
            \label{ineqn3}
        \end{align}
    
        Since $F(\bm{X};\rho)$ is $L$-smooth, combining (\ref{l-smooth}) and (\ref{ineqn3}), we have
        \begin{align}
            &F(\bm{X}^{(m+1)};\rho) \le F(\bm{X}^{(m)};\rho) \nonumber\\
            &\quad -\left(\frac{1}{t_m} - L\right) \KL(\bm{X}^{(m+1)}, \bm{X}^{(m)}).
            \label{for:lemma2}
        \end{align}
    \end{proof}

    Finally, by applying the sufficient decrease lemma and summing (\ref{for:lemma2}) from $0$ to $T-1$, we obtain:
    \begin{align}
        &\sum_{m=0}^{T-1} \left(\frac{1}{t_m} - L\right) 
        \KL(\bm{X}^{(m+1)}, \bm{X}^{(m)}) \nonumber\\
        &\quad \le F(\bm{X}^{(0)};\rho) - F(\bm{X}^{(T)};\rho).
    \end{align}
    which indicates
    \begin{align}
        &\min_{m=0,\cdots, T-1} \KL(\bm{X}^{(m+1)}, \bm{X}^{(m)})
        \sum_{m=0}^{T-1} \left(\frac{1}{t_m} - L\right) \nonumber\\
        &\quad \le F_{\max} - F_{\min}. 
    \end{align}
    By Lemma \ref{lem:LemObj}, supplementing $F_{\max}=\Vert\bm{W}\Vert_{1,1}$, $F_{\min}=-\frac{\rho N}{2}$, and choosing $t_m=\frac{1}{L+m+1}$, we derive
    \begin{align}
    &\min_{m=0, \cdots, T-1} \KL  (\bm{X}^{(m+1)}, \bm{X}^{(m)}) \\
    &\quad \le \frac{2\Vert\bm{W}\Vert_{1,1}+\rho N}{T(T+1)} \notag 
    \le \epsilon.\label{for:theo3}
    \end{align}

   Finally, we obtain the convergence rate as
    \begin{align}
    T\ge\sqrt{\frac{2\Vert\bm{W}\Vert_{1,1}+\rho N}{\epsilon}}\approx \mathcal{O}\left(\sqrt{\frac{1}{\epsilon}}\right).
    \end{align}
}


{
\bibliographystyle{IEEEtran}	
 \bibliography{IEEEabrv,Reference}
 }

\clearpage
\onecolumn

\begin{center}
    \LARGE \bfseries Supplementary Materials of ``Relaxation-Free Min-$k$-Partition\\ for PCI Assignment in 5G Networks''\\[2ex]
    \normalsize Yeqing Qiu, Chengpiao Huang, Ye Xue, Zhipeng Jiang, Dong Zhang, and Zhi-Quan Luo\\[1ex]
\end{center}

\vspace{2em}

\setcounter{section}{0}                 

\section{The full algorithm of the greedy quotient re-assignment for PCI range constraint satisfaction}

\begin{algorithm}[H]
\KwData{Set of neighboring and second-order neighboring relations $\mathcal{E}$, mod-$30$ clusters $\{\mathcal{C}_r\}_{r=0}^{29}$, sets of constraint-violating cells $\{\mathcal{I}_r\}_{r=0}^{29}$ within the mod-$30$ clusters, mod-$30$ value assignment $\bm{r} \in \mathbb{Z}^N_{30}$, current quotient assignment $\bm{q}$.}
\KwResult{Re-assigned quotient $\widetilde{\bm{q}}$.}

\For{$r=0$ \KwTo $29$}{
    \For{$i\in\mathcal{I}_r$}{$\deg_i \leftarrow \vert \{ j\in\mathcal{C}_r: (i,j)\in\mathcal{E}, \, q_i=q_j \} \vert$}
    Sort $\mathcal{I}_r$ in decreasing order of $\{\deg_v\}_{v\in\mathcal{I}_r}$ \\
    $\mathcal{J}_r \leftarrow \mathcal{C}_r\backslash\mathcal{I}_r$~~~ \textit{// set of constraint-satisfying cells} \\
    \For{$i\in\mathcal{J}_r$}{
    $\widetilde{q}_i\leftarrow q_i$
    }
    \For{$i\in\mathcal{I}_r$}{
    $\displaystyle \widetilde{q}_{i} \leftarrow
    \argmin_{q:\,30q+r_i\le 1007}
    \left\vert
    \left\{
    j\in\mathcal{J}_r: (i,j)\in\mathcal{E}, \, q=\widetilde{q}_j
    \right\}
    \right\vert$ \\[4pt]
    $\mathcal{J}_r\leftarrow\mathcal{J}_r\cup\{i\}$
    }
}
\KwRet{$\widetilde{\bm{q}}=\{ \widetilde{q}_i \}_{i\in\mathcal{V}}$}
\caption{Greedy Quotient Re-assignment for PCI Range Constraint Satisfaction\label{alg:repair}}
\end{algorithm}

\section{Ablation Study on Hyperparameters}

To assess the robustness of the hyperparameters, we conducted an ablation study on $\mathrm{RGG}(200, 0.3)$ graphs by scaling each hyperparameter up and down. As shown in Table~\ref{tab:sensitivity}, our methods remain stable under such perturbations.

\begin{table}[H]
\centering
\caption{Sensitivity analysis of GP-PMD on $(N=200, r=0.3)$ RGG instances. The original setting is $\epsilon_1=10^{-5}, \epsilon_2=10^{-10},\gamma=1.1, \rho_0=10^{-8}$. We perturb one parameter each time and test the overall performance. }
\label{tab:sensitivity}
\begin{tabular}{@{}lccccc@{}}
\toprule
\textbf{Setting} & \textbf{Collision $\downarrow$} & \textbf{Confusion $\downarrow$} & \textbf{Mod-3 $\downarrow$} & \textbf{Mod-30 $\downarrow$} & \textbf{Time (s) $\downarrow$} \\
\midrule
Original                & 0.0000 ± 0.0000 & 0.0000 ± 0.0000 & 31065.5022 ± 1052.4893 & 86.7920 ± 59.3875  & 195.6269 ± 15.6531 \\
$\epsilon_1 \times 10$  & 0.0000 ± 0.0000 & 0.0000 ± 0.0000 & 31059.7002 ± 1050.3168 & 117.8055 ± 57.7300 & 200.6586 ± 21.1356 \\
$\epsilon_1 \times 0.1$ & 0.0000 ± 0.0000 & 0.0000 ± 0.0000 & 31044.0332 ± 1046.6022 & 73.9605 ± 52.5553  & 200.6314 ± 14.4452 \\
$\epsilon_2 \times 10$  & 0.0000 ± 0.0000 & 0.0000 ± 0.0000 & 31065.5022 ± 1052.4893 & 86.7920 ± 59.3875  & 195.9044 ± 16.3598 \\
$\epsilon_3 \times 0.1$ & 0.0000 ± 0.0000 & 0.0000 ± 0.0000 & 31065.5022 ± 1052.4893 & 86.7920 ± 59.3875  & 195.7012 ± 15.9792 \\
$\gamma \times 0.918$   & 0.0000 ± 0.0000 & 0.0000 ± 0.0000 & 31065.5022 ± 1052.4893 & 84.6270 ± 57.6500  & 202.4566 ± 15.7155 \\
$\gamma \times 1.1$     & 0.0000 ± 0.0000 & 0.0000 ± 0.0000 & 31068.1502 ± 1053.6886 & 90.6242 ± 58.4496  & 194.0473 ± 16.5126 \\
$\rho_0 \times 10$      & 0.0000 ± 0.0000 & 0.0000 ± 0.0000 & 31065.5022 ± 1052.4893 & 86.7920 ± 59.3875  & 196.0779 ± 16.4285 \\
$\rho_0 \times 0.1$     & 0.0000 ± 0.0000 & 0.0000 ± 0.0000 & 31065.5022 ± 1052.4893 & 86.7920 ± 59.3875  & 197.1436 ± 16.4936 \\
\bottomrule
\end{tabular}
\end{table}

\section{Detailed Results on Synthetic Data}

To provide a more comprehensive view of the experimental results on synthetic data, we present detailed box plots in Figs.~\ref{fig:boxplot_coll}, \ref{fig:boxplot_conf}, \ref{fig:boxplot_mod3}, \ref{fig:boxplot_mod30}, and \ref{fig:boxplot_time}, corresponding respectively to collision, confusion, mod-3 interference, mod-30 interference, and computational time. Note that the absence of visible boxes in the plots indicates that the corresponding metric attains zero across all test cases, reflecting a consistently optimal outcome for that criterion.

\begin{figure}[H]
    \centering
    \includegraphics[width=0.75\linewidth]{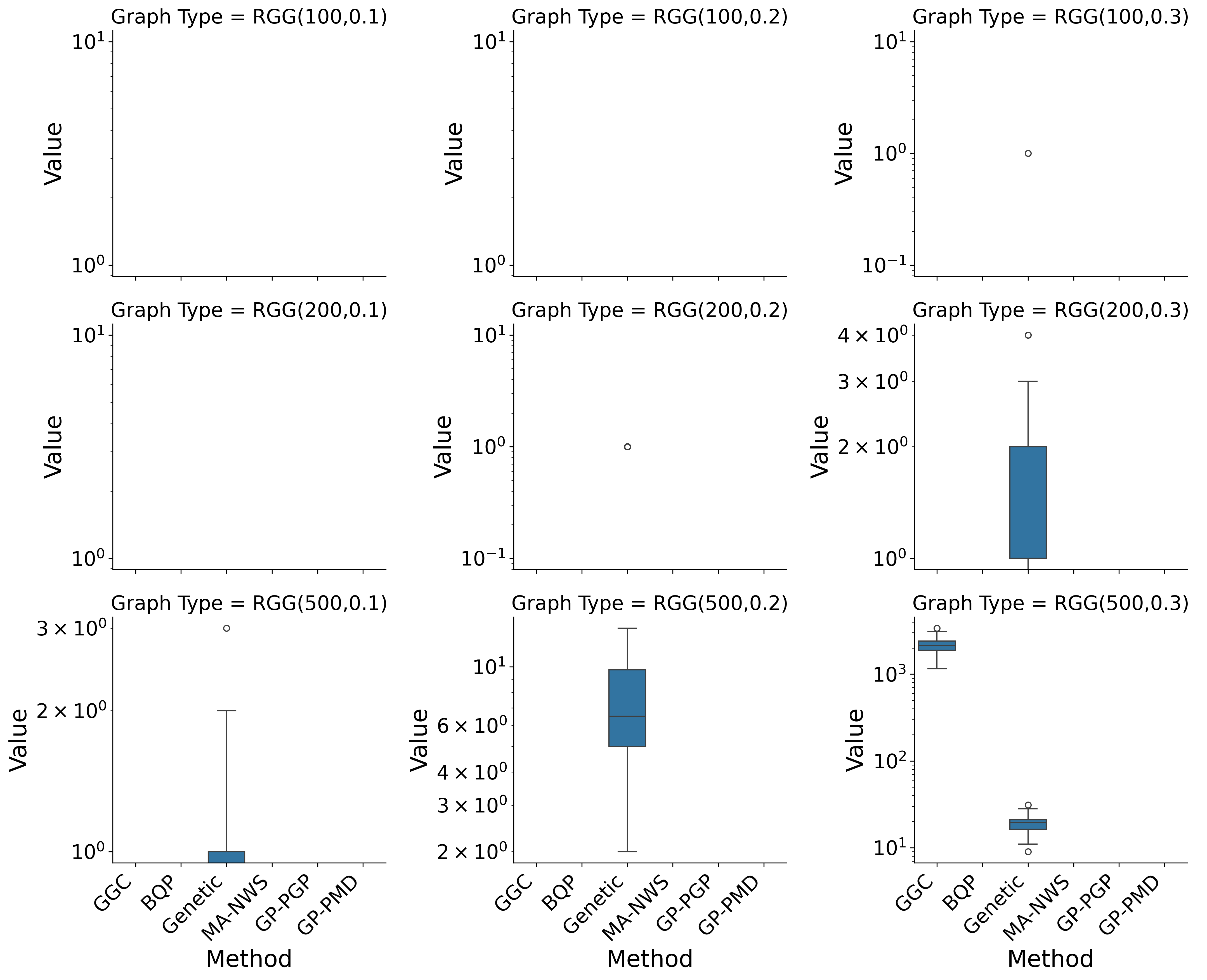}
    \caption{Boxplot of collision across methods and graph types.}
    \label{fig:boxplot_coll}
\end{figure}
\begin{figure}[H]
    \centering
    \includegraphics[width=0.75\linewidth]{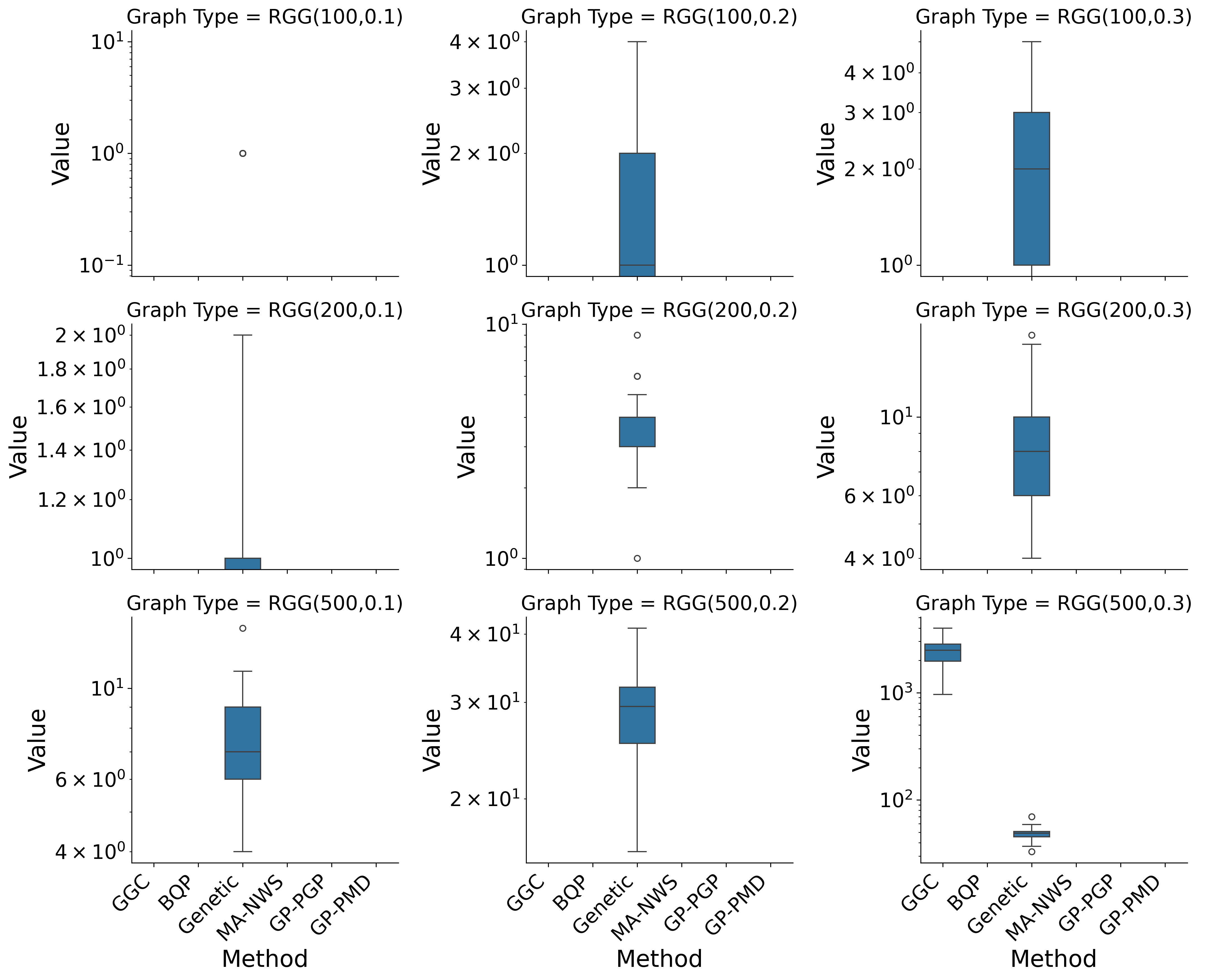}
    \caption{Boxplot of confusion across methods and graph types.}
    \label{fig:boxplot_conf}
\end{figure}
\begin{figure}[H]
    \centering
    \includegraphics[width=0.75\linewidth]{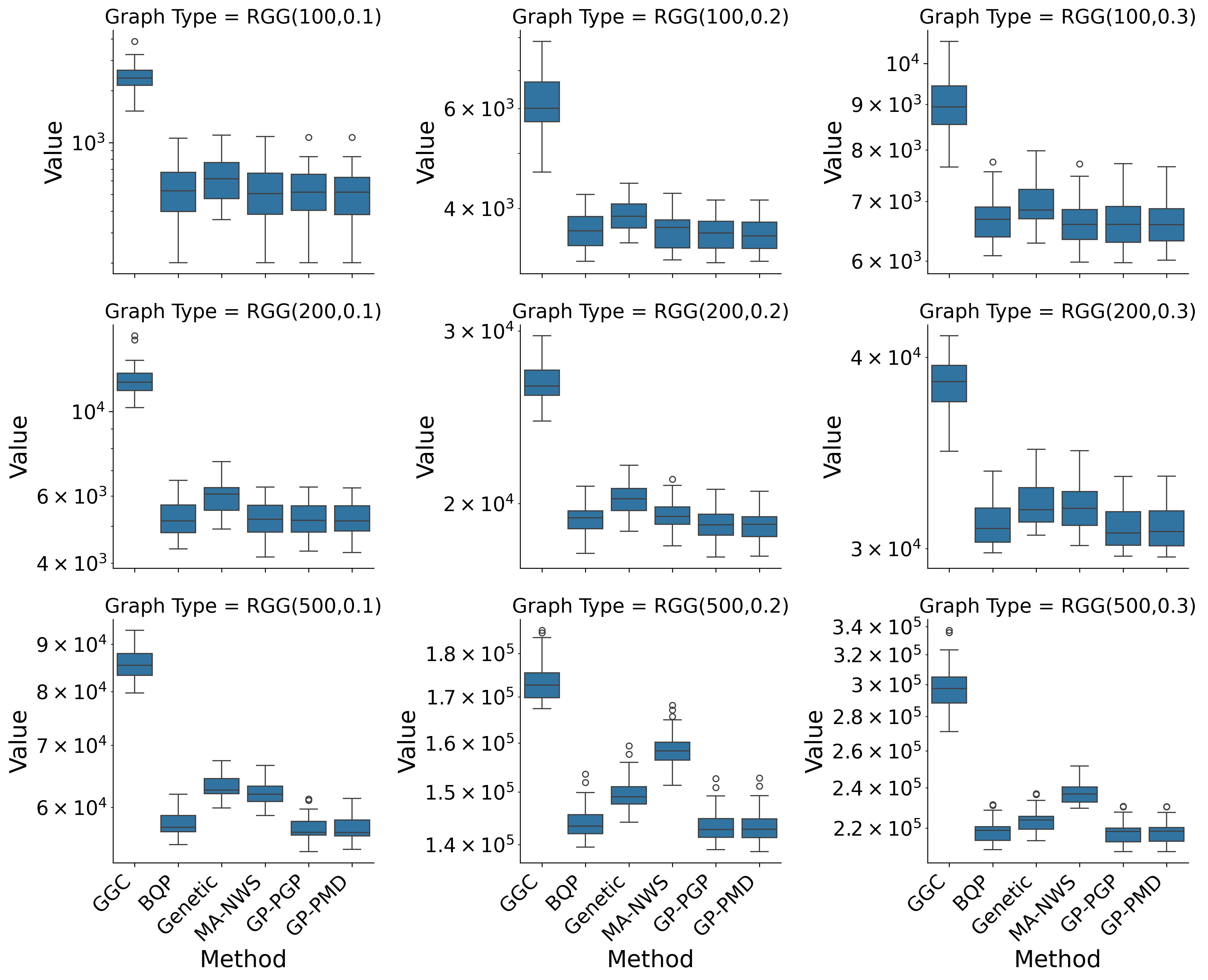}
    \caption{Boxplot of mod-3 interference across methods and graph types.}
    \label{fig:boxplot_mod3}
\end{figure}
\begin{figure}[H]
    \centering
    \includegraphics[width=0.75\linewidth]{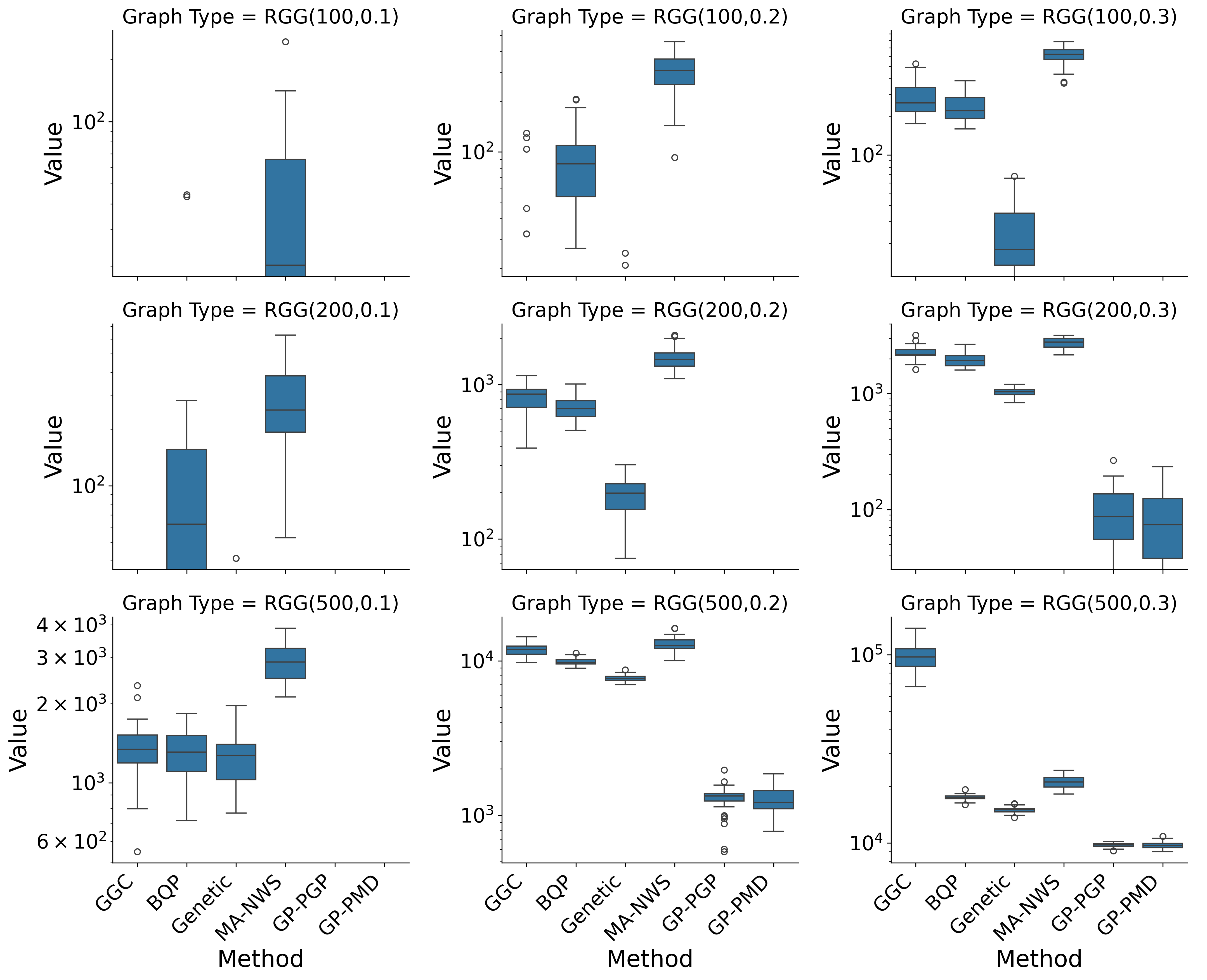}
    \caption{Boxplot of mod-30 interference across methods and graph types.}
    \label{fig:boxplot_mod30}
\end{figure}
\begin{figure}[H]
    \centering
    \includegraphics[width=0.75\linewidth]{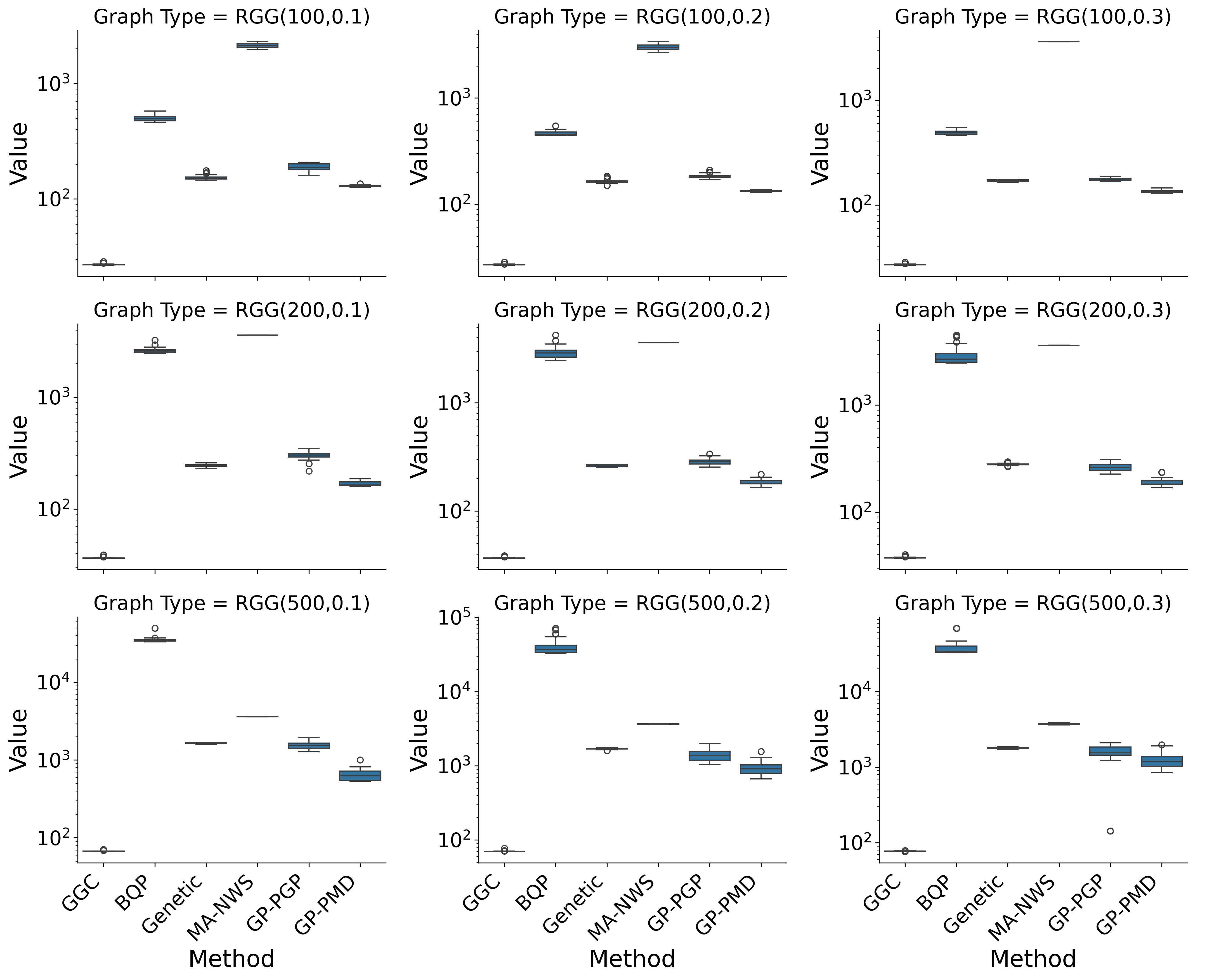}
    \caption{Boxplot of computational time across methods and graph types.}
    \label{fig:boxplot_time}
\end{figure}

\section{Detailed Results of statistical hypothesis tests}

To assess statistical significance, we applied the non-parametric Mann-Whitney U test. A $p$-value less than $0.05$ indicates significant performance differences between algorithms at the $0.05$ significance level. Moreover, to quantify the strength of these differences, we employed the Vargha-Delaney A test, which estimates the probability that one method outperforms another. An $A$-value above $0.5$ favors our method (GP-PMD); $A = 1$ implies complete dominance and $A = 0.5$ indicates no significant difference. Tables in Figs. ~\ref{tab:testing_coll}, \ref{tab:testing_conf}, \ref{tab:testing_mod3}, \ref{tab:testing_mod30}, \ref{tab:testing_time}  report $p$-value and A statistics for each algorithm pair across all graph types and objectives. Results confirm that GP-PMD statistically and practically outperforms all baseline methods.

\begin{figure}[H]
\centering
\renewcommand{\arraystretch}{1.1}
\begin{tabular}{ccc}
\begin{minipage}[t]{0.3\textwidth}
\centering
\fontsize{7pt}{9pt}\selectfont
\textbf{RGG(100, 0.1)} \\
\begin{tabular}{lccc}
\toprule
Method & $p$-value & $A$ & Winner \\
\midrule
GGC     & $1.00$ & $0.50$ & Equal \\
Genetic & $1.00$ & $0.50$ & Equal \\
BQP     & $1.00$ & $0.50$ & Equal \\
MA-NWS  & $1.00$ & $0.50$ & Equal \\
GP-PGP  & $1.00$ & $0.50$ & Equal \\
\bottomrule
\end{tabular}
\end{minipage}
&
\begin{minipage}[t]{0.3\textwidth}
\centering
\fontsize{7pt}{9pt}\selectfont
\textbf{RGG(100, 0.2)} \\
\begin{tabular}{lccc}
\toprule
Method & $p$-value & $A$ & Winner \\
\midrule
GGC     & $1.00$ & $0.50$ & Equal \\
Genetic & $1.00$ & $0.50$ & Equal \\
BQP     & $1.00$ & $0.50$ & Equal \\
MA-NWS  & $1.00$ & $0.50$ & Equal \\
GP-PGP  & $1.00$ & $0.50$ & Equal \\
\bottomrule
\end{tabular}
\end{minipage}
&
\begin{minipage}[t]{0.3\textwidth}
\centering
\fontsize{7pt}{9pt}\selectfont
\textbf{RGG(100, 0.3)} \\
\begin{tabular}{lccc}
\toprule
Method & $p$-value & $A$ & Winner \\
\midrule
GGC     & $1.00$ & $0.50$ & Equal \\
Genetic & $0.33$ & $0.52$ & GP-PMD \\
BQP     & $1.00$ & $0.50$ & Equal \\
MA-NWS  & $1.00$ & $0.50$ & Equal \\
GP-PGP  & $1.00$ & $0.50$ & Equal \\
\bottomrule
\end{tabular}
\end{minipage}
\\[1em]

\begin{minipage}[t]{0.3\textwidth}
\centering
\fontsize{7pt}{9pt}\selectfont
\textbf{RGG(200, 0.1)} \\
\begin{tabular}{lccc}
\toprule
Method & $p$-value & $A$ & Winner \\
\midrule
GGC     & $1.00$ & $0.50$ & Equal \\
Genetic & $1.00$ & $0.50$ & Equal \\
BQP     & $1.00$ & $0.50$ & Equal \\
MA-NWS  & $1.00$ & $0.50$ & Equal \\
GP-PGP  & $1.00$ & $0.50$ & Equal \\
\bottomrule
\end{tabular}
\end{minipage}
&
\begin{minipage}[t]{0.3\textwidth}
\centering
\fontsize{7pt}{9pt}\selectfont
\textbf{RGG(200, 0.2)} \\
\begin{tabular}{lccc}
\toprule
Method & $p$-value & $A$ & Winner \\
\midrule
GGC     & $1.00$ & $0.50$ & Equal \\
Genetic & $0.01$ & $0.60$ & GP-PMD \\
BQP     & $1.00$ & $0.50$ & Equal \\
MA-NWS  & $1.00$ & $0.50$ & Equal \\
GP-PGP  & $1.00$ & $0.50$ & Equal \\
\bottomrule
\end{tabular}
\end{minipage}
&
\begin{minipage}[t]{0.3\textwidth}
\centering
\fontsize{7pt}{9pt}\selectfont
\textbf{RGG(200, 0.3)} \\
\begin{tabular}{lccc}
\toprule
Method & $p$-value & $A$ & Winner \\
\midrule
GGC     & $1.00$ & $0.50$ & Equal \\
Genetic & $\ll 0.01$ & $0.92$ & GP-PMD \\
BQP     & $1.00$ & $0.50$ & Equal \\
MA-NWS  & $1.00$ & $0.50$ & Equal \\
GP-PGP  & $1.00$ & $0.50$ & Equal \\
\bottomrule
\end{tabular}
\end{minipage}
\\[1em]

\begin{minipage}[t]{0.3\textwidth}
\centering
\fontsize{7pt}{9pt}\selectfont
\textbf{RGG(500, 0.1)} \\
\begin{tabular}{lccc}
\toprule
Method & $p$-value & $A$ & Winner \\
\midrule
GGC     & $1.00$ & $0.50$ & Equal \\
Genetic & $\ll 0.01$ & $0.72$ & GP-PMD \\
BQP     & $1.00$ & $0.50$ & Equal \\
MA-NWS  & $1.00$ & $0.50$ & Equal \\
GP-PGP  & $1.00$ & $0.50$ & Equal \\
\bottomrule
\end{tabular}
\end{minipage}
&
\begin{minipage}[t]{0.3\textwidth}
\centering
\fontsize{7pt}{9pt}\selectfont
\textbf{RGG(500, 0.2)} \\
\begin{tabular}{lccc}
\toprule
Method & $p$-value & $A$ & Winner \\
\midrule
GGC     & $1.00$ & $0.50$ & Equal \\
Genetic & $\ll 0.01$ & $1.00$ & GP-PMD \\
BQP     & $1.00$ & $0.50$ & Equal \\
MA-NWS  & $1.00$ & $0.50$ & Equal \\
GP-PGP  & $1.00$ & $0.50$ & Equal \\
\bottomrule
\end{tabular}
\end{minipage}
&
\begin{minipage}[t]{0.3\textwidth}
\centering
\fontsize{7pt}{9pt}\selectfont
\textbf{RGG(500, 0.3)} \\
\begin{tabular}{lccc}
\toprule
Method & $p$-value & $A$ & Winner \\
\midrule
GGC     & $\ll 0.01$ & $1.00$ & GP-PMD \\
Genetic & $\ll 0.01$ & $1.00$ & GP-PMD \\
BQP     & $1.00$ & $0.50$ & Equal \\
MA-NWS  & $1.00$ & $0.50$ & Equal \\
GP-PGP  & $1.00$ & $0.50$ & Equal \\
\bottomrule
\end{tabular}
\end{minipage}
\end{tabular}
\caption{GP-PMD versus other algorithms on number of collisions}
\label{tab:testing_coll}
\end{figure}

\begin{figure}[H]
\centering
\renewcommand{\arraystretch}{1.1}
\begin{tabular}{ccc}
\begin{minipage}[t]{0.3\textwidth}
\centering
\fontsize{7pt}{9pt}\selectfont
\textbf{RGG(100, 0.1)} \\
\begin{tabular}{lccc}
\toprule
Method & $p$-value & $A$ & Winner \\
\midrule
GGC     & $1.00$ & $0.50$ & Equal \\
Genetic & $0.01$ & $0.60$ & GP-PMD \\
BQP     & $1.00$ & $0.50$ & Equal \\
MA-NWS  & $1.00$ & $0.50$ & Equal \\
GP-PGP  & $1.00$ & $0.50$ & Equal \\
\bottomrule
\end{tabular}
\end{minipage}
&
\begin{minipage}[t]{0.3\textwidth}
\centering
\fontsize{7pt}{9pt}\selectfont
\textbf{RGG(100, 0.2)} \\
\begin{tabular}{lccc}
\toprule
Method & $p$-value & $A$ & Winner \\
\midrule
GGC     & $1.00$ & $0.50$ & Equal \\
Genetic & $\ll 0.01$ & $0.85$ & GP-PMD \\
BQP     & $1.00$ & $0.50$ & Equal \\
MA-NWS  & $1.00$ & $0.50$ & Equal \\
GP-PGP  & $1.00$ & $0.50$ & Equal \\
\bottomrule
\end{tabular}
\end{minipage}
&
\begin{minipage}[t]{0.3\textwidth}
\centering
\fontsize{7pt}{9pt}\selectfont
\textbf{RGG(100, 0.3)} \\
\begin{tabular}{lccc}
\toprule
Method & $p$-value & A & Winner \\
\midrule
GGC     & $1.00$ & $0.50$ & Equal \\
Genetic & $\ll 0.01$ & $0.93$ & GP-PMD \\
BQP     & $1.00$ & $0.50$ & Equal \\
MA-NWS  & $1.00$ & $0.50$ & Equal \\
GP-PGP  & $1.00$ & $0.50$ & Equal \\
\bottomrule
\end{tabular}
\end{minipage}
\\[1em]

\begin{minipage}[t]{0.3\textwidth}
\centering
\fontsize{7pt}{9pt}\selectfont
\textbf{RGG(200, 0.1)} \\
\begin{tabular}{lccc}
\toprule
Method & $p$-value & $A$ & Winner \\
\midrule
GGC     & $1.00$ & $0.50$ & Equal \\
Genetic & $1.00$ & $0.50$ & Equal \\
BQP     & $\ll 0.01$ & $0.72$ & GP-PMD \\
MA-NWS  & $1.00$ & $0.50$ & Equal \\
GP-PGP  & $1.00$ & $0.50$ & Equal \\
\bottomrule
\end{tabular}
\end{minipage}
&
\begin{minipage}[t]{0.3\textwidth}
\centering
\fontsize{7pt}{9pt}\selectfont
\textbf{RGG(200, 0.2)} \\
\begin{tabular}{lccc}
\toprule
Method & $p$-value & $A$ & Winner \\
\midrule
GGC     & $1.00$ & $0.50$ & Equal \\
Genetic & $\ll0.01$ & $0.98$ & GP-PMD \\
BQP     & $1.00$ & $0.50$ & Equal \\
MA-NWS  & $1.00$ & $0.50$ & Equal \\
GP-PGP  & $1.00$ & $0.50$ & Equal \\
\bottomrule
\end{tabular}
\end{minipage}
&
\begin{minipage}[t]{0.3\textwidth}
\centering
\fontsize{7pt}{9pt}\selectfont
\textbf{RGG(200, 0.3)} \\
\begin{tabular}{lccc}
\toprule
Method & $p$-value & $A$ & Winner \\
\midrule
GGC     & $1.00$ & $0.50$ & Equal \\
Genetic & $\ll 0.01$ & $1.00$ & GP-PMD \\
BQP     & $1.00$ & $0.50$ & Equal \\
MA-NWS  & $1.00$ & $0.50$ & Equal \\
GP-PGP  & $1.00$ & $0.50$ & Equal \\
\bottomrule
\end{tabular}
\end{minipage}
\\[1em]

\begin{minipage}[t]{0.3\textwidth}
\centering
\fontsize{7pt}{9pt}\selectfont
\textbf{RGG(500, 0.1)} \\
\begin{tabular}{lccc}
\toprule
Method & $p$-value & $A$ & Winner \\
\midrule
GGC     & $1.00$ & $0.50$ & Equal \\
Genetic & $\ll 0.01$ & $1.00$ & GP-PMD \\
BQP     & $1.00$ & $0.50$ & Equal \\
MA-NWS  & $1.00$ & $0.50$ & Equal \\
GP-PGP  & $1.00$ & $0.50$ & Equal \\
\bottomrule
\end{tabular}
\end{minipage}
&
\begin{minipage}[t]{0.3\textwidth}
\centering
\fontsize{7pt}{9pt}\selectfont
\textbf{RGG(500, 0.2)} \\
\begin{tabular}{lccc}
\toprule
Method & $p$-value & A & Winner \\
\midrule
GGC     & $1.00$ & $0.50$ & Equal \\
Genetic & $\ll 0.01$ & $1.00$ & GP-PMD \\
BQP     & $1.00$ & $0.50$ & Equal \\
MA-NWS  & $1.00$ & $0.50$ & Equal \\
GP-PGP  & $1.00$ & $0.50$ & Equal \\
\bottomrule
\end{tabular}
\end{minipage}
&
\begin{minipage}[t]{0.3\textwidth}
\centering
\fontsize{7pt}{9pt}\selectfont
\textbf{RGG(500, 0.3)} \\
\begin{tabular}{lccc}
\toprule
Method & $p$-value & $A$ & Winner \\
\midrule
GGC     & $\ll 0.01$ & $1.00$ & GP-PMD \\
Genetic & $\ll 0.01$ & $1.00$ & GP-PMD \\
BQP     & $1.00$ & $0.50$ & Equal \\
MA-NWS  & $1.00$ & $0.50$ & Equal \\
GP-PGP  & $1.00$ & $0.50$ & Equal \\
\bottomrule
\end{tabular}
\end{minipage}
\end{tabular}
\caption{GP-PMD versus other algorithms on number of confusions}
\label{tab:testing_conf}
\end{figure}

\begin{figure}[H]
\centering
\renewcommand{\arraystretch}{1.1}
\begin{tabular}{ccc}
\begin{minipage}[t]{0.3\textwidth}
\centering
\fontsize{7pt}{9pt}\selectfont
\textbf{RGG(100, 0.1)} \\
\begin{tabular}{lccc}
\toprule
Method & $p$-value & $A$ & Winner \\
\midrule
GGC     & $\ll 0.01$ & $1.00$ & GP-PMD \\
Genetic & $0.07$ & $0.64$ & GP-PMD \\
BQP     & $0.77$ & $0.52$ & GP-PMD \\
MA-NWS  & $0.90$ & $0.51$ & GP-PMD \\
GP-PGP  & $0.92$ & $0.51$ & GP-PMD \\
\bottomrule
\end{tabular}
\end{minipage}
&
\begin{minipage}[t]{0.3\textwidth}
\centering
\fontsize{7pt}{9pt}\selectfont
\textbf{RGG(100, 0.2)} \\
\begin{tabular}{lccc}
\toprule
Method & $p$-value & $A$ & Winner \\
\midrule
GGC     & $\ll 0.01$ & $1.00$ & GP-PMD \\
Genetic & $\ll 0.01$ & $0.74$ & GP-PMD \\
BQP     & $0.38$ & $0.57$ & GP-PMD \\
MA-NWS  & $0.46$ & $0.56$ & GP-PMD \\
GP-PGP  & $1.00$ & $0.50$ & Equal \\
\bottomrule
\end{tabular}
\end{minipage}
&
\begin{minipage}[t]{0.3\textwidth}
\centering
\fontsize{7pt}{9pt}\selectfont
\textbf{RGG(100, 0.3)} \\
\begin{tabular}{lccc}
\toprule
Method & $p$-value & $A$ & Winner \\
\midrule
GGC     & $\ll 0.01$ & $1.00$ & GP-PMD \\
Genetic & $0.01$ & $0.71$ & GP-PMD \\
BQP     & $0.49$ & $0.55$ & GP-PMD \\
MA-NWS  & $0.96$ & $0.50$ & Equal \\
GP-PGP  & $0.98$ & $0.50$ & Equal \\
\bottomrule
\end{tabular}
\end{minipage}
\\[1em]

\begin{minipage}[t]{0.3\textwidth}
\centering
\fontsize{7pt}{9pt}\selectfont
\textbf{RGG(200, 0.1)} \\
\begin{tabular}{lccc}
\toprule
Method & $p$-value & $A$ & Winner \\
\midrule
GGC     & $\ll 0.01$ & $1.00$ & GP-PMD \\
Genetic & $\ll 0.01$ & $1.00$ & GP-PMD \\
BQP     & $0.86$ & $0.51$ & GP-PMD \\
MA-NWS  & $0.95$ & $0.51$ & GP-PMD \\
GP-PGP  & $0.92$ & $0.51$ & GP-PMD \\
\bottomrule
\end{tabular}
\end{minipage}
&
\begin{minipage}[t]{0.3\textwidth}
\centering
\fontsize{7pt}{9pt}\selectfont
\textbf{RGG(200, 0.2)} \\
\begin{tabular}{lccc}
\toprule
Method & $p$-value & $A$ & Winner \\
\midrule
GGC     & $\ll 0.01$ & $1.00$ & GP-PMD \\
Genetic & $\ll0.01$ & $0.86$ & GP-PMD \\
BQP     & $0.23$ & $0.59$ & GP-PMD \\
MA-NWS  & $0.04$ & $0.66$ & GP-PMD \\
GP-PGP  & $0.91$ & $0.51$ & GP-PMD \\
\bottomrule
\end{tabular}
\end{minipage}
&
\begin{minipage}[t]{0.3\textwidth}
\centering
\fontsize{7pt}{9pt}\selectfont
\textbf{RGG(200, 0.3)} \\
\begin{tabular}{lccc}
\toprule
Method & $p$-value & $A$ & Winner \\
\midrule
GGC     & $\ll 0.01$ & $1.00$ & GP-PMD \\
Genetic & $\ll 0.01$ & $0.76$ & GP-PMD \\
BQP     & $0.41$ & $0.56$ & GP-PMD \\
MA-NWS  & $\ll 0.01$ & $0.73$ & GP-PMD \\
GP-PGP  & $0.98$ & $0.50$ & Equal \\
\bottomrule
\end{tabular}
\end{minipage}
\\[1em]

\begin{minipage}[t]{0.3\textwidth}
\centering
\fontsize{7pt}{9pt}\selectfont
\textbf{RGG(500, 0.1)} \\
\begin{tabular}{lccc}
\toprule
Method & $p$-value & $A$ & Winner \\
\midrule
GGC     & $\ll 0.01$ & $1.00$ & GP-PMD \\
Genetic & $\ll 0.01$ & $0.99$ & GP-PMD \\
BQP     & $0.08$ & $0.63$ & GP-PMD \\
MA-NWS  & $\ll 0.01$ & $0.97$ & GP-PMD \\
GP-PGP  & $0.98$ & $0.50$ & Equal \\
\bottomrule
\end{tabular}
\end{minipage}
&
\begin{minipage}[t]{0.3\textwidth}
\centering
\fontsize{7pt}{9pt}\selectfont
\textbf{RGG(500, 0.2)} \\
\begin{tabular}{lccc}
\toprule
Method & $p$-value & $A$ & Winner \\
\midrule
GGC     & $\ll 0.01$ & $1.00$ & GP-PMD \\
Genetic & $\ll 0.01$ & $0.90$ & GP-PMD \\
BQP     & $0.29$ & $0.58$ & GP-PMD \\
MA-NWS  & $\ll 0.01$ & $1.00$ & GP-PMD \\
GP-PGP  & $0.99$ & $0.50$ & Equal \\
\bottomrule
\end{tabular}
\end{minipage}
&
\begin{minipage}[t]{0.3\textwidth}
\centering
\fontsize{7pt}{9pt}\selectfont
\textbf{RGG(500, 0.3)} \\
\begin{tabular}{lccc}
\toprule
Method & $p$-value & $A$ & Winner \\
\midrule
GGC     & $\ll 0.01$ & $1.00$ & GP-PMD \\
Genetic & $\ll 0.01$ & $0.78$ & GP-PMD \\
BQP     & $0.46$ & $0.56$ & GP-PMD \\
MA-NWS  & $\ll 0.01$ & $0.99$ & GP-PMD \\
GP-PGP  & $0.99$ & $0.50$ & Equal \\
\bottomrule
\end{tabular}
\end{minipage}
\end{tabular}
\caption{GP-PMD versus other algorithms on mod-3 interference}
\label{tab:testing_mod3}
\end{figure}

\begin{figure}[H]
\centering
\renewcommand{\arraystretch}{1.1}
\begin{tabular}{ccc}
\begin{minipage}[t]{0.3\textwidth}
\centering
\fontsize{7pt}{9pt}\selectfont
\textbf{RGG(100, 0.1)} \\
\begin{tabular}{lccc}
\toprule
Method & $p$-value & $A$ & Winner \\
\midrule
GGC     & $1.00$ & $0.50$ & Equal \\
Genetic & $1.00$ & $0.50$ & Equal \\
BQP     & $0.16$ & $0.53$ & GP-PMD \\
MA-NWS  & $\ll 0.01$ & $0.75$ & GP-PMD \\
GP-PGP  & $1.00$ & $0.50$ & Equal \\
\bottomrule
\end{tabular}
\end{minipage}
&
\begin{minipage}[t]{0.3\textwidth}
\centering
\fontsize{7pt}{9pt}\selectfont
\textbf{RGG(100, 0.2)} \\
\begin{tabular}{lccc}
\toprule
Method & $p$-value & $A$ & Winner \\
\midrule
GGC     & $0.02$ & $0.58$ & GP-PMD \\
Genetic & $0.16$ & $0.58$ & GP-PMD \\
BQP     & $\ll 0.01$ & $1.00$ & GP-PMD \\
MA-NWS  & $\ll 0.01$ & $1.00$ & GP-PMD \\
GP-PGP  & $1.00$ & $0.50$ & Equal \\
\bottomrule
\end{tabular}
\end{minipage}
&
\begin{minipage}[t]{0.3\textwidth}
\centering
\fontsize{7pt}{9pt}\selectfont
\textbf{RGG(100, 0.3)} \\
\begin{tabular}{lccc}
\toprule
Method & $p$-value & $A$ & Winner \\
\midrule
GGC     & $\ll 0.01$ & $1.00$ & GP-PMD \\
Genetic & $\ll 0.01$ & $0.92$& GP-PMD \\
BQP     & $\ll 0.01$ & $1.00$ & GP-PMD \\
MA-NWS  & $\ll 0.01$ & $1.00$ & GP-PMD \\
GP-PGP  & $1.00$ & $0.50$ & Equal \\
\bottomrule
\end{tabular}
\end{minipage}
\\[1em]

\begin{minipage}[t]{0.3\textwidth}
\centering
\fontsize{7pt}{9pt}\selectfont
\textbf{RGG(200, 0.1)} \\
\begin{tabular}{lccc}
\toprule
Method & $p$-value & $A$ & Winner \\
\midrule
GGC     & $1.00$ & $0.50$ & Equal \\
Genetic & $0.33$ & $0.52$ & GP-PMD \\
BQP     & $\ll 0.01$ & $0.83$ & GP-PMD \\
MA-NWS  & $\ll 0.01$ & $1.00$ & GP-PMD \\
GP-PGP  & $1.00$ & $0.50$ & Equal \\
\bottomrule
\end{tabular}
\end{minipage}
&
\begin{minipage}[t]{0.3\textwidth}
\centering
\fontsize{7pt}{9pt}\selectfont
\textbf{RGG(200, 0.2)} \\
\begin{tabular}{lccc}
\toprule
Method & $p$-value & $A$ & Winner \\
\midrule
GGC     & $\ll 0.01$ & $1.00$ & GP-PMD \\
Genetic & $\ll0.01$ & $1.00$ & GP-PMD \\
BQP     & $\ll0.01$ & $1.00$ & GP-PMD \\
MA-NWS  & $\ll0.01$ & $1.00$ & GP-PMD \\
GP-PGP  & $1.00$ & $0.50$ & Equal \\
\bottomrule
\end{tabular}
\end{minipage}
&
\begin{minipage}[t]{0.3\textwidth}
\centering
\fontsize{7pt}{9pt}\selectfont
\textbf{RGG(200, 0.3)} \\
\begin{tabular}{lccc}
\toprule
Method & $p$-value & $A$ & Winner \\
\midrule
GGC     & $\ll 0.01$ & $1.00$ & GP-PMD \\
Genetic & $\ll 0.01$ & $1.00$ & GP-PMD \\
BQP     & $\ll 0.01$ & $1.00$ & GP-PMD \\
MA-NWS  & $\ll 0.01$ & $1.00$ & GP-PMD \\
GP-PGP  & $0.43$ & $0.56$ & GP-PMD \\
\bottomrule
\end{tabular}
\end{minipage}
\\[1em]

\begin{minipage}[t]{0.3\textwidth}
\centering
\fontsize{7pt}{9pt}\selectfont
\textbf{RGG(500, 0.1)} \\
\begin{tabular}{lccc}
\toprule
Method & $p$-value & $A$ & Winner \\
\midrule
GGC     & $\ll 0.01$ & $1.00$ & GP-PMD \\
Genetic & $\ll 0.01$ & $1.00$ & GP-PMD \\
BQP     & $\ll 0.01$ & $1.00$ & GP-PMD \\
MA-NWS  & $\ll 0.01$ & $1.00$ & GP-PMD \\
GP-PGP  & $1.00$ & $0.50$ & Equal \\
\bottomrule
\end{tabular}
\end{minipage}
&
\begin{minipage}[t]{0.3\textwidth}
\centering
\fontsize{7pt}{9pt}\selectfont
\textbf{RGG(500, 0.2)} \\
\begin{tabular}{lccc}
\toprule
Method & $p$-value & $A$ & Winner \\
\midrule
GGC     & $\ll 0.01$ & $1.00$ & GP-PMD \\
Genetic & $\ll 0.01$ & $1.00$ & GP-PMD \\
BQP     & $\ll 0.01$ & $1.00$ & GP-PMD \\
MA-NWS  & $\ll 0.01$ & $1.00$ & GP-PMD \\
GP-PGP  & $0.31$ & $0.58$ & GP-PMD \\
\bottomrule
\end{tabular}
\end{minipage}
&
\begin{minipage}[t]{0.3\textwidth}
\centering
\fontsize{7pt}{9pt}\selectfont
\textbf{RGG(500, 0.3)} \\
\begin{tabular}{lccc}
\toprule
Method & $p$-value & A & Winner \\
\midrule
GGC     & $\ll 0.01$ & $1.00$ & GP-PMD \\
Genetic & $\ll 0.01$ & $1.00$ & GP-PMD \\
BQP     & $\ll 0.01$ & $1.00$ & GP-PMD \\
MA-NWS  & $\ll 0.01$ & $1.00$ & GP-PMD \\
GP-PGP  & $0.78$ & $0.52$ & GP-PMD \\
\bottomrule
\end{tabular}
\end{minipage}
\end{tabular}
\caption{GP-PMD versus other algorithms on mod-30 interference}\label{tab:testing_mod30}
\end{figure}

\begin{figure}[H]
\centering
\renewcommand{\arraystretch}{1.1}
\begin{tabular}{ccc}
\begin{minipage}[t]{0.3\textwidth}
\centering
\fontsize{7pt}{9pt}\selectfont
\textbf{RGG(100, 0.1)} \\
\begin{tabular}{lccc}
\toprule
Method & $p$-value & $A$ & Winner \\
\midrule
GGC     & $\ll 0.01$ & $0.00$ & GGC \\
Genetic & $\ll 0.01$ & $1.00$ & GP-PMD \\
BQP     & $\ll 0.01$ & $1.00$ & GP-PMD \\
MA-NWS  & $\ll 0.01$ & $1.00$ & GP-PMD \\
GP-PGP  & $\ll 0.01$ & $1.00$ & GP-PMD \\
\bottomrule
\end{tabular}
\end{minipage}
&
\begin{minipage}[t]{0.3\textwidth}
\centering
\fontsize{7pt}{9pt}\selectfont
\textbf{RGG(100, 0.2)} \\
\begin{tabular}{lccc}
\toprule
Method & $p$-value & $A$ & Winner \\
\midrule
GGC     & $\ll 0.01$ & $0.00$ & GGC \\
Genetic & $\ll 0.01$ & $1.00$ & GP-PMD \\
BQP     & $\ll 0.01$ & $1.00$ & GP-PMD \\
MA-NWS  & $\ll 0.01$ & $1.00$ & GP-PMD \\
GP-PGP  & $\ll 0.01$ & $1.00$ & GP-PMD \\
\bottomrule
\end{tabular}
\end{minipage}
&
\begin{minipage}[t]{0.3\textwidth}
\centering
\fontsize{7pt}{9pt}\selectfont
\textbf{RGG(100, 0.3)} \\
\begin{tabular}{lccc}
\toprule
Method & $p$-value & $A$ & Winner \\
\midrule
GGC     & $\ll 0.01$ & $0.00$ & GGC \\
Genetic & $\ll 0.01$ & $1.00$ & GP-PMD \\
BQP     & $\ll 0.01$ & $1.00$ & GP-PMD \\
MA-NWS  & $\ll 0.01$ & $1.00$ & GP-PMD \\
GP-PGP  & $\ll 0.01$ & $1.00$ & GP-PMD \\
\bottomrule
\end{tabular}
\end{minipage}
\\[1em]

\begin{minipage}[t]{0.3\textwidth}
\centering
\fontsize{7pt}{9pt}\selectfont
\textbf{RGG(200, 0.1)} \\
\begin{tabular}{lccc}
\toprule
Method & $p$-value & $A$ & Winner \\
\midrule
GGC     & $\ll 0.01$ & $0.00$ & GGC \\
Genetic & $\ll 0.01$ & $1.00$ & GP-PMD \\
BQP     & $\ll 0.01$ & $1.00$ & GP-PMD \\
MA-NWS  & $\ll 0.01$ & $1.00$ & GP-PMD \\
GP-PGP  & $\ll 0.01$ & $1.00$ & GP-PMD \\
\bottomrule
\end{tabular}
\end{minipage}
&
\begin{minipage}[t]{0.3\textwidth}
\centering
\fontsize{7pt}{9pt}\selectfont
\textbf{RGG(200, 0.2)} \\
\begin{tabular}{lccc}
\toprule
Method & $p$-value & $A$ & Winner \\
\midrule
GGC     & $\ll 0.01$ & $0.00$ & GGC \\
Genetic & $\ll 0.01$ & $1.00$ & GP-PMD \\
BQP     & $\ll 0.01$ & $1.00$ & GP-PMD \\
MA-NWS  & $\ll 0.01$ & $1.00$ & GP-PMD \\
GP-PGP  & $\ll 0.01$ & $1.00$ & GP-PMD \\
\bottomrule
\end{tabular}
\end{minipage}
&
\begin{minipage}[t]{0.3\textwidth}
\centering
\fontsize{7pt}{9pt}\selectfont
\textbf{RGG(200, 0.3)} \\
\begin{tabular}{lccc}
\toprule
Method & $p$-value & $A$ & Winner \\
\midrule
GGC     & $\ll 0.01$ & $0.00$ & GGC \\
Genetic & $\ll 0.01$ & $1.00$ & GP-PMD \\
BQP     & $\ll 0.01$ & $1.00$ & GP-PMD \\
MA-NWS  & $\ll 0.01$ & $1.00$ & GP-PMD \\
GP-PGP  & $\ll 0.01$ & $0.99$ & GP-PMD \\
\bottomrule
\end{tabular}
\end{minipage}
\\[1em]

\begin{minipage}[t]{0.3\textwidth}
\centering
\fontsize{7pt}{9pt}\selectfont
\textbf{RGG(500, 0.1)} \\
\begin{tabular}{lccc}
\toprule
Method & $p$-value & $A$ & Winner \\
\midrule
GGC     & $\ll 0.01$ & $0.00$ & GGC \\
Genetic & $\ll 0.01$ & $1.00$ & GP-PMD \\
BQP     & $\ll 0.01$ & $1.00$ & GP-PMD \\
MA-NWS  & $\ll 0.01$ & $1.00$ & GP-PMD \\
GP-PGP  & $\ll 0.01$ & $1.00$ & GP-PMD \\
\bottomrule
\end{tabular}
\end{minipage}
&
\begin{minipage}[t]{0.3\textwidth}
\centering
\fontsize{7pt}{9pt}\selectfont
\textbf{RGG(500, 0.2)} \\
\begin{tabular}{lccc}
\toprule
Method & $p$-value & $A$ & Winner \\
\midrule
GGC     & $\ll 0.01$ & $0.00$ & GGC \\
Genetic & $\ll 0.01$ & $1.00$ & GP-PMD \\
BQP     & $\ll 0.01$ & $1.00$ & GP-PMD \\
MA-NWS  & $\ll 0.01$ & $1.00$ & GP-PMD \\
GP-PGP  & $\ll 0.01$ & $0.94$ & GP-PMD \\
\bottomrule
\end{tabular}
\end{minipage}
&
\begin{minipage}[t]{0.3\textwidth}
\centering
\fontsize{7pt}{9pt}\selectfont
\textbf{RGG(500, 0.3)} \\
\begin{tabular}{lccc}
\toprule
Method & $p$-value & $A$ & Winner \\
\midrule
GGC     & $\ll 0.01$ & $0.00$ & GGC \\
Genetic & $\ll 0.01$ & $0.90$ & GP-PMD \\
BQP     & $\ll 0.01$ & $1.00$ & GP-PMD \\
MA-NWS  & $\ll 0.01$ & $1.00$ & GP-PMD \\
GP-PGP  & $\ll 0.01$ & $0.83$ & GP-PMD \\
\bottomrule
\end{tabular}
\end{minipage}
\end{tabular}
\caption{GP-PMD versus other algorithms on computational time}
\label{tab:testing_time}
\end{figure}

\end{document}